\documentclass[a4paper]{iopart}        

\usepackage{amssymb}

\expandafter\let\csname equation*\endcsname\relax
\expandafter\let\csname endequation*\endcsname\relax 

\usepackage{amsmath}
\usepackage{amsthm} 
\usepackage{array} 
\usepackage{graphicx}
\graphicspath{{./pdf_figures/}}

\usepackage{multirow}  
\usepackage{float}  
\usepackage{longtable}

\usepackage[colorlinks,citecolor=blue,urlcolor=blue]{hyperref}
\setlength{\textwidth}{14.5cm}
\setlength{\textheight}{21.5cm}
\oddsidemargin=1cm
\topmargin=1cm

\def\bc{{\mathbb{C}}}

\def\bn{{\mathbb{N}}}
\def\br{{\mathbb{R}}} 

\def\bz{{\mathbb{Z}}}

\def\vs{\vskip.3cm}
\def\noi{\noindent}
\def\q{\quad}

 

\newcommand{\END}{\hfill\mbox{\raggedright$\Diamond$}}

 \newtheorem{thm}{Theorem}[section]
\newtheorem{lem}[thm]{Lemma}
\newtheorem{cor}[thm]{Corollary}
\newtheorem{Def}[thm]{Definition} 
\newtheorem{rmk}[thm]{Remark}

\newtheorem{ex}[thm]{Example}

\newtheorem{proposition}[thm]{Proposition}

\def\sig{\sigma}
\def\ome{\omega}
\def\lam{\lambda}

\def\a{\alpha}

\def\ka{\kappa}

\newcommand{\UU}{\mbox{$\mathcal U$}}
\newcommand{\VV}{\mbox{$\mathcal V$}}
\newcommand{\LL}{\text{L1}}
\newcommand{\CC}{\text{C2}}
\newcommand{\MM}{\mbox{$\mathcal M$}}
\newcommand{\NN}{\mbox{$\mathcal N$}}
\newcommand{\FF}{\mbox{$\mathcal F$}}

\def\iso{\text{\,Iso\,}}
\def\orb{\text{\,Orb\,}} 
\def\Ome{\Omega}
\def\gmdeg{\Gamma\text{\rm -Deg\,}}
\def\g1deg{\Gamma\times S^1\text{\rm -Deg\,}}
\def\deg{\text{\,deg\,}}
\def\fix{\text{\,Fix\,}}
\def\si{\text{\tiny $\Sigma$}}
\def\cir{\text{\,circ\,}}

\newcommand{\id}{\text{\rm Id}}
\def\ome{\omega}



\begin{document}
\vspace*{-1.5cm}
\title[Symmetry Analysis of 
Coupled Systems under Time Delay]{\centering Symmetry Analysis of 
Coupled Scalar Systems under Time Delay}

\bigskip\bigskip
\noindent\emph{Preprint.} Final version in 
\href{http://dx.doi.org/10.1088/0951-7715/28/3/795}{\emph{Nonlinearity} 28(3):795--824, 2015}, 
copyright 2015 IOP Publishing Ltd \& London Mathematical Society.
\bigskip

\hspace*{-2.5cm}
\begin{minipage}{15cm}

\author{Fatihcan M. Atay} 
\address{Max Planck Institute for Mathematics in the Sciences,
 D-04103 Leipzig, Germany\\
 \emph{Current address:} Department of Mathematics, Bilkent University, 06800 Ankara, Turkey.}
 \ead{f.atay@bilkent.edu.tr}
 \author{Haibo Ruan}
 \address{
 Universit\"at Hamburg,
 Fachbereich Mathematik,
 D-20146 Hamburg, Germany\\
 \emph{Current address:} Institute of Mathematics, Technical University of Hamburg, D-21073 Hamburg, Germany}
\ead{haibo.ruan@tuhh.de}

\begin{abstract}
We study systems of coupled units in a general network configuration with a coupling delay. We show that the destabilizing bifurcations from an equilibrium are governed by the extreme eigenvalues of the coupling matrix of the network. Based on the equivariant degree method and its computational packages, we perform a symmetry classification of destabilizing bifurcations in bidirectional rings of coupled units. Both stationary and oscillatory bifurcations are discussed. We also introduce the concept of secondary dominating orbit types to capture bifurcating solutions of submaximal nature. 
\vskip.5cm
\noi {\bf Keywords:} Symmetries, bifurcation,  coupled network, coupling delay, equivariant degree,  spatiotemporal pattern.
\end{abstract} 

\vspace{-2em}\ams{34C14, 34C23, 34C15, 47H11, 34C25}

\end{minipage}



\section{Introduction}

Networks of coupled systems are known to be capable of a wide range of interesting dynamics, especially in the presence of time delays \cite{ctds-book10}. 
One of the most well-studied types of behavior involves synchronization of oscillations in various forms \cite{pikovsky-book01},
and recent work has revealed more complicated activity patterns related to the synchronization. 
As an example, which has also been a motivation for the present paper, we mention the so-called {chimera states}: in a network of identical phase oscillators arranged on a ring, with each oscillator coupled to a fixed number of its spatial neighbors,
appropriate conditions can lead to oscillators splitting into two contiguous groups, 
one group oscillating synchronously while the other one incoherently \cite{Kuramoto02}, a behavior which has also been reported in the presence of time delays \cite{PRL08}. 
An important feature is that such states are observed for identical oscillators and under homogeneous coupling conditions, i.e., in highly symmetric situations.
Thus, a natural question arises as to 
{the} relation of system's symmetries to its possible dynamical states.
The aim of this paper is to present a systematic 
investigation
of the types of dynamics that can be deduced from the symmetries and bifurcations of coupled scalar systems under a time delay.  
 

We consider $n$ identical dynamical systems governed by the equation $\dot{x_i}=f(x_i)$ for $i=1,\dots,n$, coupled in a general network configuration: 
\begin{equation}
	\dot{x_i}(t) = f(x_i(t)) + \kappa g_i(x_1(t-\tau),x_2(t-\tau),\dots,x_n(t-\tau)), \quad i=1,2,\dots,n.
\label{eq:1}  
\end{equation}
Here $x_i \in \mathbb{R}$, the function $g_i$ describes the interaction among the coupled units, and $\tau\ge0$ is the time delay. 
The scalar $\kappa>0$ denotes the coupling strength; it can of course be subsumed into the definition of $g_i$, but it is sometimes used as a  bifurcation parameter when one studies the effects of coupling in comparison to the intrinsic dynamics $f$, or for distinguishing excitatory from inhibitory coupling by simply changing its sign. 
The functions $f:\mathbb{R}\to\mathbb{R}$ and $g_i:\mathbb{R}^n\to\mathbb{R}$ are assumed to be continuously differentiable; 
in addition, $g_i$ will be assumed to be  {equivariant} when we consider symmetry. We also assume that $f$ and the $g_i$ vanish at the origin; hence the zero solution is an equilibrium solution of \eqref{eq:1}. The local stability of the zero solution is given by the linear variational equation 
\begin{equation}
	\dot{y}(t) = f'(0) y(t)+ \kappa C y(t-\tau),\q y\in\br^n,
\label{lin-eq}
\end{equation}
{where the {\it coupling matrix} $C=[c_{ij}]:=[{\partial g_i(0)}/{\partial x_j}]$  is assumed to be a symmetric matrix}. 

Systems of the form \eqref{eq:1} include many well-known examples as special cases. For instance, the \emph{neural network} model
\begin{equation} 
	\dot{x_i}(t) = -x_i(t) + g \left( \sum_{j=1}^{n} a_{ij} x_j(t-\tau) \right),
\label{neural}
\end{equation}
where $g$ is a sigmoidal function and $a_{ij}\in\mathbb{R}$ are entries of the (weighted and directed) adjacency matrix $A$ that describes the coupling among the neurons, has the form of \eqref{eq:1}. 
The component $a_{ij}$ describes how strongly the $j$th neuron influences the $i$th one; the influence being excitatory if $a_{ij}>0$ and inhibitory if $a_{ij}<0$. Often one excludes self-coupling, taking $a_{ii}=0$ $\forall i$. 
Linearization of \eqref{neural} about the zero solution yields the form \eqref{lin-eq} with $\kappa=g'(0)$ and  $C=A$.
There are also some variant models which are not strictly in the form \eqref{eq:1}, for instance \emph{pulse-coupled} systems,
\begin{equation}
	\dot{x_i}(t) = f(x_i(t)) + h(x_i(t)) \cdot g \left( \sum_{j=1}^{n} a_{ij} x_j(t-\tau) \right),
\label{puls-coup}
\end{equation}
where the influence of the network on the $i$th unit may be different depending on the state of the $i$th unit at that particular time instant. 
Although \eqref{puls-coup} is not of form \eqref{eq:1} for nonzero $\tau$,
 its linearization is still given by \eqref{lin-eq} with $\kappa=h(0)g'(0)$ and $C=A$.
This is a crucial observation since many of our bifurcation results will depend only on the linear part \eqref{lin-eq} of the model at hand, and thus will also apply to \eqref{puls-coup} in particular.
Also, the term ``coupling matrix" is used in a general way that can take other familiar forms in applications. For example,
models of synchronization typically involve \emph{diffusive-type interactions}, e.g.,
\begin{equation}
	\dot{x_i}(t) = f(x_i(t)) +  \sum_{j=1}^{n} a_{ij} g (x_j(t-\tau)-x_i(t-\tau)).
\label{eq-b}
\end{equation}
Sometimes the order of summation and the function $g$ are interchanged; in this case, \eqref{eq-b} becomes a special case of \eqref{neural} obtained by setting $a_{ii}=-\sum_{j\neq i}a_{ij}$ $\forall i$ in \eqref{neural}. The linear variational equation corresponding to \eqref{eq-b} has the form \eqref{lin-eq} with $C$ given by the negative of the Laplacian matrix:  $C=-L=A-D$, where $D=\mathrm{diag}\{k_1,\dots,k_n\}$ is the diagonal matrix of vertex in-degrees $k_i=\sum_{j\neq i}a_{ij}$.
If the delay pertains only to the interaction between different units (so that there are no self-delays), then one obtains a slightly variant system 
\begin{equation}
	\dot{x_i}(t) = f(x_i(t)) + \sum_{j=1}^{n} a_{ij} g (x_j(t-\tau)-x_i(t)),
\label{eq-c}
\end{equation}
whose linearization can be put into the form \eqref{lin-eq} (with the identification $f(x_i) \to f(x_i)+g'(0)k_i x_i$) provided that all vertices have the same degree $k_i=k$.
{In all cases, the matrix $C$ being symmetric reflects the assumption of bi-directional interactions in the network.}

{We are interested in the effects of the time delay and the spectrum of $C$ in causing the zero equilibrium to lose its stability 
as the system bifurcates into other dynamical states.  We show that, among all the eigenvalues of $C$, only the extreme eigenvalues (i.e., the smallest and largest ones) play a role in destabilizing the zero equilibrium. 
Networks having the same extreme eigenvalues 
will exhibit the same destabilizing behavior, independently of the precise network configuration. 
In the presence of ring (dihedral) symmetry, we give a complete classification of bifurcating states using equivariant degree methods.} 
The ring configuration is motivated by the setting of chimera states mentioned in the first paragraph, although  we do not focus on chimeras in this paper but aim to capture all emergent dynamics that can bifurcate from the equilibrium.
To illustrate the theoretical results, we use throughout a ring of size $12$ with various coupling configurations, since on the one hand it gives rise to a large variety of dynamical patterns, and on the other hand, it can be presented in a manageable size.
    
{Equivariant bifurcations have been extensively studied using various techniques such as those based on singularity theory, as developed by Golubitsky, Stewart \etal (cf. \cite{GSS,GS_symm}), geometric techniques developed by Field, Richardson \etal (cf. \cite{FR1,FR2,F}), constructive methods using algebraic geometry developed by Bierstone, Milman \etal (cf. \cite{BM}) and, last but not least, equivariant degree methods developed by Ize, Krawcewicz \etal (cf. \cite{IV-B,AED}). While geometric methods are based on generic approximations, topological methods rely on deformations  subject to homotopy invariance. Since homotopy generally allows ``larger'' changes of  maps compared to approximations, topological methods tend to overlook finer properties of solutions  such as stability, but rather they catch the existence. This is also why topological methods are commonly believed to produce ``weaker'' results. On the other hand,  
they can deal with non-generic situations just as well as generic cases.  Results hold without generic assumptions since homotopy makes essentially no distinction between generic and non-generic maps. An example of using topological indices to predict non-generic global  equivariant bifurcations with least symmetry can be found in \cite{F1}. To compare, the index in \cite{F1} is defined for a subgroup pair $(K,H)$ of the symmetry group $\Gamma$, where $K$ is normal in $H$ and $H/K$ is finite cyclic, while the invariant we use later is defined for $\Gamma$ and captures every subgroup; the index in [F1] is used to predict global bifurcation with at least symmetry $K$, while the invariant we use predicts local bifurcation with precise symmetry for all adequate subgroups in $\Gamma$ (cf. Corollary \ref{cor:steady} and Proposition \ref{cor:Hopf}); last but not least, while computations of the index in \cite{F1} can be technically involved, the invariant we use can be computed instantly.

In this paper we use equivariant degree methods to give a complete classification of bifurcating branches of solutions according to their symmetry properties. This includes dealing with non-simple critical eigenvalues with non-simple representations in the kernel of linearization. An additional advantage of using equivariant degree is that it can be  effectively calculated. } 
  
The computational tool we use for exact calculations of equivariant degrees is the ``Equivariant Degree Maple$^\text{\copyright}$ Library Package''\footnote[1]{The Equivariant Degree Maple$^\text{\copyright}$ Library Package was created by Biglands and Krawcewicz at the University of Alberta in 2006 supported by NSERC research grant.
It is open source and can be freely downloaded, for example, from {\tt http://www.math.uni-hamburg.de/home/ruan/download}.} (EDML). More precisely, to a given bifurcating equilibrium, one associates a {\it bifurcation invariant} in form of an equivariant degree. {Here, the term ``invariant'' refers to the fact that the bifurcation invariant remains constant against all adequate homotopies. The precise value of the  bifurcation invariant, once calculated, carries the full topological information about the bifurcating solutions and} gives rise to a complete classification of bifurcating branches by their symmetry properties. The calculation task of bifurcation invariants for a given symmetry group $\Gamma$ is taken over by 
the EDML package.
Examples of this {computational approach using EDML} can be found in \cite{BFKR06,BKR07,BKR08,BKLN13}. 
In addition to the symmetry group $\Gamma$, the software package
takes several parameters as input, which are {solely} determined by the critical spectrum of the linearized operator. In other words, the exact value of the bifurcation invariant associated to the zero solution of (\ref{eq:1}) depends only on the characteristic operator of (\ref{lin-eq}). 
In fact, all results that follow from the bifurcation invariant of \eqref{eq:1} remain valid for any $\Gamma$-symmetric system whose linearization has the form (\ref{lin-eq}), in particular for systems of form \eqref{puls-coup}--\eqref{eq-c}.                 
           
{Another advantage of using equivariant degree is that its basic degrees can be easily programmed and calculated in other computer programming languages such as GAP\footnote[2]{{\it GAP} (``Groups, Algorithms, Programming'')  is a non-commercial system for computational discrete algebra. It provides a programming language and large data libraries of algebraic objects. The system is distributed freely at {\tt http://www.gap-system.org}}, MATLAB, C++,
Java, and so on. An existing extension of the EDML package is the ``Dihedral Calculator'', which is programmed using the non-commercial language GAP. It is currently available for dihedral symmetry $D_n$ for $n\le 200$\footnote[7]{See {\it Dihedral Calculator} from MuchLearning {\tt http://dihedral.muchlearning.org}}. Other symmetry groups that are supported by EDML are the quaternion group $Q_8$, the alternating groups $A_4$, $A_5$, and the symmetric group $S_4$.}

{Our main results are Theorem \ref{thm:steady}, Corollary \ref{cor:steady}, Theorem \ref{thm:Hopf} and Proposition \ref{cor:Hopf}. The classification results for dihedral symmetry $D_{12}$ are summarized in Tables \ref{t:d12_sb}--\ref{t:d12_hb_3}. Theorem \ref{thm:steady} gives an existence result of steady-state bifurcations with their least symmetry. Corollary \ref{cor:steady} using the implicit function theorem sharpens this to exact symmetry. For Hopf bifurcations, existence result is stated in Theorem \ref{thm:Hopf} with the least symmetry. To obtain the precise symmetry as well as for submaximal isotropies, we introduce the concept of {\it secondary dominating orbit types} (cf. Definition \ref{def:dom_sec_dom}) to complement {dominating orbit types}. In Proposition \ref{cor:Hopf} bifurcating branches of maximal or submaximal isotropies are predicted with their precise symmetry.}

{The paper is organized as follows.} In Section \ref{sec:prelim}, we provide the basic { definition of equivariant degrees} and introduce the {necessary} notation and preliminary calculations {for $D_{12}$}. In Section \ref{sec:stability} we give a brief account of the stability analysis and the derivation of the basic bifurcation diagram for the linear system \eqref{lin-eq}. 
We take two quantities $\a:=\tau f'(0)$ and $\beta:=\tau\kappa\xi$ as bifurcation parameters, where $\xi$ is an eigenvalue of $C$. 
As we shall see, bifurcations, either of stationary or oscillatory nature, that destabilize the zero equilibrium, are related only to the extreme eigenvalues of the coupling matrix. {The main equivariant bifurcation results are}  given in Section~\ref{sec:symm_bif}. In Section \ref{sec:ring} we {apply} our results {to bidirectional rings of $12$ and obtain classification results listed in Tables \ref{t:d12_sb}--\ref{t:d12_hb_3}.}  
For rings of larger size, {we refer to the ``Dihedral Calculator'' mentioned earlier for calculations of degrees and} the method can be applied {systematically}.
In Section \ref{sec:simul} we connect the extreme eigenvalues to coupling strengths by enumerating all possible first-and second-nearest-neighbor coupling configurations of the 12-cell ring. We conclude by giving simulation examples in a concrete nonlinear system, namely the neural network model \eqref{neural}.

In closing this Introduction, we note that since the bifurcation invariant remains invariant against all (admissible,  equivariant) continuous deformations on the system, the classification results we obtain using the bifurcation invariant remain valid under modeling variations {within} the framework of symmetry. 
They may also be useful for systems encountered in real-world applications that are only ``approximately symmetric". For modeling issues on systems with imperfect symmetry, we refer to \cite{GS_symm}.

\section{Preliminaries} \label{sec:prelim}

\subsection{Groups and Group Representations}
Throughout we consider groups that are either finite or of form $\Gamma\times S^1$, where $\Gamma$ is a finite group and $S^1$ is the group of complex numbers of unit length.
 
Let $G$ be a group and $H$ be a closed subgroup of $G$, written as $H\subset G$. Let $N(H)=\{g\in G: gHg^{-1}=H\}$ be the {\it normalizer} of $H$ and $W(H)=N(H)/H$ the {\it Weyl group} of $H$.
The set of all closed subgroups of $G$ can be partially ordered by set inclusion. 
For subgroups $H,K\subset G$, we write  $H\le K$  if $H \subseteq K$; $H<K$ if $H\subsetneq K$. The symbol $(H)$ stands for the conjugacy class of the subgroup $H$ in $G$; that is $(H)=\{gHg^{-1}\,:\, g\in G\}$. The set of all conjugacy classes of closed subgroups of $G$ affords a partial order given by: $(H)\le (K)$ if $H\subseteq gKg^{-1}$ for some $g\in G$; similarly, $(H)<(K)$ if $H\subsetneq gKg^{-1}$ for some $g\in G$.

 \begin{ex}\rm\label{ex:d12_subgrps} (cf. \cite{AED}) Let $\Gamma=D_{12}$ be the dihedral group of order $24$, which is represented as the group of $12$ rotations: $1$, $\eta$, $\eta^2$, $\dots$, $\eta^{11}$ and $12$ reflections: $\varsigma$, $\varsigma\eta$, $\varsigma\eta^2$, $\dots$, $\varsigma\eta^{11}$ of the complex plane $\bc$, where $\eta$ stands for the complex multiplication by $e^{\frac{i\pi}{6}}$ and $\varsigma$ denotes the complex conjugation. There are two kinds of subgroups in $D_{12}$: cyclic and dihedral. The cyclic subgroups are $\bz_1,\bz_2,\bz_3,\bz_4,\bz_6,\bz_{12}$, where $\bz_k$ denotes the cyclic subgroup generated by $\eta^l$ with $l=\frac{12}{k}$. The dihedral subgroups are 
$$D_{k,j}=\{1,\eta^l,\eta^{2l},\dots, \eta^{(k-1)l}, \varsigma\eta^j,\varsigma\eta^{j+l},\varsigma\eta^{j+2l},\dots, \varsigma\eta^{j+(k-1)l}\},\q \text{for}\, 0\le j <l=\frac {12}{k},$$
where $k\in\{1,2,3,4,6,12\}$. If $l$ is odd, then all subgroups $D_{k,j}$ for $0\le j<l$ are conjugate to $D_{k,0}:=D_k$. If $l$ is even, then all subgroups $D_{k,j}$ with $j$ being even are conjugate to $D_{k,0}=D_k$; all subgroups $D_{k,j}$ with $j$ being odd are conjugate to $D_{k,1}:=\tilde D_k$. Thus, up to conjugacy relation, we have the dihedral subgroups: $D_1$, $\tilde D_1$, $D_2$, $\tilde D_2$, $D_3$, $\tilde D_3$, $D_4$, $D_6$ $\tilde D_6$, $D_{12}$.
 \END
 \end{ex}

A  {\it real} (resp. {\it complex}) {\it representation} of $G$ is a finite-dimensional real (resp. complex) vector space $X$ with a continuous map, or {\it action}, $\psi:G\times X\to X$ such that the map $\psi(g,\cdot):X\to X$ is linear, for every $g\in G$. Banach representations are similarly defined for Banach spaces with an action for which  $\psi(g,\cdot)$ is linear and bounded. We abbreviate $\psi(g,x)$ with $gx$. 

A subset $\Omega\subset X$ is called {\it invariant} if $gx\in \Omega$ whenever $x\in \Omega$ for all $g\in G$. An action on an invariant subset $\Omega\subset X$ is called {\it free} if the existence of an $x\in \Omega$ with $gx=x$ implies $g=e$ is the neutral element.  
A representation $X$ of $G$ is called {\it irreducible} if $\{0\}$ and $X$ are the only invariant subspaces in $X$.

 \begin{ex}\rm\label{ex:d12_rep} (cf.~\cite{AED}) The dihedral group  $D_{n}$, for $n\in \bn$ even,  has the following real irreducible representations:
\begin{itemize} 
\item[(i)] The trivial representation $\VV_0\simeq \br$, where every element acts as the identity map.
\item[(ii)] For $1\le i\le \frac n2-1$, there is the representation $\VV_i\simeq \br^2\simeq \bc$ given by the following actions:
\begin{align*}
\eta z&=\eta^i\cdot z,\q \varsigma z=\bar z,
\end{align*}
where $``\cdot"$ is the complex multiplication and $``\bar{\q}"$ is the complex conjugation.
\item[(iii)] The representation $\VV_{\frac n2}\simeq \br$ given by: $\eta x=x$ and $\varsigma x=-x$.
\item[(iv)] The representation $\VV_{\frac n2+1}\simeq \br$ given by: $\eta x=-x$ and $\varsigma x=x$.
\item[(v)] The representation $\VV_{\frac n2+2}\simeq \br$ given by: $\eta x=-x$ and $\varsigma x=-x$.
\end{itemize} 
It has the following complex irreducible representations:
\begin{itemize}
\item[(i)] The trivial representation $\UU_0\simeq \bc$, where every element acts as the identity map.
\item[(ii)] For $1\le j\le \frac n2-1$, there is the representation $\UU_j\simeq \bc\times \bc$ given by the following actions:
\begin{align*}
\eta (z_1,z_2)&=(\eta^j\cdot z_1, \eta^{-j}\cdot z_2),\q \varsigma (z_1,z_2)=(z_2,z_1),
\end{align*}
where $``\cdot"$ is the complex multiplication.
\item[(iii)] The representation $\UU_{\frac n2}\simeq \bc$ given by: $\eta z=z$ and $\varsigma z=-z$.
\item[(iv)] The representation $\UU_{\frac n2+1}\simeq \bc$ given by: $\eta z=-z$ and $\varsigma z=z$.
\item[(v)] The representation $\UU_{\frac n2+2}\simeq \bc$ given by: $\eta z=-z$ and $\varsigma z=-z$.
\end{itemize} 
For $n\in \bn$ odd, the dihedral group  $D_{n}$  has the above listed irreducible representations  (i)--(iii), where $n$ is replaced with ($n+1$).
 \END
 \end{ex} 
 
Let $x\in X$. By the {\it symmetry} of $x$, we mean the {\it isotropy} subgroup of $x$ given by $\iso(x):=\{g\in G\,:\, g x=x\}$ with respect to the group action on $X$. 
The set $\orb(x):=\{g x\,:\, g\in G\}$ is called the {\it orbit} of $x$ and the {\it symmetry} of the orbit is defined by the {\it orbit type} of $x$, which is the conjugacy class $(\iso(x))$ of $\iso(x)$. 
Note that $\iso(gx)=g\iso(x)g^{-1}$ for $g\in G$; 
thus, the symmetry of the orbit is independent of the choice of $x$ from the orbit.
 
Let $\Omega\subset X$ be a subset and $H\subset G$ be a closed subgroup. Define $\Omega_H=\{x\in X:\iso(x)=H\}$. It can be verified that the Weyl group $W(H)$ acts freely on $\Omega_H$. Denote the {\it $H$-fixed point subspace} in $\Omega$  by $\Omega^H=\{x\in X: gx=x,\,\forall\, g\in H\}$. {In Sections \ref{sec:stability}-\ref{sec:ring} we use $\fix(H)$ to denote the $H$-fixed point subspace, since  the space on which $G$ acts will be  clear from context}. Note that $\Omega_H\subset  \Omega^H$. Moreover, $\Omega^H$ is the  disjoint union of $\Omega_{\tilde H}$ for all $\tilde H\supseteq H$.

\begin{ex}\rm\label{ex:d12_v1_fixed} Let $\Gamma=D_{12}$ and $X=\VV_1$ be the real irreducible representation of $D_{12}$ given in Example \ref{ex:d12_rep}. Then, orbit types that occur in $X$ are $(D_{12})$, $(D_1)$, $(\tilde D_1)$ and $(\bz_1)$ (refer to Example \ref{ex:d12_subgrps} for notations), with the corresponding fixed point subspaces:
\begin{align*}
X^{D_{12}}=\{(0,0)\},\q X^{D_1}=\{(x,0):x\in \br\},\q X^{\tilde D_1}=\{r e^{-\frac{i\pi}{12}}:r\in \br\}, \q X^{\bz_1}=X.
\end{align*}
Note that $X^{D_1}$ is the disjoint union of subsets $X_{D_1}=\{(x,0):x\in \br,\, x\ne 0\}$ and $X_{D_{12}}=\{(0,0)\}$. On the subset $X_{D_1}$, the Weyl group $W(D_1)=D_2/D_1\simeq \bz_2$ acts freely by the reflection. On the subset $X_{D_12}$, the Weyl group $W(D_{12})=D_{12}/D_{12}\simeq \bz_1$ acts freely by the neutral element.
\END
\end{ex}

Finally, we remark that there is a natural way of ``converting'' a complex $\Gamma$-representation into a real $\Gamma\times S^1$-representation. Let $U$ be a complex $\Gamma$-representation. Define a $\Gamma\times S^1$-action on $U$ by
 \begin{equation}\label{eq:gamma_ext_s1}
(\gamma,z) u=z\cdot (\gamma u),\q \text{for $(\gamma, z)\in \Gamma\times S^1$, $u\in U$,}
\end{equation}
 where $\cdot$ stands for the complex multiplication. The obtained representation is denoted by $\bar U$ and called the {\it $\Gamma\times S^1$-representation induced} from $U$. Note that $\bar U$ is irreducible as a real $\Gamma\times S^1$-representation if $U$ is irreducible as a complex $\Gamma$-representation.

\subsection{Equivariant Maps and Equivariant Degree}

Let $X,Y$ be two Banach representations of $G$. A continuous map $f:X \to Y$ is called {\it equivariant} if $f(g_\circ x)=g_* f(x)$, for all $x\in X$ and $g\in G$, where $_\circ$ and $_*$ stand for the $G$-actions on $X$ and $Y$, respectively. 
 In equivariant nonlinear analysis, one is interested in finding zeros of an equivariant map $f$ in an invariant domain $\Omega\subset X$. 
Note that by equivariance, the set of all zeros of $f$ in $\Omega$ is composed of disjoint group orbits; thus one speaks of {\it zero orbits}, instead of zeros, of $f$. 

A map $f$ is called {\it admissible} on $\Omega$ if $f(x)\ne 0$  for all $x\in \partial\Omega$. A homotopy $h:[0,1]\times X \to Y$ is called {\it admissible} if $h(t,\cdot)$ is admissible for all $t\in [0,1]$. An equivariant degree, intuitively speaking, is an algebraic count of zero orbits of an admissible $f$ in $\Omega$ with respect to orbit types, which remains unchanged against all admissible (equivariant) homotopies from $f$.

In the next two subsections we review from \cite{AED} two types of equivariant degrees that will be used in Section \ref{sec:symm_bif} for bifurcation analysis. In both cases, the equivariant degree is first defined in finite-dimensional representations for continuous maps and then extended to infinite-dimensional Banach representations for compact vector fields.

\subsubsection{Equivariant Degree without Parameters}
 
Let $G=\Gamma$ be a finite group acting on a finite-dimensional $\Gamma$-representation $X$. Let $\Phi$ be the set of all orbit types that appear in $X$. That is, every element of $\Phi$ is a conjugacy class of a finite subgroup of $\Gamma$. Consider a continuous equivariant map $f:X\to X$ on an open bounded invariant domain $\Ome\subset X$ such that $f$ is admissible on $\Omega$. Define an {\it equivariant degree (without parameter)} of $f$ in $\Omega$ by a finite sum of integer-indexed orbit types:
\begin{equation}\label{eq:deg_0}
\gmdeg(f,\Omega)=\underset{(K)\in \Phi}{\sum} n_K\cdot (K),
\end{equation}
where $n_K\in \bz$ is an integer counting zero orbits of orbit type $(K)$. {One can also think of $\gmdeg$ as associating to every such pair $(f,\Omega)$  an integer sequence indexed by the set $\Phi$ of conjugacy classes. Depending on the value of $f$ on $\overline \Omega$ (with $\overline \Omega=\Omega\cup \partial \Omega$), the degree associates different integer values to different conjugacy classes. }  The precise definition of $n_K$ can be given by the following {\it recurrence formula}:
\begin{equation}\label{eq:rec_0}
n_K=\frac{ \deg (f|_{\Omega^K},\Omega^K)-\underset{(\tilde K)>(K)}{\sum}n_{\tilde K}\cdot |W(\tilde K)|\cdot n(K,\tilde K)}{|W(K)|}.
\end{equation}
We explain the notations used in (\ref{eq:rec_0}) and their geometric meaning. Recall that $\Omega^K$ denotes the fixed point subspace of $K$ in $\Omega$. By restricting $f$ on $\Omega^K$, one obtains an (admissible) map $f|_{\Omega^K}:\Omega^K\to \Omega^K$. Using the classical Brouwer degree ``$\deg$'', the integer ``$\deg (f|_{\Omega^K},\Omega^K)$'' counts  the zeros of  $f$ in $\Omega^K$. Since not every element in $\Omega^K$ has the precise isotropy $K$, one needs to subtract those zeros of larger isotropies. This is done by subtracting the summands in (\ref{eq:rec_0}). Within each summand, $n_{\tilde K}$ is the integer counting zero orbits of orbit type $(\tilde K)$. Since the Weyl group $W(\tilde K)$ acts freely on $\Omega_{\tilde K}$, the integer  $n_{\tilde K}\cdot |W(\tilde K)|$ then counts the zeros of isotropy $\tilde K$. The number $n(K,\tilde K)$ is defined as the number of distinct conjugate copies of $\tilde K$ that contain $K$, formally by
\begin{equation}\label{eq:nLH}
n(K,\tilde K)=\big|\frac{\{g\in \Gamma: K\subset g\tilde K g^{-1}\}}{N(\tilde K)}\big|.
\end{equation}
Thus, the number  $n_{\tilde K}\cdot |W(\tilde K)|\cdot n(K,\tilde K)$ counts the zeros of isotropy $K'$ for all $K'$  with  $(K')=(\tilde K)$. It follows that the expression of the numerator in (\ref{eq:rec_0}) gives the count of zeros of $f$ having precise isotropy $K$. Again, since $W(K)$ acts freely on $\Omega_K$, we have then the total expression on the right hand side of (\ref{eq:rec_0}) giving the count of zero orbits of $f$ having orbit type $(K)$.

\begin{ex}\rm Let $\Gamma=D_{12}$ and $X=\VV_1$ be the real irreducible representation of $D_{12}$ given in Example \ref{ex:d12_rep}. Consider the antipodal map $f=-\id:X\to X$ on the unit disc $B\subset X$, which is $D_{12}$-equivariant and $B$-admissible. 
As mentioned in Example \ref{ex:d12_v1_fixed}, orbit types that occur in $\VV_1$ are $(D_{12})$, $(D_1)$, $(\tilde D_1)$, and $(\bz_1)$. Thus,
\[\gmdeg(-\id, B)=n_{D_{12}}\cdot (D_{12})+n_{D_{1}}\cdot (D_{1})+n_{\tilde D_{1}}\cdot (\tilde D_{1})+n_{\bz_1}\cdot (\bz_1).\]
We compute $n_{D_1}$ using (\ref{eq:rec_0}). To do so, we first need to compute $n_{D_{12}}$:
\[n_{D_{12}}=\frac{\deg(-\id,B^{D_{12}})}{|W(D_{12})|}=\frac{1}{1}=1,\]
where we used the fact $B^{D_{12}}=X^{D_{12}}\cap B=\{(0,0)\}$, $W(D_{12})=\bz_1$ from Example \ref{ex:d12_v1_fixed} and $\deg(-\id, \br^m)=(-1)^m$ for $m\in\{0\}\cup \bn$. Thus,  we have 
\[n_{D_1}=\frac{\deg(-\id, B^{D_1})-1\cdot 1\cdot 1}{|W(D_1)|}=\frac{-1-1}{2}=-1,\]
where we used the fact $n(D_1,D_{12})=\big|\frac{D_{12}}{D_{12}}\big|=1$ and $W(D_1)= \bz_2$. Following (\ref{eq:rec_0}) further, one shows that
\[\gmdeg(-\id, B)=(D_{12})- (D_{1})-(\tilde D_{1})+(\bz_1).\]
\END
\end{ex}

The definition of equivariant degree can be extended, in a standard way, to infinite-dimensional Banach representations for {\it compact equivariant fields}, namely, equivariant maps of the form $f=\id-F:D\subset X\to X$ that are admissible on a bounded domain $D$ such that $\overline {F(D)}$ is compact. It was shown in \cite{BKR06} that the equivariant degree defined by (\ref{eq:deg_0})--(\ref{eq:rec_0}), as well as its infinite-dimensional extension, satisfies the usual properties of degree theory, such as the {\it existence property} which states that
\[n_K\ne 0\,\,\text{in\,\, (\ref{eq:deg_0})}\q\Rightarrow \q f^{-1}(0)\cap \Omega^K\ne\emptyset,\]
which can be useful for predicting zero orbits of orbit type at least $(K)$.

\subsubsection{Equivariant Degree with One Parameter}

Let $G=\Gamma\times S^1$ be the product of a finite group $\Gamma$ and the circle group $S^1$. There are two types of closed subgroups in $G$: those subgroups that are of the form $K\times S^1$ for some subgroup $K\subset\Gamma$, or 
the {\it twisted subgroups} of $G$, defined as follows.
 
\begin{Def}\label{def:twisted_sub}\rm A subgroup $H\subset \Gamma\times S^1$ is called a {\it twisted $l$-folded subgroup}, if there exists a subgroup $K\subset \Gamma$, an integer $l\ge 0$ and a group homomorphism $\phi: K\to S^1$ such that
\begin{equation*}
H=K^{\phi,l}:=\{(\gamma,z)\,:\, \phi(\gamma)=z^l\}.
\end{equation*}
For $l=1$, $H$ is called a {\it twisted subgroup} for simplicity.   
Conjugacy classes of twisted subgroups are called {\it twisted orbit types}.
\END
\end{Def}   

\begin{rmk}\rm\label{rmk:Weyl}
{Note that the  subgroups of the form $K\times S^1$ (for some $K\subset\Gamma$) and the twisted ($l$-folded) subgroups can also be distinguished using the dimension of their Weyl groups. While the former have $0$-dimensional Weyl groups, the latter have $1$-dimensional Weyl groups in $\Gamma\times S^1$. Thus, the Weyl group of a twisted  ($l$-folded) subgroup is homeomorphic to a number of finitely many disjoint circles.
} 
\end{rmk}

 \begin{ex}\rm\label{ex:d12s1_subgrps} Let $G=D_{12}\times S^1$ be the product group of the dihedral $D_{12}$ and the unit circle $S^1\subset \bc$. We describe its twisted subgroups $H=K^{\phi}$. Clearly, all subgroups of $D_{12}$ are twisted subgroups with $\phi\equiv 1\in S^1$. 
 Besides that, there are twisted subgroups that are not contained in $D_{12}$. These can be classified into two categories: those for which $K=\bz_k$ and those for which $K=D_{k,j}$ (refer to Example \ref{ex:d12_subgrps} for notation).
 
Let $K=\bz_k$ for some $k\in\{1,2,3,4,6,12\}$ and $\phi:K\to S^1$ be given by $\phi(\eta^l)=\eta^{jl}$ for some $j$ with $1\le j<k$. Then,
\[K^\phi=\{(1,1),(\eta^l,\eta^{jl}),(\eta^{2l},\eta^{2jl}),\dots, (\eta^{(k-1)l},\eta^{j(k-1)l})\}:=\bz_k^{t_j},\q\text{for $1\le j<k$.}\]
Among these subgroups, $\bz_k^{t_j}$ and $\bz_k^{t_{k-j}}$ are conjugate to each other, for  $1\le j<k$. Thus, for $k$ even, up to conjugacy relation, we have the twisted subgroups $\bz_k^{t_1}$, $\bz_k^{t_2},\dots,$  $\bz_k^{t_{\frac k2}}:=\bz_k^d$, and for $k$ odd, $\bz_k^{t_1}$, $\bz_k^{t_2},\dots,$  $\bz_k^{t_{\frac {k-1}{2}}}$. That is, we have $\bz_2^d$, $\bz_3^{t_1}$, $\bz_4^{t_1}$, $\bz_4^d$, $\bz_6^{t_1}$, $\bz_6^{t_2}$, $\bz_6^{d}$, $\bz_{12}^{t_1}$, $\bz_{12}^{t_2}$, $\bz_{12}^{t_3}$, $\bz_{12}^{t_4}$, $\bz_{12}^{t_5}$, $\bz_{12}^{d}$. 

Let $K=D_{k,j}$ for some $k\in\{1,2,3,4,6,12\}$ and $0\le j <l=\frac{12}{k}$. Up to conjugacy, it is sufficient to consider $K=D_k$ in case $l$ is odd, and $K=D_k$, $K=\tilde D_k$ in case $l$ is even (cf. Example \ref{ex:d12_subgrps}). Let  $\phi: K\to S^1$ be the group homomorphism such that $\ker \phi=\bz_k$. Then,
\[D_k^\phi=\{(1,1),(\eta^l,1),\dots, (\eta^{(k-1)l},1),(\varsigma, -1),(\varsigma\eta^l, -1),\dots, (\varsigma\eta^{(k-1)l},-1)\}:=D_k^z,\]
and 
{ \small \[\tilde D_k^\phi=\{(1,1),(\eta^l,1),\dots, (\eta^{(k-1)l},1),(\varsigma\eta , -1),(\varsigma\eta^{1+l} , -1),\dots, (\varsigma\eta^{1+(k-1)l},-1)\}:=\tilde D_k^z,\q\text{if $l$ is even.}\]}
Thus, we have $D_1^z$, $\tilde D_1^z$, $D_2^z$, $\tilde D_2^z$, $D_3^z$, $\tilde D_3^z$, $D_4^z$, $D_6^z$, $\tilde D_6^z$, $D_{12}^z$.

In the case $k$ is even, there is a group homomorphism $\phi: K\to S^1$ for which $\ker \phi=D_{\frac k2}$. Then,
\[D_k^\phi=\{(1,1),(\eta^l,-1),(\eta^{2l},1),\dots, (\eta^{(k-1)l},-1),(\varsigma, 1),(\varsigma\eta^l, -1),\dots, (\varsigma\eta^{(k-1)l},-1)\}:=D_k^d,\]
and 
{\small\[\tilde D_k^\phi=\{(1,1),(\eta^l,-1),(\eta^{2l},1), \dots, (\eta^{(k-1)l},-1),(\varsigma\eta , 1),(\varsigma\eta^{1+l}, -1),\dots, (\varsigma\eta^{1+(k-1)l},-1)\}:=\tilde D_k^d,\q\text{if $l$ is even.}\]}
Also, there is a group homomorphism $\phi: K\to S^1$ for which $\ker \phi=\tilde D_{\frac k2}$. Then,
\[D_k^\phi=\{(1,1),(\eta^l,-1),(\eta^{2l},1),\dots, (\eta^{(k-1)l},-1),(\varsigma, -1),(\varsigma\eta^l, 1),\dots, (\varsigma\eta^{(k-1)l},1)\}:=D_k^{\hat d},\]
and 
{\small \[\tilde D_k^\phi=\{(1,1),(\eta^l,-1),(\eta^{2l},1), \dots, (\eta^{(k-1)l},-1),(\varsigma\eta, -1),(\varsigma\eta^{1+l}, 1),\dots, (\varsigma\eta^{1+(k-1)l},1)\}:=\tilde D_k^{\hat d},\q\text{if $l$ is even.}\]}

One shows that for $l$ even, $D_k^d$ and $D_k^{\hat d}$ are conjugate, and $\tilde D_k^{d}$ and $\tilde D_k^{\hat d}$ are conjugate. Thus, in the case $k$ is even, up to conjugacy relation,  we have the twisted subgroups $D_k^d$ and $D_k^{\hat d}$ if $l$ is odd, and $D_k^d$ and  $\tilde D_k^d$ if $l$ is even. That is, for $D_{12}$, we have $D_1^d$, $\tilde D_1^d$, $D_2^d$, $\tilde D_2^d$, $D_3^d$, $\tilde D_3^d$, $D_4^d$, $ D_4^{\hat d}$, $D_6^d$, $\tilde D_6^d$, $D_{12}^d$, $D_{12}^{\hat d}$. 
 \END
 \end{ex}

Let $X$  be a finite-dimensional representation of $G$ and $\br$ be the one-dimensional  {\it parameter space} on which $G$ acts trivially.
Let $\Phi_1$ be the set of all twisted orbit types that appear in $\br\times X$. Consider a continuous equivariant map $f:\br\times X\to X$ on an open bounded invariant domain $\Ome\subset \br \times X$ such that $f$ is admissible on $\Omega$. Define an {\it equivariant degree (with one parameter)} of $f$ in $\Omega$ by a finite sum of integer-indexed twisted orbit types:
\begin{equation}\label{eq:deg_1}
\g1deg(f,\Omega)=\underset{(H)\in \Phi_1}{\sum} n_H\cdot (H),
\end{equation}
where $n_H\in \bz$ is an integer counting zero orbits of the twisted orbit type $(H)$. {More precisely, $n_H$ can be computed by the following recurrence formula
\[n_H=\frac{\underset{k}{\sum}\deg_k(f|_{\Omega^H},\Omega^H)-\underset{(\tilde H)>(H)}{\sum} n_{\tilde H}\cdot n(H,\tilde H)\cdot|W(\tilde H)/S^1|}{|W(H)/S^1|}\]
in the same spirit of (\ref{eq:rec_0}). We explain the notation in detail. Again,  $n_H$ is supposed to count the zero orbits of orbit type $(H)$ in $\Omega$, or equivalently, the zero orbits of isotropy $H$ in $\Omega^H$. Restricting the map $f$ on $\Omega^H$, we  consider $f|_{\Omega^H}:\Omega^H\to \Omega^H$. Since $W(H)$ is homeomorphic to $|W(H)/S^1|$ copies  of finitely many  disjoint circles (cf. Remark \ref{rmk:Weyl}) and $|W(H)|$ acts freely on $\Omega_H$, the number $n_H  |W(H)/S^1|$ counts the copies of circles in the zeros having isotropy $H$ in $\Omega^H$. 
Using the classical $S^1$-degree (e.g.~see \cite{IV-B}), the integer $\deg_k(f|_{\Omega^H},\Omega^H)$ counts the number of circles in the zeros of $f$ having isotropy $\bz_k$ in $\Omega^H$. The first sum  then counts the total number of circles in the zeros of $f$ in $\Omega^H$. The summand in the second sum counts the copies of circles in the zeros of $f$ which have isotopy $\tilde H$, where the number $n(H,\tilde H)$ is given by (\ref{eq:nLH}). 
Thus, 
we obtain from the numerator the number of circles in the zeros of $f$ in $\Omega^H$ with precise isotropy $H$. Divided by the number $|W(H)/S^1|$ of copies in $W(H)$, this gives the count of zero orbits with precise isotropy $H$.   
} 
\vs
This degree can be extended to infinite-dimensional Banach representations for compact equivariant fields. The resulting degree satisfies all classical properties of an equivariant degree theory, among which the {\it existence property} plays an important role for our purpose:
\[n_H\ne 0\,\,\text{in\,\, (\ref{eq:deg_1})}\q\Rightarrow \q f^{-1}(0)\cap \Omega^H\ne\emptyset.\]

\section{Stability Analysis and the Bifurcation Diagram}
\label{sec:stability}
We now consider the coupled system \eqref{eq:1} and the corresponding linear variational equation \eqref{lin-eq} about the zero solution.
For $\tau>0$, it is convenient to rescale the time $t \to t/\tau$ so that the linearized equation takes the form  
\begin{equation}
	\dot{y}(t) = \tau f'(0)y(t) + \tau\kappa C y(t-1).	
\label{eq:2_lin} 
\end{equation}
The characteristic operator $\Delta(\lam):\bc^n\to \bc^n$ for (\ref{eq:2_lin}) is
\begin{equation} \label{eq:charop}
\Delta(\lam)=(\lam -\tau f'(0))I_n -\tau\kappa  e^{-\lam}C,
\end{equation}
and the corresponding characteristic equation is
\begin{equation}\label{eq:char}
\det\Delta(\lam)= \prod_{\xi\in \sig(C)} \big (\lam -\tau f'(0)-\tau\kappa  e^{-\lam}\xi\big)=0,
\end{equation}
where $\sig(C)$ denotes the spectrum of $C$.

%
%

Since $C$ is assumed to be a symmetric matrix, we have  $\sig(C)\subset \br$. 
In this case, each of the factors on the right side of (\ref{eq:char}) can be analyzed using well-known methods for scalar delay equations with real coefficients \cite{Hayes50,Diekmann95}. 
Thus, let $\xi \in \sigma(C) \subset \mathbb{R}$ and consider the corresponding factor in \eqref{eq:char}. 
If $\lam=u+iv$ is a characteristic root, then separating real and imaginary parts leads to
\begin{equation}\label{eq:uv_a_b}
\begin{cases}
u-\a -\beta e^{-u}\cos v=0\\
v+\beta e^{-u} \sin v=0,
\end{cases}
\end{equation}
where $\a=\tau f'(0)$ and $\beta=\tau\kappa \xi$.
For purely imaginary roots, we have $u=0$, giving
\begin{equation}\label{eq:v_a_b}
\begin{cases}
-\a -\beta \cos v=0\\
v+\beta \sin v=0.
\end{cases}
\end{equation}
For $v=0$ the solution is the line L1 defined by $\beta=-\alpha$, which corresponds to parameter values for which $\lambda=0$ is a characteristic root.
Over the intervals $v \in (k\pi,(k+1)\pi)$, $k\in \mathbb{Z}$, the solution can be expressed in the parametric form $(\alpha(v),\beta(v))=(v/ \tan(v), -v/ \sin(v))$, which gives parametric curves for which there exists a pair of purely imaginary characteristic roots of the form $\lambda=\pm i v$. 
These bifurcation curves are depicted in Figure~\ref{fig:bif}. 
Knowing that the zero solution is stable for $\beta=0$ and $\alpha<0$, 
and because characteristic roots can cross the imaginary axis only for parameter values belonging to the bifurcation curves,
one can then move vertically in the parameter plane, increasing the number of roots with positive real parts appropriately each time a bifurcation curve is crossed. 
Implicit differentiation on bifurcation curves shows that the characteristic roots on the imaginary axis move to the right as $|\beta|$ increases, yielding the picture shown in Figure~\ref{fig:bif}.

\begin{figure}[tb]
\begin{center}
\includegraphics[scale=0.8]{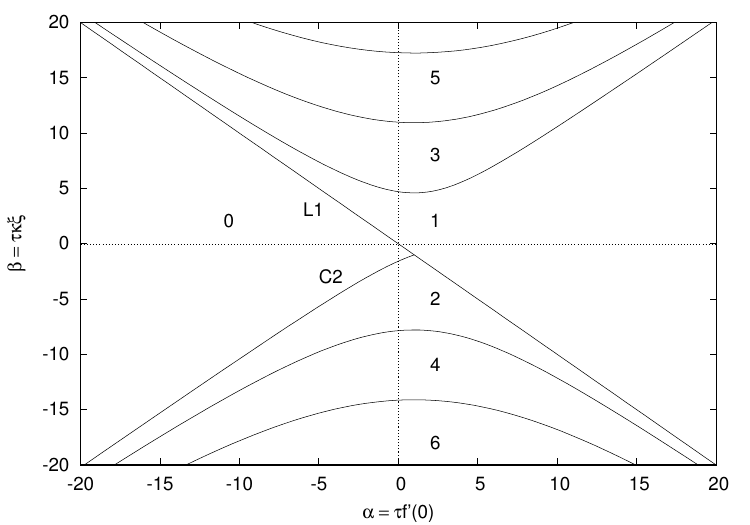} 
\caption{Bifurcation diagram of the characteristic equation {(\ref{eq:char})}. The curves indicate the parameter values for which the characteristic equation has a root on the imaginary axis. The curves separate the $\alpha$--$\beta$ parameter plane into regions in which the number of characteristic roots with positive real part is a constant, the value of which is indicated in the figure. Hence ``0" indicates the region where the origin is stable, which is bounded from above by the straight line L1 and from below by the curve C2.}
\label{fig:bif}
\end{center} 
\end{figure}

The region of stability is indicated in Figure~\ref{fig:bif} by the label ``0". It is bounded from above by the straight line L1 and from below by the curve C2. The latter is given by the parametric branch $(\alpha,\beta)=(v/ \tan(v), -v/ \sin(v))$, $v\in(0,\pi)$, and meets the line L1 at the point $(1,-1)$.
This is  for one particular spatial mode corresponding to the eigenvalue $\xi$.
One can then repeat the same argument for all eigenmodes $\xi\in\sigma(C)$.
If a parameter pair $(\a,\beta)$ is varied to leave the stable region by crossing the line L1, a bifurcation of steady states may occur, 
whereas crossing the curve C2 may lead to a bifurcation of periodic solutions. 
The codimension of these bifurcations is related to the multiplicity of the eigenvalue $\xi$ given by the critical value of $\beta=\tau\kappa \xi$.

\subsection{Effects of Network Structure} 

Suppose we start with stable systems $(f'(0)<0)$ without coupling, so we are initially on the negative $\alpha$-axis. 
As the coupling strength $\kappa$ is increased from zero, stability may be lost via a stationary or an oscillatory bifurcation through the first eigenmode $\xi$ to hit L1 or C2. The important observation is that this first bifurcation depends only on the extremal eigenvalues $\xi$ of the coupling matrix $C$.
Hence, the number of relevant parameters is greatly reduced and one needs to check only the two extremal eigenvalues of the coupling matrix regardless of the network size. In this way it is possible to classify networks by defining equivalence classes according to the extreme eigenvalues: networks having the same smallest and largest eigenvalues will have identical stability properties with regard to the class.

In special cases it is possible to give more precise statements. For diffusively coupled systems such as (\ref{eq-b}) or (\ref{eq-c}), the coupling matrix $C$ equals the negative of the Laplacian matrix. Therefore, in case the connection weights $a_{ij}$ are nonnegative, all the eigenvalues of $C$ are non-positive, the largest one always being zero. 
In fact, for connected networks, the eigenvalues are strictly negative, except for a single zero eigenvalue (see e.g.~\cite{Godsil-Royle}). 
In this case, it is the smallest eigenvalue of $C$ (i.e., the largest Laplacian eigenvalue) that determines the first bifurcation. As far as the network structure is concerned, this is the only relevant quantity. 

For systems of the form \eqref{neural} or \eqref{puls-coup}, $C$ is given by the adjacency matrix $A$, which can have both negative and positive eigenvalues even when all $a_{ij}$ have the same sign. Thus both $\xi_{\min}$ and $\xi_{\max}$ should be considered for the first bifurcation. For sufficiently small $\tau$, the bifurcation occurs in the vicinity of the origin of the $\alpha$-$\beta$ parameter plane of Figure \ref{fig:bif}. Since the line L1 intersects the origin where the curve C2 has a gap, the likely bifurcation is a stationary one and the  eigenvalue responsible for the bifurcation is the largest positive eigenvalue of $A$. This agrees with the observation of Section \ref{sec:delayeffect} below that oscillatory bifurcations arise from sufficiently large delays, for the class of scalar systems studied in this paper.

\subsection{Effects of Delay}
\label{sec:delayeffect} 

In the absence of delays ($\tau=0$), the characteristic equation for \eqref{lin-eq} is 
\begin{equation}
	\prod_{\xi\in \sig(C)} (\lambda - f'(0)-\kappa \xi)=0.
\end{equation}
from which the characteristic roots can be directly read off as $\lambda = f'(0)+\kappa \xi$, $\xi \in \sigma(C)$. The roots are real for real network eigenvalues $\xi$; hence the only critical root is $\lambda=0$, which occurs when $f'(0)=-\kappa \xi$. The corresponding critical curve is a straight line on the parameter plane of  $f'(0)$ versus $\kappa\xi$, which is identical with the line L1 of Figure~\ref{fig:bif}. Thus, one has stability below this line and one real positive characteristic root above, for a given spatial mode corresponding to $\xi$. In particular, Hopf bifurcations do not occur. 

To see the effects of the delay, we fix the other quantities 
$\kappa,\xi$ and $f'(0)$ and notice that the values of $\alpha,\beta$ in Figure~\ref{fig:bif} then change only along the ray emanating from the origin with slope $m=\kappa\xi/f'(0)$.
The delay $\tau$ parametrizes the distance of points along the ray to the origin. Hence, to use the delay as a bifurcation parameter, one goes along the ray starting from the origin and obtain bifurcations as the curves given in Figure~\ref{fig:bif} are intersected. Such rays only intersect with L1 at the origin or else completely coincide with L1; moreover, they intersect the other curves if the slope $m$ is sufficiently large. Since the latter curves correspond to pairs of purely imaginary characteristic roots, large values of the delay cause oscillatory bifurcations. 

To summarize, stationary bifurcations given by L1 of Figure~\ref{fig:bif} are independent of the delay, whereas the remaining set of curves correspond to oscillatory bifurcation resulting from the delay.
In the following sections we will consider both stationary and oscillatory bifurcations in our symmetry analysis. Stationary bifurcations will be relevant for both delayed and undelayed systems, whereas oscillatory bifurcations will be a feature of delayed systems only, 
in the context of scalar systems that we consider.



\section{Symmetries and Equivariant Bifurcations}
\label{sec:symm_bif}

By a {\it symmetry} of a dynamical system, we mean a group of elements acting on the phase space that keep the system invariant. {More precisely, given a system of form 
\begin{equation}\label{eq:dx=Fx}
\frac{dx}{dt}=F(x)
\end{equation}
with $x\in \br^n$ and $F:\br^n\to \br^n$, and an action of a group $\Gamma$ on the phase space $\br^n$, an element $\gamma\in \Gamma$ is called a {\it symmetry} of (\ref{eq:dx=Fx}) if (\ref{eq:dx=Fx}) remains unchanged after applying the action of $\gamma$ on both sides. Since a group action is linear, it commutes with the linear operator $\frac{d}{dt}$; thus $\gamma$ is a symmetry of (\ref{eq:dx=Fx}) if and only if $\gamma F(x)=F(\gamma x)$ for all $x\in \br^n$.}
 
Let $S_n$ be the group of all permutations of $n$ symbols. 
For $\si\in S_n$, consider its natural action on $\br^n$ by $(x_1,\dots,x_n)\mapsto (x_{\si(1)},\dots, x_{\si(n)})$. Consider a subgroup  $\Gamma\subset S_n$.

\begin{lem}\label{lem:c_Gamma}\rm Let $\kappa\ne 0$. Then $\Gamma$ is a symmetry of systems of form  (\ref{eq:1})  if and only if 
\begin{equation}\label{eq:sym}
g_{\si(i)}(x_1,x_2,\dots, x_n)=g_i(x_{\si(1)},x_{\si(2)},\dots, x_{\si(n)}),
\end{equation}
for all   $\si\in\Gamma$ and $(x_1,x_2,\dots, x_n)\in \br^n$.
\end{lem}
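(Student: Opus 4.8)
The plan is to read off the symmetry condition directly from the vector form of the right-hand side of \eqref{eq:1} and to exploit the fact that the intrinsic term contributes nothing. Write the coupling map as $G=(g_1,\dots,g_n):\br^n\to\br^n$ and let $f_\Delta:\br^n\to\br^n$ denote the diagonal map $f_\Delta(x)=(f(x_1),\dots,f(x_n))$, so that \eqref{eq:1} reads $\dot x(t)=f_\Delta(x(t))+\ka\,G(x(t-\tau))$ as an equation in $\br^n$. Because the $\si$-action is linear and independent of $t$, it commutes with both $\frac{d}{dt}$ and the time shift $t\mapsto t-\tau$; hence $\si$ carries solutions of \eqref{eq:1} to solutions precisely when it leaves this vector field invariant, i.e. when $f_\Delta(\si x)+\ka\,G(\si x)=\si f_\Delta(x)+\ka\,\si G(x)$ for all $x\in\br^n$.

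First I would dispose of the intrinsic term: since $f$ is applied coordinatewise, a direct check gives $(f_\Delta(\si x))_i=f(x_{\si(i)})=(f_\Delta(x))_{\si(i)}=(\si f_\Delta(x))_i$, so $f_\Delta$ is $\Gamma$-equivariant for \emph{every} permutation $\si$, independently of $f$. Consequently the invariance condition collapses to $\ka\,G(\si x)=\ka\,\si G(x)$, and here the hypothesis $\ka\neq0$ is exactly what lets me cancel the scalar and reduce to the equivariance of the coupling map alone, namely $G(\si x)=\si G(x)$ for all $\si\in\Gamma$ and $x\in\br^n$.

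Writing this identity out in coordinates finishes the argument in both directions: the $i$th component of $G(\si x)$ is $g_i(x_{\si(1)},\dots,x_{\si(n)})$, while the $i$th component of $\si G(x)$ is $(G(x))_{\si(i)}=g_{\si(i)}(x_1,\dots,x_n)$, and equating these for all $i$, all $\si\in\Gamma$, and all $x$ is precisely \eqref{eq:sym}. For the forward (``if'') direction the displayed equalities show that \eqref{eq:sym} renders the whole field invariant, so every $\si\in\Gamma$ is a symmetry. For the ``only if'' direction I would make the reduction rigorous in the delay setting by realizing an arbitrary value $a\in\br^n$ of the delayed argument through the constant history $\phi\equiv a$ on $[-\tau,0]$; evaluating the resulting solution at $t=0$ gives $\dot x(0)=f_\Delta(a)+\ka\,G(a)$, and comparing with the equation satisfied by $\si x$ isolates $G(\si a)=\si G(a)$ at the arbitrary point $a$, again using $\ka\neq0$ to cancel.

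The only genuine subtlety, and the step I would handle most carefully, is this ``only if'' reduction in the functional (delay) framework: one must make precise that a spatial symmetry sends solutions to solutions, then decouple the instantaneous term $f_\Delta(x(t))$ from the delayed coupling $\ka\,G(x(t-\tau))$ and pass from ``solutions map to solutions'' to a pointwise functional identity. Both the coordinatewise structure of $f$ (which makes $f_\Delta$ automatically equivariant) and the nonvanishing of $\ka$ are exactly what permit this decoupling; everything else is a routine componentwise computation, which I would keep brief.
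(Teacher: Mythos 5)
Your proof is correct, and its computational core coincides with the paper's: apply the permutation action, observe that the intrinsic terms match automatically because $f$ acts coordinatewise (your remark that $f_\Delta$ is equivariant under \emph{every} permutation), cancel the nonzero scalar $\kappa$, and read off \eqref{eq:sym} componentwise from $G(\si x)=\si G(x)$. The one genuine difference lies in what ``symmetry'' is taken to mean. The paper's definition, given at the start of Section \ref{sec:symm_bif}, is formal invariance of the equation when the action of $\si$ is applied to both sides; under that convention the lemma follows immediately by comparing the $\si(i)$-th equation of \eqref{eq:1} with the $i$-th equation after substitution, with no solution theory involved --- that is the paper's entire proof. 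You instead interpret symmetry dynamically, as carrying solutions to solutions, which is why your ``only if'' direction needs the constant-history argument ($\phi\equiv a$ on $[-\tau,0]$) to convert the solution-mapping property back into the pointwise identity $G(\si a)=\si G(a)$. That extra step is sound --- existence for constant histories follows from $f$ and the $g_i$ being $C^1$ together with the method of steps --- and it in fact establishes the formally stronger implication that mere $\Gamma$-invariance of the solution set already forces \eqref{eq:sym}; the price is an appeal to existence of solutions, which the paper's purely algebraic reading of ``symmetry'' makes unnecessary.
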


\begin{proof} 
Let $\si\in \Gamma$ and apply its action on (\ref{eq:1}). We obtain
\begin{equation}\label{eq:1_symm}
\dot x_{\si(i)}(t)=f(x_{\si(i)}(t))+\kappa g_i(x_{\si(1)}(t-\tau),x_{\si(2)}(t-\tau),\dots, x_{\si(n)}(t-\tau)).
\end{equation}
Comparing with (\ref{eq:1}), we see that (\ref{eq:1_symm}) is the same system as  (\ref{eq:1}) if and only if 
\[\kappa g_i(x_{\si(1)}(t-\tau),x_{\si(2)}(t-\tau),\dots, x_{\si(n)}(t-\tau))=
\kappa g_{\si(i)}(x_{1}(t-\tau),x_{2}(t-\tau),\dots, x_{n}(t-\tau)).\]
This leads to (\ref{eq:sym}), since $\kappa\ne 0$.
\end{proof}

\begin{rmk} \label{rmk:symm}\rm Note that a necessary condition for (\ref{eq:sym}) to hold is that the coupling matrix $C$ in the linearization (\ref{lin-eq}) satisfies
\begin{equation}\label{eq:cij}
c_{ij}=c_{\si(i)\si(j)},\q\forall\, \si\in\Gamma.
\end{equation}
For the specific systems \eqref{neural}--(\ref{eq-c})
it can be checked that \eqref{eq:cij} is also a sufficient condition, since (\ref{eq:sym}) reduces to $a_{ij}
=a_{\si(i)\si(j)}$  $\forall \si\in \Gamma$.
\END
\end{rmk}  


       
   

In what follows, we will study the bifurcations that destabilize the zero solution under a group of symmetries $\Gamma\subset S_n$ of the system (\ref{eq:1}) using the equivariant degree. Exact values of associated bifurcation invariants are calculated using the EDML (Equivariant Degree Maple Library) Package, by calling  
\begin{equation}\label{eq:showdeg}
\text{\tt showdegree}[\Gamma](n_0,n_1, \dots,n_r,m_0,m_1,\dots,m_s),\q \text{for $n_i,m_j\in \bz$,}
\end{equation} 
where the $n_i$ and $m_j$ are integers to be determined by the critical spectrum of the linearized system at the equilibrium.
The integers $r$ and $s$  in (\ref{eq:showdeg}) are predetermined by $\Gamma$ and are equal to the number of all distinct (nontrivial) irreducible representations of $\Gamma$ over reals and over complex numbers, respectively. In what follows, 
we use $\VV_0,\VV_1,\dots, \VV_r$ for the distinct real irreducible representations and $\UU_0,\UU_1,\dots, \UU_s$ for the complex ones, where $\VV_0$ and $\UU_0$ are reserved for the trivial representations.

\subsection{Steady-State Bifurcations}

In reference to Figure~\ref{fig:bif}, suppose that the parameters $(\a,\beta)$ are varied to leave the shaded stability region by crossing $\LL$ at some point $(\a_o,\beta_o)$. Then, 
\begin{equation}\label{eq:xio} 
\a_o=-\beta_o=\tau\kappa \xi_o,
\end{equation} 
for an eigenvalue $\xi_o\in\sig(C)$.  For $\tau,\kappa >0$, $\xi_o$ is the maximal eigenvalue of $C$. Let $E(\xi_o)$ be the generalized eigenspace of $\xi_o$.
Given the $\Gamma$-action on $\br^n$, we decompose $\br^n$ into pieces of $\VV_i$'s: 
\[\br^n=V_0\times V_1 \times\cdots\times V_r,\]
where every $V_i$
\begin{equation}\label{eq:Vi_ni} 
V_i=\underbrace{\VV_i\times\cdots\times \VV_i}_{n_i\,\text{times}}
\end{equation}
is a product of $n_i$ copies of $\VV_i$ for some integer $n_i\in \bn\cup\{0\}$. Also, since $E(\xi_o)$ is a $\Gamma$-invariant subspace of $\br^n$, we can decompose $E(\xi_o)$ as:
\[E(\xi_o)=E_0\times E_1 \times\cdots\times E_r,\]
where every $E_i$ is given by
\begin{equation}\label{eq:Ei_ei} 
E_i=\underbrace{\VV_i\times\cdots\times \VV_i}_{e_i\,\text{times}}
\end{equation}
i.e., as a product of $e_i$ copies of $\VV_i$ for some integer $e_i\in \bn\cup\{0\}$. Using (\ref{eq:Vi_ni})--(\ref{eq:Ei_ei}), define
\begin{equation}\label{eq:ui}
u_i:=n_i-e_i,
\end{equation}
for $i=0,1,\dots,r$. Then, the bifurcation invariant around $(\a_o,\beta_o)$ is given by 
\begin{align}\label{eq:ome0}
\ome_0&:=\text{\tt showdegree}[\Gamma](n_0, \dots,n_r,1,0,\dots,0)-\text{\tt showdegree}[\Gamma](u_0, \dots,u_r,1,0,\dots,0).
\end{align} 
Running the EDML package we obtain the value of $\ome_0$, which is of form
\begin{equation*}
c_1(K_1)+c_2(K_2)+\cdots +c_p(K_p), 
\end{equation*}
for integers $c_i\in \bz$ and conjugacy classes $(K_i)$ of subgroups $K_i$ in $\Gamma$. 

\begin{thm}\label{thm:steady}  Let $(\a_o,\beta_o)$ be such that $\a_o=-\beta_o$ and $\xi_o\in \sig(C)$ be  given by (\ref{eq:xio}). If $\ome_0$ given by (\ref{eq:ome0}) is of form 
\[\ome_0=c_1(K_1)+c_2(K_2)+\cdots +c_p(K_p),
\]
for some $c_i\ne 0$, then there exists a bifurcating branch of steady states of symmetry at least $(K_i)$.
\end{thm}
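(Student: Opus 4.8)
The plan is to read Theorem~\ref{thm:steady} as an equivariant bifurcation statement in which destabilization across $\LL$ produces a jump in the $\Gamma$-equivariant degree without parameters, and to extract the branches from the existence property following (\ref{eq:rec_0}). First I would reduce to a finite-dimensional $\Gamma$-equivariant problem. Since steady states of (\ref{eq:1}) are delay-independent, they are exactly the zeros of the $\Gamma$-equivariant map $\Psi:\br^n\to\br^n$, $\Psi_i(x)=f(x_i)+\kappa g_i(x)$, whose equivariance is guaranteed by Lemma~\ref{lem:c_Gamma}. Its linearization at the origin is $A=f'(0)I_n+\kappa C$, with eigenvalues $f'(0)+\kappa\xi$, $\xi\in\sig(C)$; by Section~\ref{sec:stability}, crossing $\LL$ at $(\a_o,\beta_o)$ with $\a_o=-\beta_o=\tau\kappa\xi_o$ is precisely the sign change of the eigenvalue attached to $\xi_o$, so $A$ is a $\Gamma$-isomorphism on each side of $\LL$ and is singular at the crossing, with the crossing directions spanning $E(\xi_o)$.

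Next I would identify the two terms of (\ref{eq:ome0}) with the linearizations on the two sides of $\LL$. Letting $\lambda$ be a parameter that drives $(\a,\beta)$ transversally across $\LL$, write $A_-$ and $A_+$ for the nonsingular linearizations $D\Psi_\lambda(0)$ just inside and just outside the stable region. The equivariant degree of a linear $\Gamma$-isomorphism is determined by the multiplicities of the real irreducibles $\VV_0,\dots,\VV_r$ inside its negative eigenspace, and this is the combinatorial datum that \texttt{showdegree}$[\Gamma]$ evaluates. Inside the stable region the whole of $\br^n$ is the negative eigenspace, with isotypic multiplicities $(n_0,\dots,n_r)$ of (\ref{eq:Vi_ni}); after crossing $\LL$ the subspace $E(\xi_o)$, with multiplicities $(e_0,\dots,e_r)$ of (\ref{eq:Ei_ei}), has passed to the unstable side, leaving $u_i=n_i-e_i$ as in (\ref{eq:ui}). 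Since the two \texttt{showdegree} calls in (\ref{eq:ome0}) share the same trailing arguments and differ only in this real-isotypic data, they are $\gmdeg(A_-,B)$ and $\gmdeg(A_+,B)$ for a small invariant ball $B$, and $\ome_0$ is their difference, the jump of the degree across the bifurcation.

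I would then conclude via the existence property. Following the standard construction (cf.\ \cite{AED}), this jump is realized as the $\Gamma$-degree of an admissible equivariant map $\mathcal F$ on an invariant neighborhood $\Omega$ of the trivial branch near the bifurcation point, the parameter and complementing directions being arranged so that a product/suspension reduction of $\gmdeg(\mathcal F,\Omega)$ returns exactly the difference (\ref{eq:ome0}); thus $\gmdeg(\mathcal F,\Omega)=\ome_0$. Writing $\ome_0=c_1(K_1)+\cdots+c_p(K_p)$, a nonzero $c_i$ is precisely the integer $n_{K_i}$ carried by $(K_i)$ in this degree, so the existence property $n_{K_i}\ne0\Rightarrow\mathcal F^{-1}(0)\cap\Omega^{K_i}\ne\emptyset$ forces zero orbits of $\mathcal F$ inside $\fix(K_i)$ and off the trivial branch. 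These correspond to nontrivial steady states accumulating at the bifurcation point, i.e.\ a bifurcating branch of steady states of symmetry at least $(K_i)$.

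I expect the main obstacle to be the middle step made rigorous: verifying that the explicit formula (\ref{eq:ome0}) genuinely coincides with the topological bifurcation invariant $\gmdeg(\mathcal F,\Omega)$. This demands matching the analytic crossing data---which real irreducibles destabilize, and with what multiplicities $e_i$ in $E(\xi_o)$---to the linear-isomorphism degree computed by \texttt{showdegree}, and confirming that the complementing/parameter construction of $\mathcal F$ is set up so that its degree reduces to the difference of endpoint degrees rather than to something else. By comparison, the passage from a nonzero coefficient on $\fix(K_i)$ to an actual branch (rather than isolated zeros) emanating from the bifurcation point, and the delay-independence reduction, are routine.
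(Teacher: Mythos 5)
Your proposal is correct and takes essentially the same approach as the paper: the appendix proof likewise realizes $\ome_0$ as a parameterless $\Gamma$-degree of an auxiliary map on an isolating box around the bifurcation point and reduces it, by homotopies and suspension, to the difference $\gmdeg(-\id,B_1(\br^n))-\gmdeg(-\id,B_1(R_1))$ --- exactly your $\gmdeg(A_-,B)-\gmdeg(A_+,B)$ --- before concluding with the existence property. The one step you defer to the standard construction (that this jump genuinely equals the bifurcation invariant) is precisely what that appendix computation carries out, via the auxiliary function and the linearization/suspension homotopies; the paper's main text, like you, is content to cite \cite{KW} for it.
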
 
\begin{proof} {By the existence property of equivariant degree, it is sufficient to prove that the value of the bifurcation invariant for the steady state bifurcations around $(\a_o,\beta_o)$ is indeed given by formula (\ref{eq:ome0}), since if this is the case, then any non-zero coefficient in the value indicates a bifurcating branch with the corresponding symmetry.} The formula (\ref{eq:ome0}) {follows from Theorem 8.5.2 in} \cite{KW}, {but we give} an alternative and more straightforward  proof in \ref{sec:proof} {for completeness}.  
  
\end{proof}

\begin{cor} \label{cor:steady}  Assume the hypotheses of Theorem \ref{thm:steady}, and suppose furthermore that the subgroup $K_i$ satisfies
\begin{equation}\label{eq:cor_cd}
\xi_o\not\in \sig(C|_{\fix(H)}),\q \forall\, H\supsetneq K_i.
\end{equation}
Then there exists a bifurcating branch of steady states of symmetry precisely $(K_i)$.
\end{cor}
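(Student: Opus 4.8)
The plan is to sharpen the ``at least $(K_i)$'' conclusion of Theorem~\ref{thm:steady} to ``precisely $(K_i)$'' by using the implicit function theorem to exclude every bifurcating steady state whose isotropy strictly contains $K_i$. The starting point is that a constant solution of (\ref{eq:1}) satisfies $f(x_j)+\kappa g_j(x_1,\dots,x_n)=0$ independently of the delay $\tau$, so the steady states are the zeros of the $\Gamma$-equivariant map $F(x):=\big(f(x_j)+\kappa g_j(x_1,\dots,x_n)\big)_{j=1}^{n}$ on $\br^n$, viewed as a one-parameter family crossing $\LL$ at the critical point $(\a_o,\beta_o)$. By equivariance, $F$ maps each fixed-point subspace into itself, so it restricts to $F|_{\fix(H)}:\fix(H)\to\fix(H)$ for every subgroup $H\subset\Gamma$, and $x=0$ is always a zero since $f$ and the $g_j$ vanish at the origin.

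First I would compute the restricted linearization at criticality. By (\ref{eq:xio}) the critical value obeys $f'(0)=-\kappa\xi_o$, hence $D_xF(0)=f'(0)I_n+\kappa C=\kappa\,(C-\xi_o I_n)$. The symmetry relation (\ref{eq:cij}) makes $C$ commute with the $\Gamma$-action, so each $\fix(H)$ is $C$-invariant and $C|_{\fix(H)}$ is a well-defined symmetric operator; moreover $D_xF(0)|_{\fix(H)}=\kappa\,\big(C|_{\fix(H)}-\xi_o\,\id\big)$. Since $C$ is symmetric, $\ker\big(C|_{\fix(H)}-\xi_o\,\id\big)=E(\xi_o)\cap\fix(H)$, so this restricted Jacobian is an isomorphism of $\fix(H)$ exactly when $\xi_o\notin\sig\big(C|_{\fix(H)}\big)$ --- which is precisely hypothesis (\ref{eq:cor_cd}) for every $H\supsetneq K_i$.

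With invertibility in hand, I would apply the implicit function theorem to $F|_{\fix(H)}$ at the bifurcation point for each $H\supsetneq K_i$. Since $D_xF(0)|_{\fix(H)}$ is an isomorphism, the theorem provides a neighborhood of the bifurcation point in $\fix(H)\times\br$ in which the only zero is the trivial branch $x=0$; that is, $\fix(H)$ supports no nontrivial small-amplitude steady state near the crossing of $\LL$. As $\Gamma$ has finitely many subgroups, one common neighborhood can be chosen for all $H\supsetneq K_i$ at once.

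Finally I would combine this exclusion with the existence statement already proved. A nonzero coefficient $c_i$ in $\ome_0$ yields, through the existence property of the equivariant degree, nontrivial steady states lying in $\Omega^{K_i}=\fix(K_i)$ and accumulating at the bifurcation point; each such solution has isotropy containing $K_i$. Were the isotropy of one of them some $H\supsetneq K_i$, the solution would lie in $\fix(H)$, contradicting the triviality just established. Hence the small-amplitude bifurcating solutions sit in the set of points with isotropy exactly $K_i$, so the branch has symmetry precisely $(K_i)$. The step demanding the most care is the linear-algebra reduction in the second paragraph --- confirming that $\fix(H)$ is $C$-invariant and that the kernel of the restricted Jacobian is exactly the $\xi_o$-eigenspace of $C|_{\fix(H)}$ --- since only then does (\ref{eq:cor_cd}) convert cleanly into the invertibility that drives the implicit function theorem.
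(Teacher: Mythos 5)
Your proof is correct and takes essentially the same approach as the paper's: existence of a branch with symmetry at least $(K_i)$ from Theorem \ref{thm:steady}, followed by exclusion of any isotropy $H\supsetneq K_i$ via the implicit function theorem applied on $\fix(H)$, where hypothesis (\ref{eq:cor_cd}) makes the restricted linearization (the paper's $\Delta(0)|_{\fix(H)}$, your $\kappa\bigl(C|_{\fix(H)}-\xi_o\,\id\bigr)$) an isomorphism. The only difference is that you spell out explicitly the linear-algebra verification ($C$-invariance of $\fix(H)$ and identification of the kernel) that the paper leaves implicit.
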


\begin{proof} By Theorem \ref{thm:steady}, there exists a bifurcating branch of steady states of symmetry {\it at least} $(K_i)$. Let $(H)$ be the symmetry of this branch of solutions. Then, $(H)\ge (K_i)$. Up to the group conjugacy, we have $H \supseteq K_i$. 
We need to show $H=K_i$. Assume to the contrary that $H\supsetneq K_i$. 
Then by (\ref{eq:cor_cd}) we have that, when restricted to $\fix(H)$, the characteristic operator $\Delta(0)|_{\fix(H)}: \fix(H)\to \fix(H)$ is an isomorphism for $(\a, \beta)$ in a neighborhood of $(\a_o,\beta_o)$. By the implicit function theorem, there can be no additional solution in neighborhood of the trivial solution $x=0\in \fix(H)$, which is a contradiction.

\end{proof}

\subsection{Hopf Bifurcations} 
Assume that $(\a, \beta)$ leaves the shaded stability region of Figure \ref{fig:bif} by crossing C2 at some point $(\a_o,\beta_o)$. Since $\CC$ bounds the region from below and $\tau, \kappa >0$, the first  parameter pair that crosses $\CC$ must be related to the minimal eigenvalue $\xi_{\min}$ of $C$.
Let $\xi_o\in \sig(C)$ be the corresponding eigenvalue, i.e. 
\begin{equation}\label{eq:hopf_xi}
\beta_o=\tau \kappa \xi_o.
\end{equation} 
That is, $\xi_o=\xi_{\min}$ becomes critical. Consider the complexification $\bc^n=\bc\otimes \br^n$ of the phase space $\br^n$ and extend the $\Gamma$-action on $\bc^n$ by defining 
\begin{equation}\label{eq:gamma_c_n_action}
\gamma (z\otimes x)=z\otimes (\gamma x),\q\text{ for}\q \gamma\in \Gamma,\, x\in \br^n.
\end{equation}
The (generalized) eigenspace $E(\xi_o)$ remains $\Gamma$-invariant as a complex subspace of $\bc^n$. Thus, we decompose $E(\xi_o)$ into irreducible representations $\UU_0,\UU_1,\dots,\UU_s$ as:
\[E(\xi_o)=F_0\times F_1 \times\cdots\times F_s,\]
where every $F_j$ is given by
\begin{equation}\label{eq:Fj_mj}
F_j=\underbrace{\UU_j\times\cdots\times \UU_j}_{m_j\,\text{times}}
\end{equation}
that is, a product of $m_j$ copies of $\UU_j$ for some integer $m_j\in \bn\cup\{0\}$. Then, the bifurcation invariant around $(\a_o,\beta_o)$ for Hopf bifurcation is given by 
\begin{align}\label{eq:ome1}
\ome_1&:=\text{\tt showdegree}[\Gamma](0,0, \dots,0,-m_0,-m_1,\dots,-m_s).
\end{align}
Running the EDML package, we obtain the value of $\ome_1$ being of form
\[c_1(H_1)+c_2(H_2)+\cdots +c_q(H_q), 
\]
for integer coefficients $c_j\in \bz$ and conjugacy classes $(H_j)$ of subgroups $H_j\subset \Gamma\times S^1$. 
 
\begin{thm}\label{thm:Hopf}  Let $(\a_o,\beta_o)$ be such that $(\a_o,\beta_o)\in \CC$ in Figure \ref{fig:bif} and $\ome_1$ be given by (\ref{eq:ome1}). If 
\[\ome_1=c_1(H_1)+c_2(H_2)+\cdots +c_q(H_q), 
\]
contains a non-zero coefficient $c_j\ne 0$ for some $(H_j)$, then there exists a bifurcating branch of oscillating states of symmetry at least $(H_j)$. 
\end{thm}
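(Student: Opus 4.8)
The plan is to mirror the proof of Theorem~\ref{thm:steady}. By the existence property of the $\Gamma\times S^1$-equivariant degree recorded at the end of Section~\ref{sec:prelim} (namely, $n_H\neq 0$ in \eqref{eq:deg_1} implies $f^{-1}(0)\cap\Omega^H\neq\emptyset$), it suffices to establish that the genuine bifurcation invariant for the Hopf bifurcation at $(\a_o,\beta_o)$ coincides with the element $\ome_1$ produced by formula \eqref{eq:ome1}. Once this identification holds, any non-zero coefficient $c_j$ of $(H_j)$ forces a zero orbit of isotropy $H_j$ in the corresponding fixed-point space, and such a zero orbit is exactly a bifurcating branch of periodic solutions carrying spatiotemporal symmetry at least $(H_j)$.

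First I would recast the search for small-amplitude periodic solutions of \eqref{eq:1} near the equilibrium as a zero-finding problem for a $\Gamma\times S^1$-equivariant compact vector field. After rescaling time so that the unknown period is normalized, one works in a Sobolev space of periodic functions on which $S^1$ acts by phase shift; the delayed term becomes a bounded operator, and after inverting a suitable resolvent the problem takes the compact-field form $f=\id-F$ to which $\g1deg$ applies, with the real bifurcation parameter of \eqref{eq:deg_1} measuring the displacement of $(\a,\beta)$ across $\CC$. The $S^1$-phase-shift action together with the $\Gamma$-action on $\br^n$, extended to the complexification via \eqref{eq:gamma_c_n_action}, supplies the required $\Gamma\times S^1$-action, so that periodic orbits appear as $\Gamma\times S^1$-orbits whose isotropy classes are twisted orbit types.

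The core step is the computation of this invariant. By the stability analysis of Section~\ref{sec:stability}, crossing $\CC$ at $(\a_o,\beta_o)$ corresponds to a single pair of purely imaginary roots $\lambda=\pm i v_o$ arising from $\xi_o=\xi_{\min}$ as in \eqref{eq:hopf_xi}, and the implicit-differentiation argument there shows these roots move into the right half-plane as $|\beta|$ increases. Localizing the field near this critical frequency, all other characteristic factors remain invertible and contribute trivially under homotopy, so the invariant reduces to a computation on the complexified critical eigenspace $E(\xi_o)$. Decomposing $E(\xi_o)$ into the complex irreducibles $\UU_j$ with multiplicities $m_j$ as in \eqref{eq:Fj_mj} and invoking the general equivariant Hopf theorem for compact fields (the $\Gamma\times S^1$-analogue of Theorem~8.5.2 in \cite{KW}; see also \cite{AED}), the degree factors multiplicatively over the isotypical components, while the rightward crossing direction fixes the sign so that the multiplicities enter as $-m_j$. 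This yields precisely $\ome_1=\text{\tt showdegree}[\Gamma](0,\dots,0,-m_0,\dots,-m_s)$.

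The main obstacle is this last reduction: verifying that only the single crossing frequency $v_o$ contributes to $\g1deg$ and establishing the crossing-number bookkeeping that produces the arguments $-m_j$ with the correct signs. Concretely, one must check admissibility of the localized field on the boundary of the isolating neighborhood (so that the non-critical characteristic factors are homotopically removable) and confirm that the product formula for the $\Gamma\times S^1$-degree over the isotypical pieces matches the internal normalization used by the EDML package. As in the steady-state case, this identification can either be cited from the general theory in \cite{KW,AED} or carried out directly in an appendix adapted to the twisted-subgroup setting.
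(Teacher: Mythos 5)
Your overall strategy coincides with the paper's: recast the Hopf problem as a zero problem for a $\Gamma\times S^1$-equivariant compact field, cite the general theory of \cite{AED,KW} to identify the bifurcation invariant with a {\tt showdegree} expression, and conclude via the existence property of $\g1deg$. Your determination of the crossing numbers is also the paper's: since the critical roots move into the right half-plane as $(\a,\beta)$ leaves the stability region through $\CC$, all nonzero crossing numbers equal $-m_j$.

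The genuine gap lies in your justification that the invariant ``reduces to a computation on the complexified critical eigenspace'' because ``all other characteristic factors remain invertible and contribute trivially under homotopy.'' That principle is false for the constant (zeroth Fourier) modes. The general formula cited in the paper's proof is $\ome_1=\text{\tt showdegree}[\Gamma](k_0,\dots,k_r,t_0,\dots,t_s)$, where $k_i$ is the number of copies of $\VV_i$ in the eigenspaces of the \emph{positive} spectrum of $\tau f'(0)\id+\tau\kappa C$ acting on constant functions; these entries are not controlled by invertibility. A real-linear isomorphism can carry a nontrivial equivariant degree --- indeed $\gmdeg(-\id,B_1(\br^n))$ is not the unit of the Burnside ring, which is exactly what makes the steady-state invariant in Theorem \ref{thm:steady} nontrivial. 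Invertibility-implies-triviality is valid only for the nonzero Fourier modes, where the relevant operators are complex-linear and $GL(n,\bc)$ is connected. What actually forces $k_0=\cdots=k_r=0$ --- and what occupies half of the paper's proof --- is the geometric observation that the curve $\CC$ lies in the half-plane $\a+\beta<0$, so that at the crossing point the positive spectrum of the constant-function linearization is empty. Without this step your identification of the invariant with formula (\ref{eq:ome1}) is incomplete; as written, your argument would produce the same all-zero first block even for a crossing of a Hopf curve out of a region of Figure \ref{fig:bif} in which the equilibrium already has unstable stationary modes, where that formula is no longer correct.
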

\begin{proof} Using equivariant degree theory, the bifurcation invariant is computed by (cf. \cite{AED})
\[\ome_1=\text{\tt showdegree}[\Gamma](k_0,k_1, \dots,k_r,t_0,t_1,\dots,t_s),
\]
where $k_i$'s are related to the {\it positive spectrum} of the right hand side of (\ref{eq:2_lin}) in the constant function space, and the $t_j$'s are the {\it crossing numbers} which are equal to either $m_j$ or $-m_j$, depending on the direction of the crossing of the critical characteristic roots. 

Consider (\ref{eq:2_lin}) in the constant function space. Then, 
\[\big(\tau f'(0)\id+\tau \ka C\big) x=0,\q x\in \br^n.\] 
The positive spectrum  $\sig_+$ of the linear operator ($\tau f'(0)\id+\tau \ka C$) is 
\begin{align*}
\sig_+&=\{\tau f'(0)+\tau \ka \xi\,:\, \tau f'(0)+\tau \ka \xi>0,\q\xi\in \sig(C)\}=\{\a+\beta(\xi)\,:\, \a+\beta(\xi)>0,\q\xi\in \sig(C)\},
\end{align*}
which is an empty set, since the curve $\CC$ lies in the area $\a+\beta<0$.  Since the integer $k_i$ is the total number of copies of $\VV_i$ in $E(\mu)$ for $\mu\in\sig_+$, we have $k_i=0$ for all $i=0,1,\dots,r$. 

The crossing numbers are positive if the critical characteristic roots cross from the right to the left of the complex plane; and negative otherwise. As $(a,\beta)$ crosses $\CC$ at $(\a_o,\beta_o)$ from the shaded region in Figure \ref{fig:bif}, the count of characteristic roots with positive real part increases by $2$, thus all nonzero $t_j$'s are negative and equal to $-m_j$. 
\end{proof}
  
Theorem \ref{thm:Hopf} gives an existence result of bifurcating branches together with their {\it least} symmetry. To sharpen to the {\it precise} symmetry, one can work with orbit types that satisfy certain maximal condition. Here, we recall the concept of {\it dominating orbit types} from \cite{AED} and introduce a new complementing definition of {\it secondary dominating orbit types}.

\begin{Def}\rm \label{def:dom_sec_dom}
Let $\{\UU_1,\UU_2,\dots,\UU_m\}$ be the set of irreducible $\Gamma$-representations that occur in $\bc^n$, where $\bc^n$ is the complexification of the phase space $\br^n$ of the system (\ref{eq:1}). Let $\bar\UU_j$ be the $\Gamma\times S^1$-representation induced from $\UU_j$, for $j=1,2,\dots,m$ (cf. (\ref{eq:gamma_ext_s1})). 
Collect maximal orbit types from $\bar\UU_j$, for $j=1,2,\dots,m$, and denote this collection by  $\MM$. An orbit type $(H)\in\MM$ is called {\it dominating} if $(H)$ is maximal in $\MM$. 
A non-dominating orbit type $(L)\in \MM$ is  called {\it secondary dominating} if all orbit types $(H)\in \MM$ satisfying $(L)<(H)$ are dominating. 
\END
\end{Def}   

\begin{proposition}\label{cor:Hopf} Let $(\a_o,\beta_o)$ be such that $(\a_o,\beta_o)\in \CC$ in Figure \ref{fig:bif} and $\xi_o$ be the corresponding eigenvalue of $C$ given by (\ref{eq:hopf_xi}). Assume that $\ome_1$ defined by (\ref{eq:ome1}) contains $(H)$ with a nonzero coefficient. Then the following hold:
\begin{itemize}
\item[(i)] If $(H)$ is a dominating orbit type, then there exists a bifurcating branch of oscillating states of symmetry precisely equal to $(H)$. 
\item[(ii)] Suppose that $(H)$ is a secondary dominating orbit type, and for every dominating orbit type $(\tilde H)$ with $(H)<(\tilde H)$ there exists a flow-invariant subspace $S\subset\br^n$ such that
\begin{itemize}
\item[(a)] $S$ contains every state of symmetry $\tilde H$; and 
\item[(b)] $\xi_o\not\in\sig(C|_{S})$.
\end{itemize}
Then there exists a bifurcating branch of oscillating states of symmetry precisely being $(H)$.
\end{itemize}
\end{proposition}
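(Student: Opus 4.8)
The plan is to treat both parts as refinements of Theorem~\ref{thm:Hopf}. Since $\ome_1$ contains $(H)$ with a nonzero coefficient, the existence property of the $\Gamma\times S^1$-equivariant degree already yields a bifurcating branch of small-amplitude periodic solutions whose spatio-temporal symmetry $(\sigma)$ satisfies $(\sigma)\ge (H)$. The entire task is therefore to exclude the strict inequality $(\sigma)>(H)$. I would carry this out inside the induced $\Gamma\times S^1$-representation $\bar E(\xi_o)$ coming from the critical eigenspace $E(\xi_o)$: after a Lyapunov--Schmidt reduction the symmetry of any bifurcating branch is realized as an isotropy subgroup occurring in $\bar E(\xi_o)=\bigoplus_j \bar\UU_j^{m_j}$, and since the isotropy of a superposition is the intersection of the isotropies of its components, every orbit type of $\bar E(\xi_o)$ is bounded above by a maximal orbit type of some single summand $\bar\UU_j$, i.e.\ by an element of $\MM$ (cf.\ Definition~\ref{def:dom_sec_dom}).

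For part~(i) this observation closes the argument immediately. If $(H)$ is dominating, it is maximal in $\MM$. Were $(\sigma)>(H)$, then $(\sigma)$ would be an orbit type of $\bar E(\xi_o)$, hence $(\sigma)\le (H')$ for some $(H')\in\MM$, giving $(H)<(\sigma)\le(H')$ with both $(H),(H')\in\MM$ and contradicting the maximality of $(H)$. Thus $(\sigma)=(H)$.

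For part~(ii) I would mimic the implicit-function-theorem argument of Corollary~\ref{cor:steady}, now in the periodic-solution setting. Assume for contradiction that $(\sigma)>(H)$. Because $(H)$ is secondary dominating, every element of $\MM$ lying strictly above $(H)$ is dominating; together with the "isotropy decreases under superposition" fact above, this forces $(\sigma)\ge (\tilde H)$ for some dominating $(\tilde H)$ with $(H)<(\tilde H)$, so the states of the branch lie in $\fix(\tilde H)$. By hypothesis there is a flow-invariant subspace $S\subset\br^n$ containing every state of symmetry $\tilde H$ (condition~(a)), hence $\fix(\tilde H)\subseteq S$, with $\xi_o\notin\sig(C|_S)$ (condition~(b)). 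Restricting \eqref{eq:1} to $S$-valued functions is legitimate since $S$ is flow-invariant, and its linearization is \eqref{lin-eq} with coupling matrix $C|_S$; by the factorization \eqref{eq:char} a purely imaginary characteristic root at the critical parameter would require $\xi_o\in\sig(C|_S)$, which is excluded. Thus $\Delta(\lam)|_S$ is invertible along the imaginary axis near $(\a_o,\beta_o)$, and the implicit function theorem forbids any nontrivial periodic orbit bifurcating inside $S$ from the equilibrium. This contradicts the branch lying in $\fix(\tilde H)\subseteq S$, so no such dominating $(\tilde H)$ occurs and $(\sigma)=(H)$.

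The step I expect to be the main obstacle is the reduction in part~(ii) asserting that excluding the dominating orbit types above $(H)$ excludes \emph{all} symmetries above $(H)$. The danger is an intermediate isotropy strictly between $(H)$ and a dominating orbit type that is realizable in $\bar E(\xi_o)$ but not maximal in any $\bar\UU_j$, and therefore not governed by the supplied subspaces $S$. Ruling this out is precisely the purpose of the secondary-dominating condition, but making it rigorous requires a careful analysis of the isotropy lattice of $\bar E(\xi_o)$, ideally reduced to the individual induced representations $\bar\UU_j$ whose maximal orbit types define $\MM$, possibly supplemented by a degree-bookkeeping argument showing that once the dominating contributions vanish the nonzero coefficient of $(H)$ in $\ome_1$ detects a branch of symmetry exactly $(H)$. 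By comparison, the analytic half---flow-invariance of $S$ and the implicit-function-theorem exclusion---is routine once $S$ is in hand.
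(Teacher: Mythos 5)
Your part (i) is fine and is actually more self-contained than the paper's treatment: the paper disposes of (i) by citing \cite{AED}, whereas you derive it from the equivariant Lyapunov--Schmidt reduction. That argument is sound: since the complementary part of a reduced solution is an equivariant function of its projection $Px$ onto the critical space, one gets $\iso(x)=\iso(Px)$, so the precise symmetry of any small bifurcating solution is an orbit type occurring in $\bar E(\xi_o)$; and because the isotropy of a sum is the intersection of the isotropies of the summands, every such orbit type is bounded above by a maximal orbit type of a single $\bar\UU_j$, i.e.\ by an element of $\MM$. Maximality of a dominating $(H)$ in $\MM$ then forces equality.

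Part (ii), however, contains a genuine gap, and it sits precisely at the step you flag yourself. From your superposition fact you can only conclude the \emph{upper} bound $(H)<(\sigma)\le (H')$ with $(H')\in\MM$ (hence $(H')$ dominating, by secondary domination). You then assert that this ``forces $(\sigma)\ge(\tilde H)$ for some dominating $(\tilde H)$'' --- a \emph{lower} bound --- which is a non sequitur: nothing you have established excludes $(\sigma)$ lying strictly between $(H)$ and $(H')$, and in that case the branch is \emph{not} contained in $\fix(\tilde H)$, condition (a) never places it inside the flow-invariant subspace $S$, and the implicit-function-theorem contradiction cannot be launched. The paper's own proof has the correct logical shape here: it takes $(\tilde H)$ to be the \emph{precise} symmetry of the branch, argues (by reading the definition of secondary domination as applying to it) that $(\tilde H)$ is itself dominating, and only then invokes (a) to put the branch in $S$ and (b) plus the implicit function theorem to get the contradiction. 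What is needed, and what your proposal leaves open by your own admission (``making it rigorous requires a careful analysis of the isotropy lattice\dots''), is the statement that any orbit type strictly above $(H)$ realizable as the precise symmetry of the bifurcating branch is dominating. The natural way to close this within your framework is to push your part-(i) machinery further: by the Lyapunov--Schmidt equality, $(\sigma)$ is an orbit type of $\bar E(\xi_o)$; a maximal orbit type of a direct sum is attained on a single summand and hence lies in $\MM$; so one must check that no non-maximal orbit type of $\bar E(\xi_o)$ sits strictly between $(H)$ and a dominating one. Without some such lattice argument (or the paper's blanket assertion), your contradiction in (ii) does not get off the ground.
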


\begin{proof} Statement (i) follows from \cite{AED}, and (ii) follows from the implicit function theorem, in the same spirit as Corollary \ref{cor:steady}. More precisely, let $(H)$ be a secondary dominating orbit type with a nonzero coefficient in $\ome_1$. 
By Theorem \ref{thm:Hopf}, there exists a a bifurcating branch of oscillating states of symmetry at least $(H)$. 
Let $(\tilde H)$ be the precise symmetry of this branch and suppose that $(H)<(\tilde H)$.  
By definition of secondary dominating orbit types, the only orbit types that are strictly larger than $(H)$ are dominating orbit types. Thus $(\tilde H)$ is dominating, and so there exists a flow-invariant subspace $S$ in $\br^n$ satisfying (a)--(b). 
Consider the restricted flow on $S$. The bifurcating branch of oscillating states, by condition (a),  is contained in $S$. However, by condition (b) and the implicit function theorem, there can be no bifurcation taking place in $S$. This leads to a contradiction. 

\end{proof}
%

\section{Bidirectional Ring Configuration}  \label{sec:ring} 
In this section, we study the bifurcations of the system (\ref{eq:1}) on a particular class of networks, namely bidirectional ring configurations. 
That is, we assume $g_i$'s satisfy (\ref{eq:sym}) for $\Gamma=D_n$. 
If the system has one of the specific forms \eqref{neural}--(\ref{eq-c}), this assumption can be weakened to (\ref{eq:cij}). In either case, the coupling matrix $C$ in (\ref{lin-eq}) satisfies (\ref{eq:cij}), which in case of dihedral configuration implies that $C$ is a {\it circulant matrix}\footnote{Recall that an $n\times n$ matrix is called {\it circulant} if every row is the right shift of the previous row (mod $n$). A circulant matrix $C=(c_{ij})$ is also denoted by $\cir[c_{11},c_{12},\dots, c_{1n}]$ using the entries of its first row.} with $c_{1j}=c_{1,(n+2-j)}$ for $1\le j\le n$. In particular, $C$ is a symmetric matrix.


A circulant matrix with first row entries $c_0,c_1,\dots, c_{n-1}$  has eigenvalues 
\begin{equation}  \label{circulant-eigenvalues}
	\xi_j=c_0+c_1 \varrho_j+c_2\varrho_j^2+\cdots +c_{n-1}\varrho_j^{n-1},
	\q j=0,1,2,\dots, n-1,
\end{equation}
with corresponding eigenvectors $v_j=(1,\varrho_j,\varrho_j^2,\dots, \varrho_j^{n-1})^T$, 
where $\varrho_j=\exp(2\pi ij/n)$ are the $n$-th roots of unity. Moreover, if the circulant matrix is $D_n$-symmetric, then  $\xi_{j}=\xi_{n-j}$ for $0< j<n$, which is essentially induced by the $D_n$-symmetry. In fact,  we have 
\begin{equation}\label{eq:E_xi}
\begin{cases}
E(\xi_0)=\VV_0,\\
E(\xi_j)=E(\xi_{n-j})=\VV_j\q\text{for $0< j<\frac n2$}\\
E(\xi_{\frac n2})=\VV_{(\frac n2+2)},\q \text{if $n$ is even}
\end{cases}
\end{equation}
(see Example \ref{ex:d12_rep} for notations $\VV_j$). An eigenvalue $\xi\in \sig(C)$ is called {\it simple} if $E(\xi)$ is irreducible. To a critical eigenvalue $\xi_o$, we associate an index set
\begin{equation}\label{eq:I}
I=\{i\,:\, \xi_i=\xi_o\}
\end{equation} 
(in case $n$ is even and $\xi_{\frac n2}=\xi_o$, we put $\frac n2+2$ into $I$ instead of $\frac n2$), which collects all indices of irreducible representations that have to do with the critical eigenvalue $\xi_o$.
%

\subsection{Steady-State Bifurcations for Bidirectional Rings}
Recall that $D_n$ acts on the phase space $\br^{n}$ by
\begin{gather}
\eta (x_1,x_2,\dots, x_{n})=(x_{n},x_1,x_2,\dots, x_{n-1})\label{eq:natl_eta_b}\\
\varsigma (x_1,x_2,\dots, x_{n}) =(x_{n}, x_{n-1},\dots, x_1),\label{eq:natl_ka_b}
\end{gather}
for $x=(x_1,x_2,\dots, x_{n})\in \br^{n}$. Using characters of representations, $\br^{n}$ can be decomposed into irreducible representations of $D_n$. In case of even $n$, we have
\begin{equation}\label{eq:r_even}
\br^{n}=\VV_0\times \VV_1\times \VV_2\times \cdots\times \VV_{\frac n2-1}\times \VV_{\frac n2+2}
\end{equation}
and in case of odd $n$, we have
\begin{equation}\label{eq:r_odd}
\br^{n}=\VV_0\times \VV_1\times \VV_2\times \cdots\times \VV_{\frac {n-1}2},
\end{equation}
(see Example \ref{ex:d12_rep} for notations $\VV_j$). It follows that the non-zero $n_i$'s in (\ref{eq:ome0}) are (cf. (\ref{eq:Vi_ni})) 
\begin{equation}\label{eq:d12_ni_b}
\begin{cases}
n_0=n_1=n_2=\cdots= n_{ \frac{n}{2}-1 }=n_{\frac n2+2}=1,\q \text{if $n$ is even,}\\
n_0=n_1=n_2=\cdots= n_{ \frac{n-1}{2} }=1,\q \text{if $n$ is odd.}
\end{cases}
\end{equation}
The integers $u_i$'s in (\ref{eq:ome0}) are determined by the critical eigenvalue $\xi_o$ and the corresponding $I$ (cf. (\ref{eq:I})). 
Based on (\ref{eq:E_xi}) and the definition (\ref{eq:ui}) of $u_i$, the non-zero $u_i$'s are
\begin{equation}\label{eq:d12_ui_b}
\begin{cases}
u_i=1,\q \text{for $i\in\{0,1,2,\dots, \frac n2-1,\frac n2+2\}\setminus I$,\q if $n$ is even,}\\
u_i=1,\q \text{for $i\in\{0,1,2,\dots, \frac {n-1}{2}\}\setminus I$,\q if $n$ is odd.}
\end{cases}
\end{equation}
Thus, the bifurcation invariant $\ome_0$ can be computed using (\ref{eq:ome0}), accompanied by (\ref{eq:d12_ni_b})--(\ref{eq:d12_ui_b}).

\begin{ex}\rm \label{ex:sb_dn}(Simple critical eigenvalues for bidirectional rings.) Let $C$ be a coupling matrix satisfying (\ref{eq:cij}) for $\Gamma=D_n$. 
Then $C$ is determined by ($\frac n2+1$) or ($\frac{n+1}{2}$) different entries depending on whether $n$ is even or odd, respectively. These entries decide which eigenvalue is maximal. Let $\xi_o\in \sig(C)$ be the maximal eigenvalue. 
Assume that $\xi_o$ is simple, i.e., $E(\xi_o)$ is irreducible. 
Then the index set $I$ is a singleton and there are only  possibly $\frac n2$ or $\frac {n-1}2$ different  values of  $\ome_0$, depending on whether $n$ is even or odd. As an example, for $n=12$, we have
{\small
\begin{equation}\label{eq:ome_0_12}
\ome_0=
\begin{cases}
-2(D_{12})+2(\tilde D_6)+4(D_4)-2(\tilde D_3)+2(D_3)-2(\tilde D_2)-2(D_2)-2( \bz_4)+2(\bz_2),\q\text{if $\xi_o=\xi_0$}\\
(D_1)-(\tilde D_1),\q\hskip8.4cm\text{if $\xi_o=\xi_1$}\\
-(D_2)+(\tilde D_2)+2(D_1)-2(\tilde D_1),\q\hskip6cm\text{if $\xi_o=\xi_2$}\\
-(\tilde D_3)+(D_3),\q\hskip8.2cm\text{if $\xi_o=\xi_3$}\\
2(D_4)-2(D_2)-(\bz_4)+(\bz_2)-2(\tilde D_1)+2(D_1),\q\hskip4cm\text{if $\xi_o=\xi_4$}\\
-(\tilde D_1)+(D_1),\q\hskip8.2cm\text{if $\xi_o=\xi_5$}\\
(\tilde D_6)-2(\tilde D_3)+(\bz_3), \q\hskip7.3cm\text{if $\xi_o=\xi_6$}
\end{cases} 
\end{equation} }
These values, combined with fixed point subspaces of subgroups of $D_{12}$ (cf. Table \ref{table:eig_fix}), lead to the classification result summarized in Table \ref{t:d12_sb}.
To illustrate, in case $\xi_o=\xi_1$, we have two orbit types $(D_1)$ and $(\tilde D_1)$ with non-zero coefficients in $\ome_0$. Using Table \ref{table:eig_fix}, we have that  $\xi_1\not\in \sig(C|_{\fix(H)})$ for all $H>D_1$, thus by Corollary \ref{cor:steady}, there exists at least one bifurcating branch of steady states of symmetry precisely $(D_1)$. Since $(D_1)$ consists of $6$ isotropy subgroups: $\eta^k D_1 \eta^{-k}$ for $k=0,1,\dots, 5$, we derive the form of the solution for each of these isotropies. The same can be applied to $(\tilde D_1)$. 

\renewcommand{\arraystretch}{1.15}
\begin{table}[!htb]
{\small
\hskip-.5cm
\begin{tabular}{|c|c|c|}
\hline
$K$ & $\fix(K)$ & $\sig(C|_{\fix(K)})$\\
\hline && \\[-1.2em]\hline
$D_{12}$& $\{x_1=x_2=\cdots=x_{12}\}$ & $\xi_0$\\
\hline
$D_{6}$& $\{x_1=x_2=\cdots=x_{12}\}$  & $\xi_0$\\
\hline
$\tilde D_{6}$ & $\{x_1=x_3=\cdots=x_{11}, x_2=x_4=\cdots=x_{12}\}$  & $\xi_0,\xi_6$\\
\hline
$\bz_{6}$& $\{x_1=x_3=\cdots=x_{11}, x_2=x_4=\cdots=x_{12}\}$ & $\xi_0,\xi_6$\\
\hline
$D_{4}$ & $\{x_1=x_3=x_4=x_6=x_7=x_9=x_{10}=x_{12}, x_2=x_5=x_8=x_{11}\}$ & $\xi_0,\xi_4$\\
\hline
$\bz_{4}$ & $\{x_1=x_4=x_7=x_{10}, x_2=x_5=x_8=x_{11},x_3=x_6=x_9=x_{12}\}$ & $\xi_0,\xi_4,\xi_4$\\
\hline
$D_{3}$ & $\{x_1=x_4=x_5=x_8=x_9=x_{12}, x_2=x_3=x_6=x_7=x_{10}=x_{11}\}$ & $\xi_0, \xi_3$\\
\hline
$\tilde D_{3}$& $\{x_1=x_3=x_5=x_7=x_9=x_{11}, x_2=x_6=x_{10}, x_4=x_{8}=x_{12}\}$ & $\xi_0,\xi_3,\xi_6$\\
\hline
$\bz_{3}$& $\{x_1=x_5=x_9, x_2=x_6=x_{10}, x_3=x_7=x_{11}, x_4=x_{8}=x_{12}\}$ & $\xi_0,\xi_3,\xi_3,\xi_6$\\
\hline
$D_{2}$ & $\{x_1=x_6=x_7=x_{12}, x_2=x_5=x_8=x_{11}, x_3= x_4=x_{9}=x_{10}\}$ & $\xi_0, \xi_2,\xi_4$\\
\hline
$\tilde D_{2}$& $\{x_1=x_5=x_7=x_{11}, x_2=x_4=x_8=x_{10}, x_3= x_{9},x_6=x_{12}\}$ & $\xi_0,\xi_2,\xi_4,\xi_6$\\
\hline
$\bz_{2}$ & $\{x_1=x_7, x_2=x_8,x_3= x_{9},x_4=x_{10},x_5=x_{11},x_6=x_{12}\}$ & $\xi_0,\xi_2, \xi_2,\xi_4,\xi_4,\xi_6$\\
\hline
$D_{1}$ & $\{x_1=x_{12}, x_2=x_{11},x_3= x_{10},x_4=x_{9},x_5=x_{8},x_6=x_{7}\}$  & $\xi_0, \xi_1,\xi_2,\xi_3,\xi_4,\xi_5$\\
\hline
$\tilde D_{1}$& $\{x_1=x_{11}, x_2=x_{10},x_3= x_{9},x_4=x_{8},x_5=x_{7}\}$ & $\xi_0,\xi_1,\xi_2,\xi_3,\xi_4,\xi_5,\xi_6$\\
\hline
$\bz_{1}$& $\br^{12}$ & $\xi_0,\xi_1,\xi_1,\xi_2, \xi_2,\xi_3,\xi_3,\xi_4,\xi_4,\xi_5,\xi_5,\xi_6$\\
\hline
\end{tabular}
}
\caption{Fixed point subspaces of $K\subset D_{12}$ and eigenvalues of the coupling matrix $C|_{\fix(K)}:\fix(K)\to  \fix(K)$ (up to conjugacy classes of subgroups).} \label{table:eig_fix}
\end{table}

\renewcommand\arraystretch{.72}  
\begin{table}
\hskip.5cm
\begin{tabular}{|c|c|l|l|}
\hline 
\multirow{4}{*}{\bf Critical Eigenvalue}& \multirow{4}{*}{\bf Symmetry }& \multirow{2}{*}{{\bf Form of Bifurcating Steady-States} } & \multirow{4}{*}{{\bf Figure} }  \\
&&&\\
& & \multirow{2}{*}{(for distinct $a,b,c,d,e,f,g\in \br$)}& \\
&&&\\
\hline \hline 
\multirow{6}{*}{$\xi_0$}
& \multirow{6}{*}{$D_{12}$} & \multirow{6}{*}{$(a,a,a,a,a,a,a,a,a,a,a,a)$}&  \multirow{6}{*}{ \includegraphics[width=1.8cm]{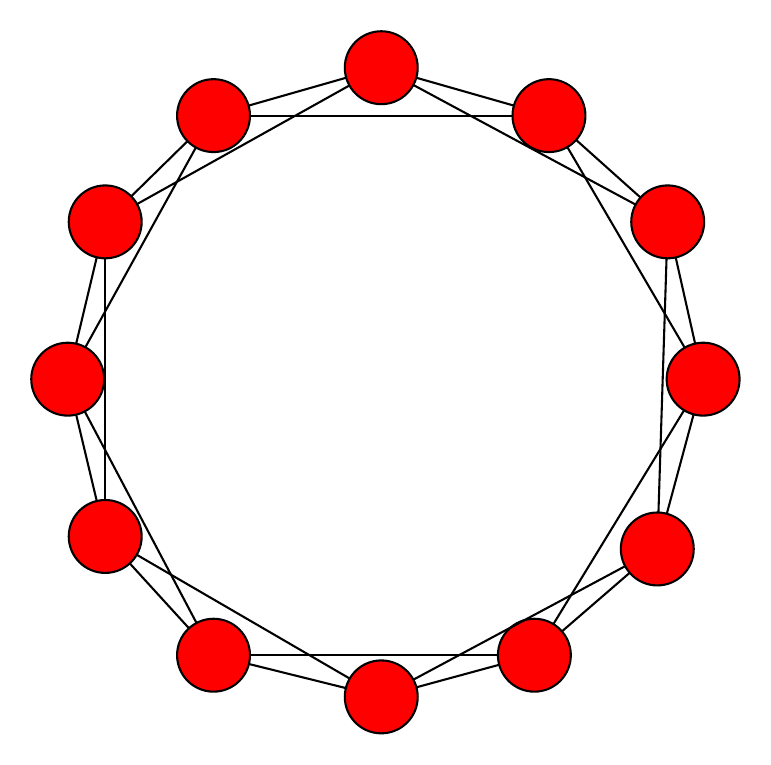}}\\
&&&\\
&&&\\ 
&&&\\
&&&\\ 
&&&\\ 
\hline
\multirow{24}{*}{$\xi_1$ or $\xi_5$}
& \multirow{2}{*}{$D_{1}$} & \multirow{2}{*}{$(a,b,c,d,e,f,f,e,d,c,b,a)$}&  \multirow{12}{*}{\includegraphics[width=2cm]{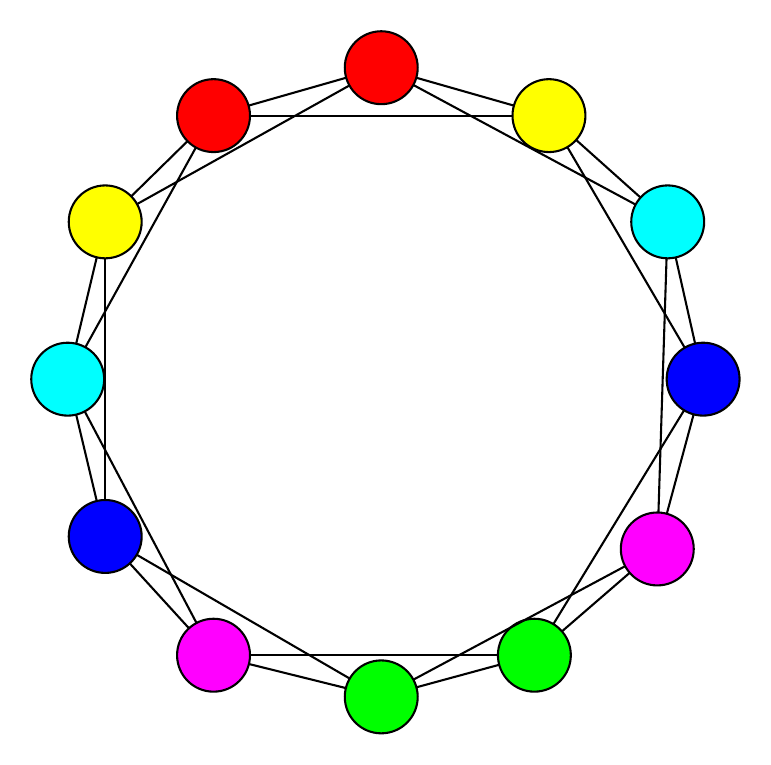}}  \\
&&&\\
\cline{2-3}  
& \multirow{2}{*}{$\eta D_{1} \eta^{-1}$} & \multirow{2}{*}{$(a,a,b,c,d,e,f,f,e,d,c,b)$}&  \\
&&&\\
\cline{2-3}
& \multirow{2}{*}{$\eta^2 D_{1} \eta^{-2}$} & \multirow{2}{*}{$(b,a,a,b,c,d,e,f,f,e,d,c)$}& \\
&&&\\
\cline{2-3}
& \multirow{2}{*}{$\eta^3 D_{1} \eta^{-3}$} & \multirow{2}{*}{$(c,b,a,a,b,c,d,e,f,f,e,d)$}& \\
&&&\\
\cline{2-3}
& \multirow{2}{*}{$\eta^4 D_{1} \eta^{-4}$} & \multirow{2}{*}{$(d,c,b,a,a,b,c,d,e,f,f,e)$}& \\
&&&\\
\cline{2-3}
& \multirow{2}{*}{$\eta^5 D_{1} \eta^{-5}$} & \multirow{2}{*}{$(e, d,c,b,a,a,b,c,d,e,f,f)$}& \\
&&&\\ 
\cline{2-4}\cline{2-4}
& \multirow{2}{*}{$\tilde D_{1}$} & \multirow{2}{*}{$(a,b,c,d,e,f,e,d,c,b,a,g)$}&  \multirow{12}{*}{\includegraphics[width=2cm]{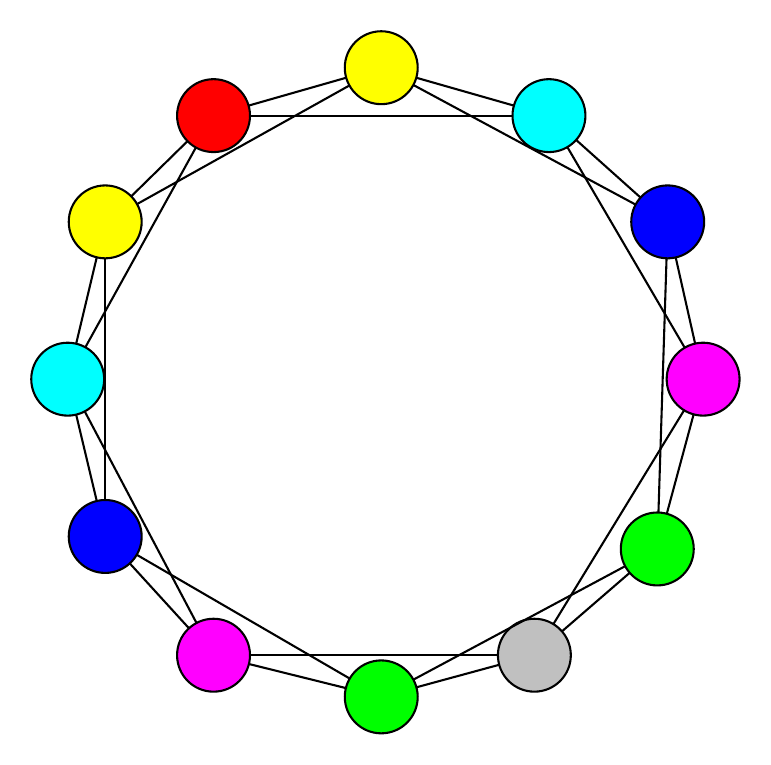}}  \\ 
&&&\\
\cline{2-3}
& \multirow{2}{*}{$\eta \tilde D_{1} \eta^{-1}$} & \multirow{2}{*}{$(g,a,b,c,d,e,f,e,d,c,b,a)$}&  \\
&&&\\
\cline{2-3}
& \multirow{2}{*}{$\eta^2 \tilde D_{1} \eta^{-2}$} & \multirow{2}{*}{$(a,g,a,b,c,d,e,f,e,d,c,b)$}& \\
&&&\\
\cline{2-3}
& \multirow{2}{*}{$\eta^3 \tilde D_{1} \eta^{-3}$} & \multirow{2}{*}{$(b,a,g,a,b,c,d,e,f,e,d,c)$}& \\
&&&\\
\cline{2-3}
& \multirow{2}{*}{$\eta^4 \tilde D_{1} \eta^{-4}$} & \multirow{2}{*}{$(c,b,a,g,a,b,c,d,e,f,e,d)$}& \\
&&&\\
\cline{2-3}
& \multirow{2}{*}{$\eta^5 \tilde D_{1} \eta^{-5}$} & \multirow{2}{*}{$(d,c,b,a,g,a,b,c,d,e,f,e)$}& \\
&&&\\ 
\hline
\multirow{12}{*}{$\xi_2$}
& \multirow{2}{*}{$D_{2}$} & \multirow{2}{*}{$(a,b,c,c,b,a,a,b,c,c,b,a)$}&  \multirow{6}{*}{\includegraphics[width=1.85cm]{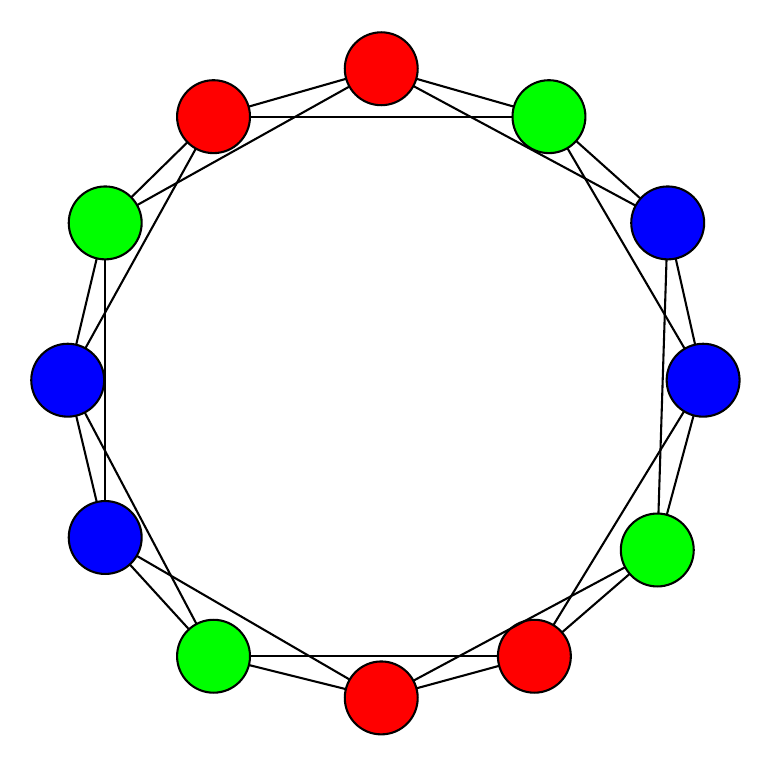}}  \\
&&&\\
\cline{2-3}
& \multirow{2}{*}{$\eta D_{2}\eta^{-1}$} & \multirow{2}{*}{$(a,a,b,c,c,b,a,a,b,c,c,b)$}&  \\
&&&\\
\cline{2-3}
& \multirow{2}{*}{$\eta^2 D_{2}\eta^{-2}$} & \multirow{2}{*}{$(b,a,a,b,c,c,b,a,a,b,c,c)$}&  \\
&&&\\
\cline{2-4}
& \multirow{2}{*}{$\tilde D_{2}$} & \multirow{2}{*}{$(a,b,c,b,a,d,a,b,c,b,a,d)$}&  \multirow{6}{*}{\includegraphics[width=1.85cm]{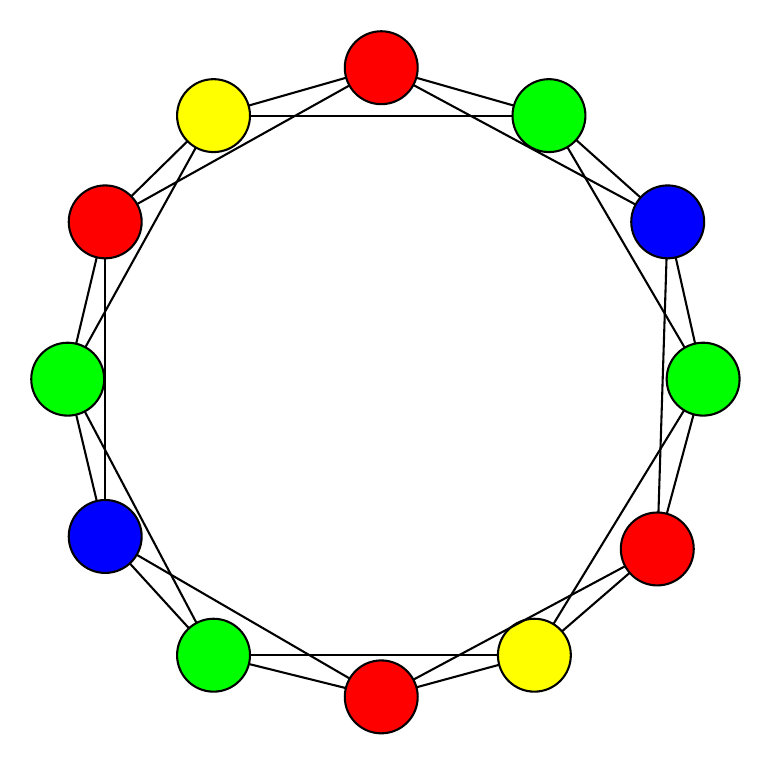}}  \\
&&&\\
\cline{2-3}
& \multirow{2}{*}{$\eta \tilde D_{2}\eta^{-1}$} & \multirow{2}{*}{$(d,a,b,c,b,a,d,a,b,c,b,a)$}&  \\
&&&\\
\cline{2-3}
& \multirow{2}{*}{$\eta^2 \tilde D_{2}\eta^{-2}$} & \multirow{2}{*}{$(a,d,a,b,c,b,a,d,a,b,c,b)$}&  \\
&&&\\
\hline
\multirow{12}{*}{$\xi_3$}
& \multirow{3}{*}{$D_{3}$} & \multirow{3}{*}{$(a,b,b,a,a,b,b,a,a,b,b,a)$}&  \multirow{6}{*}{\includegraphics[width=1.85cm]{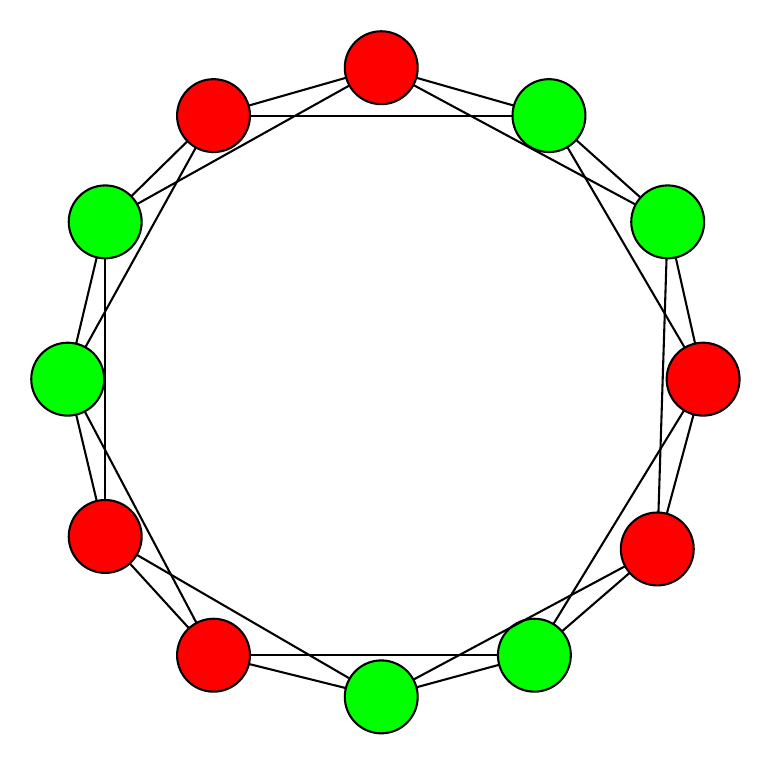}}  \\
&&&\\
&&&\\
\cline{2-3}
& \multirow{3}{*}{$\eta D_{3}\eta^{-1}$} & \multirow{3}{*}{$(a,a,b,b,a,a,b,b,a,a,b,b)$}&  \\
&&&\\
&&&\\
\cline{2-4}
& \multirow{3}{*}{$\tilde D_{3}$} & \multirow{3}{*}{$(a,b,a,c,a,b,a,c,a,b,a,c)$}&  \multirow{6}{*}{\includegraphics[width=1.85cm]{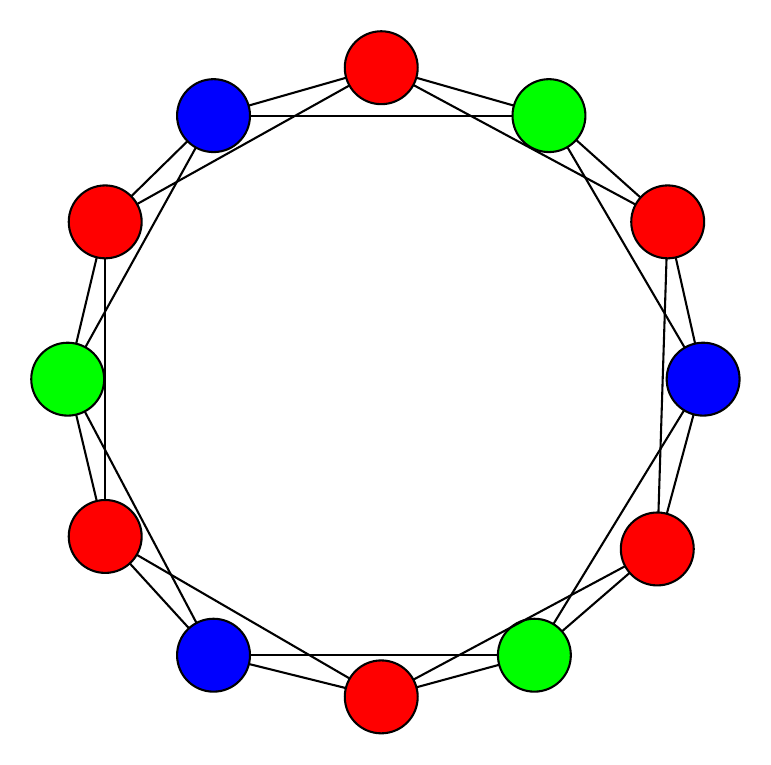}}  \\ 
&&&\\
&&&\\
\cline{2-3}
& \multirow{3}{*}{$\eta \tilde D_{3}\eta^{-1}$} & \multirow{3}{*}{$(c,a,b,a,c,a,b,a,c,a,b,a)$}&  \\  
&&&\\
&&&\\
\hline
\multirow{6}{*}{$\xi_4$}
& \multirow{2}{*}{$D_{4}$} & \multirow{2}{*}{$(a,b,a,a,b,a,a,b,a,a,b,a)$}&  \multirow{6}{*}{\includegraphics[width=1.85cm]{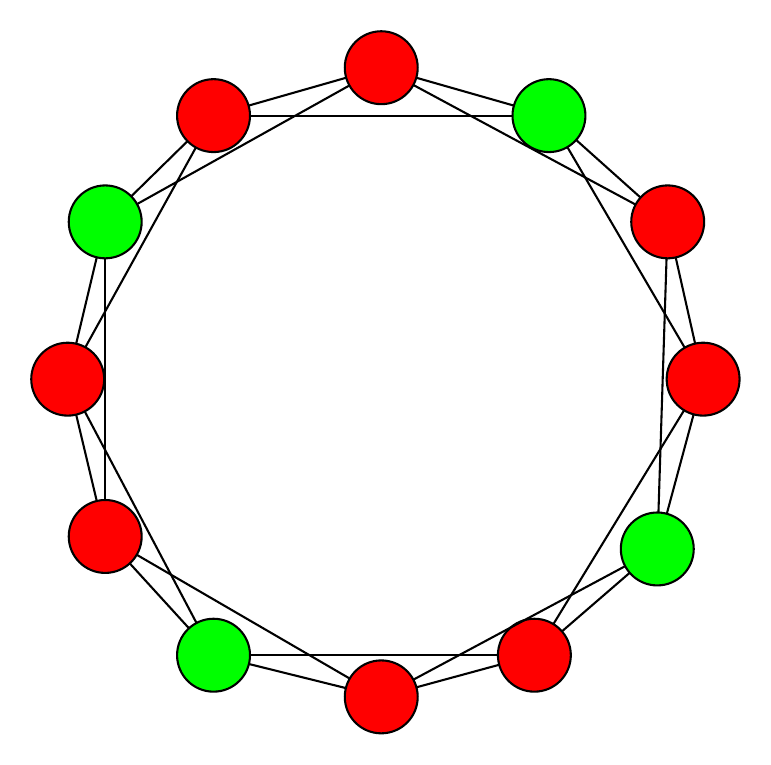}}   \\
&&&\\
\cline{2-3}
& \multirow{2}{*}{$\eta D_{4}\eta^{-1}$} & \multirow{2}{*}{$(a,a,b,a,a,b,a,a,b,a,a,b)$}& \\
&&&\\
\cline{2-3}
& \multirow{2}{*}{$\eta^2 D_{4}\eta^{-2}$} & \multirow{2}{*}{$(b,a,a,b,a,a,b,a,a,b,a,a)$}& \\
&&&\\
\hline
\multirow{6}{*}{$\xi_6$}
& \multirow{6}{*}{$\tilde D_{6}$} & \multirow{6}{*}{$(a,b,a,b,a,b,a,b,a,b,a,b)$}&  \multirow{6}{*}{\includegraphics[width=1.85cm]{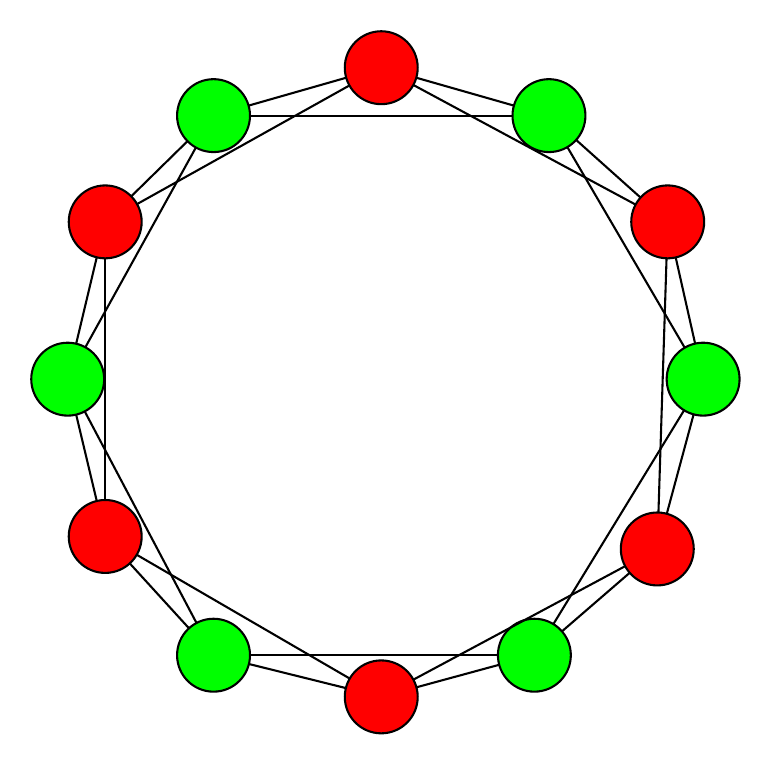}}  \\
&&&\\
&&&\\ 
&&&\\
&&&\\
&&&\\
\hline 
\end{tabular}
\caption{Summary of distinct forms of steady states bifurcating from the equilibrium $x=0$ of the system (\ref{eq:1}) for $n=12$.} \label{t:d12_sb}
\end{table} 

Note that the possible values of $\ome_0$ do not depend on the entries of $C$ directly, but rather on the maximal eigenvalue. For example, if every cell is connected only with its $2$ nearest neighbors, then $\xi_o=\xi_0$ if the coupling is excitatory; and $\xi_o=\xi_6$ if it is inhibitory. 
That is, this configuration does not allow $\xi_o$ to be $\xi_i$ for $i\in\{1,2,3,4,5\}$. However, if each cell $i$ is connected to its $4$ nearest neighbors, with coupling strength $d_1$ to cells $(i\pm 1)$ and with strength $d_2$ to $(i\pm 2)$, then every eigenvalue can be maximal for some choices of $d_1,d_2$. See Figure~\ref{F:max_eig_G12} for their precise relation.
 
\END  
\end{ex}
Besides those values listed in (\ref{eq:ome_0_12}), $\ome_0$ can take other values if $\xi_o$ is non-simple.  For example, the coupling configuration with four nearest neighbors allows double critical eigenvalues as shown in Figure \ref{F:max_eig_G12}, when the relation between $d_1,d_2$ follows one of the lines there. In this case, one can work out the index set $I$ and compute $\ome_0$ individually. The same result using Theorem \ref{thm:steady} and Corollary \ref{cor:steady} applies.

\subsection{Hopf Bifurcations for Bidirectional Rings}

The complexification of $E(\xi_j)$ for $\xi_j\in \sig(C)$ satisfies
\begin{equation}\label{eq:E_xi_c}
\begin{cases}
E^c(\xi_0)=\UU_0,\\
E^c(\xi_j)=E^c(\xi_{n-j})=\UU_j\q\text{for $0< j<\frac n2$}\\
E^c(\xi_{\frac n2})=\UU_{(\frac n2+2)},\q \text{if $n$ is even}
\end{cases}
\end{equation}
(see Example \ref{ex:d12_rep} for notations $\UU_j$). It follows that the non-zero integers $m_j$'s in (\ref{eq:ome1}) are 
\begin{equation}\label{eq:d12_mj_b}
m_j=1,\q \text{for $j\in I$.}
\end{equation}
where $I$ is given by (\ref{eq:I}). The bifurcation invariant $\ome_1$ can then be computed using (\ref{eq:ome1}) together with (\ref{eq:d12_mj_b}).

\begin{ex}\rm  (Simple critical eigenvalues for bidirectional rings.) Following Example \ref{ex:sb_dn}, we take $C$ that satisfies (\ref{eq:cij}) with $\Gamma=D_n$. 
The ($\frac n2+1$) or ($\frac{n+1}{2}$) different entries of $C$ decide which eigenvalue is minimal. Let $\xi_o\in \sig(C)$ be the minimal eigenvalue. Assume that $\xi_o$ is simple. 
Then the index set $I$ is a singleton and there are only $\frac n2$ or $\frac {n-1}2$ different  values of  $\ome_0$, depending on whether $n$ is even or odd, respectively. Again for $n=12$, we have
{\[\ome_1=
\begin{cases}
-(D_{12}),\q\hskip4cm\text{if $\xi_o=\xi_0$}\\
-(\bz_{12}^{t_1})-(D_2^d)-(\tilde D_2^d)+(\bz_2^d),\q\hskip.8cm\text{if $\xi_o=\xi_1$}\\
-(\bz_{12}^{t_2})-(D_4^d)-(D_4^{\hat d})+(\bz_4^d),\q\hskip.8cm\text{if $\xi_o=\xi_2$}\\
-(\bz_{12}^{t_3})-(D_6^d)-(\tilde D_6^d)+(\bz_6^d),\q\hskip.8cm\text{if $\xi_o=\xi_3$}\\
-(\bz_{12}^{t_4})-(D_4^z)-(D_4)+(\bz_4), \q\hskip.8cm\text{if $\xi_o=\xi_4$}\\
-(\bz_{12}^{t_5})-(D_2^d)-(\tilde D_2^d)+(\bz_2^d),\q\hskip.8cm\text{if $\xi_o=\xi_5$}\\
-(D_{12}^{\hat d}), \q\hskip4cm\text{if $\xi_o=\xi_6$}
\end{cases} 
\] }
To find dominating and secondary dominating orbit types, consider the maximal orbit types in $\UU_i$'s. They are \\
$(D_{12})$ in $\UU_0$;\\
$(\bz_{12}^{t_1})$, $(D_2^d)$  $(\tilde D_2^d)$ in $\UU_1$; \\
$(\bz_{12}^{t_2})$, $(D_4^d)$   $(D_4^{\hat d})$ in $\UU_2$; \\
$(\bz_{12}^{t_3})$, $(D_6^d)$,   $(\tilde D_6^d)$ in $\UU_3$; \\
$(\bz_{12}^{t_4})$, $(D_4^z)$   $(D_4)$ in $\UU_4$; \\
$(\bz_{12}^{t_5})$, $(D_2^d)$  $(\tilde D_2^d)$ in $\UU_5$; \\
$(D_{12}^{\hat d})$ in $\UU_6$. \\
Among these orbit types, we find the dominating orbit types: $(D_{12})$, $(D_{12}^{\hat d})$, $(\bz_{12}^{t_1})$, $(\bz_{12}^{t_2})$, $(\bz_{12}^{t_3})$, $(\bz_{12}^{t_4})$, $(\bz_{12}^{t_5})$, $(D_6^d)$, $(\tilde D_6^d)$, $(D_4^d)$, $(D_4^z)$ and the secondary dominating orbit types: $(D_2^d)$, $(\tilde D_2^d)$, $(D_4)$, $(D_4^{\hat d})$.
The values of $\ome_1$ together with the dominating and secondary dominating orbit types lead to the classification result summarized in Table \ref{t:d12_hb_1}--\ref{t:d12_hb_3} using Proposition \ref{cor:Hopf}.

\renewcommand\arraystretch{.7}   
\begin{table}
\hspace*{-1.2cm}
\hskip-1cm
\begin{tabular}{|c|c|l|c|}
\hline 
\multirow{2}{*}{\bf Critical }& \multirow{4}{*}{\bf Symmetry }& \multirow{2}{*}{{\bf Form of Oscillating-States} } & \multirow{4}{*}{{\bf Figure} }  \\
&&&\\
\multirow{2}{*}{\bf Eigenvalue} &&\multirow{2}{*}{(for some period $T$)}&\\
& & & \\
\hline \hline
\multirow{6}{*}{$\xi_0$}
& \multirow{6}{*}{$D_{12}$} & \multirow{6}{*}{$(x(t),x(t),x(t),\dots,x(t))$}&  \multirow{6}{*}{ \includegraphics[width=1.8cm]{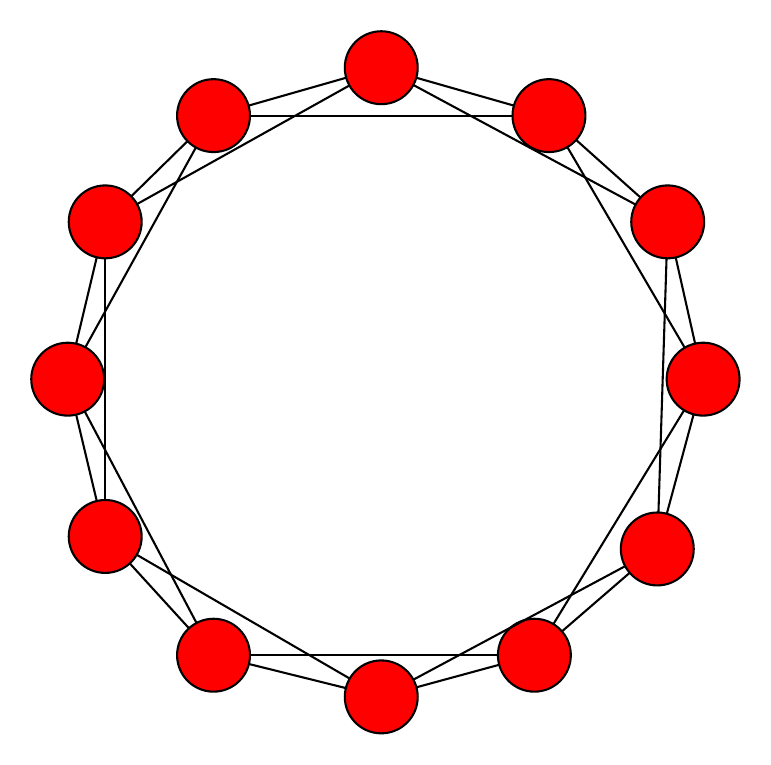}}\\
&&&\\
&&&\\
&&&\\
&&&\\
&&&\\
\hline

\multirow{12}{*}{$\xi_1$}
& \multirow{6}{*}{$\bz_{12}^{t_1}$} & \multirow{6}{*}{$(x_1(t),x_1(t+\frac{T}{12}),x_1(t+\frac{2T}{12})\dots, x_1(t+\frac{11T}{12}))$}&  \multirow{6}{*}{\includegraphics[width=4.6cm]{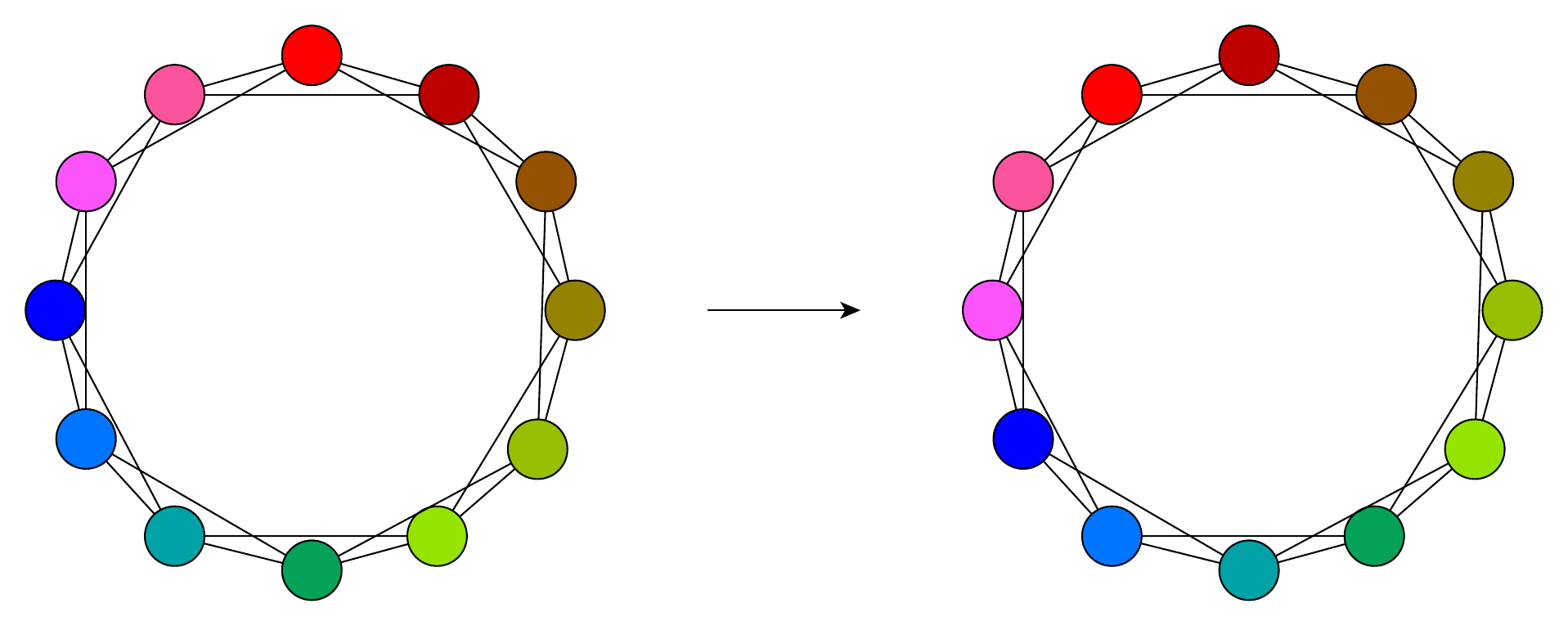}}  \\
&&&{\small $+\frac{T}{12}$}\\
&&&\\
&&&\\
&&&\\
&&&\\
\cline{2-4}
& \multirow{6}{*}{$\varsigma \bz_{12}^{t_1} \varsigma^{-1}$} & \multirow{6}{*}{$(x_1(t),x_1(t+\frac{11 T}{12}),x_1(t+\frac{10T}{12})\dots, x_1(t+\frac{T}{12}))$}&   \multirow{6}{*}{\includegraphics[width=4.6cm]{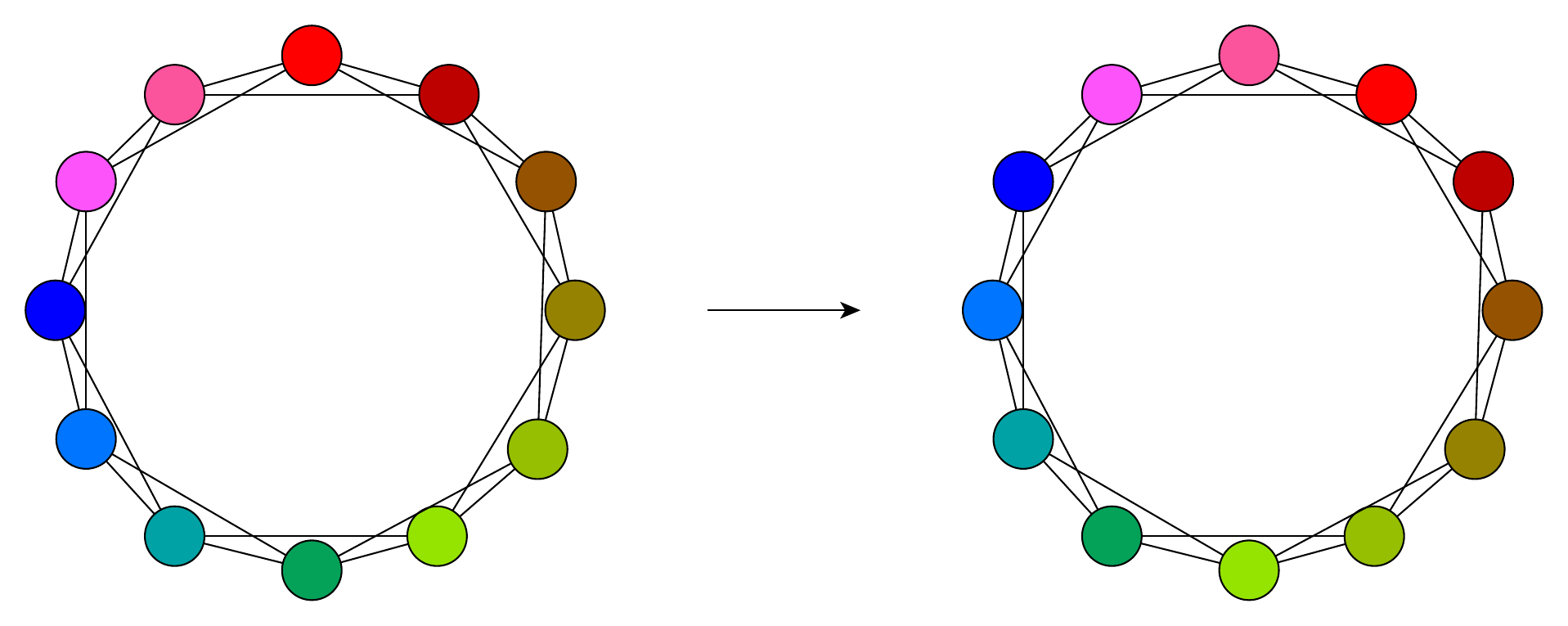}}  \\
&&&{\small $+\frac{T}{12}$}\\
&&&\\
&&&\\
&&&\\
&&&\\
\hline
\multirow{48}{*}{$\xi_1$ or $\xi_5$} & \multirow{4}{*}{$D_{2}^d$} & \multirow{2}{*}{$(x_1(t),x_2(t),x_3(t),x_3(t+\frac T2),x_2(t+\frac T2),x_1(t+\frac T2),$}&  \multirow{24}{*}{\includegraphics[width=5cm]{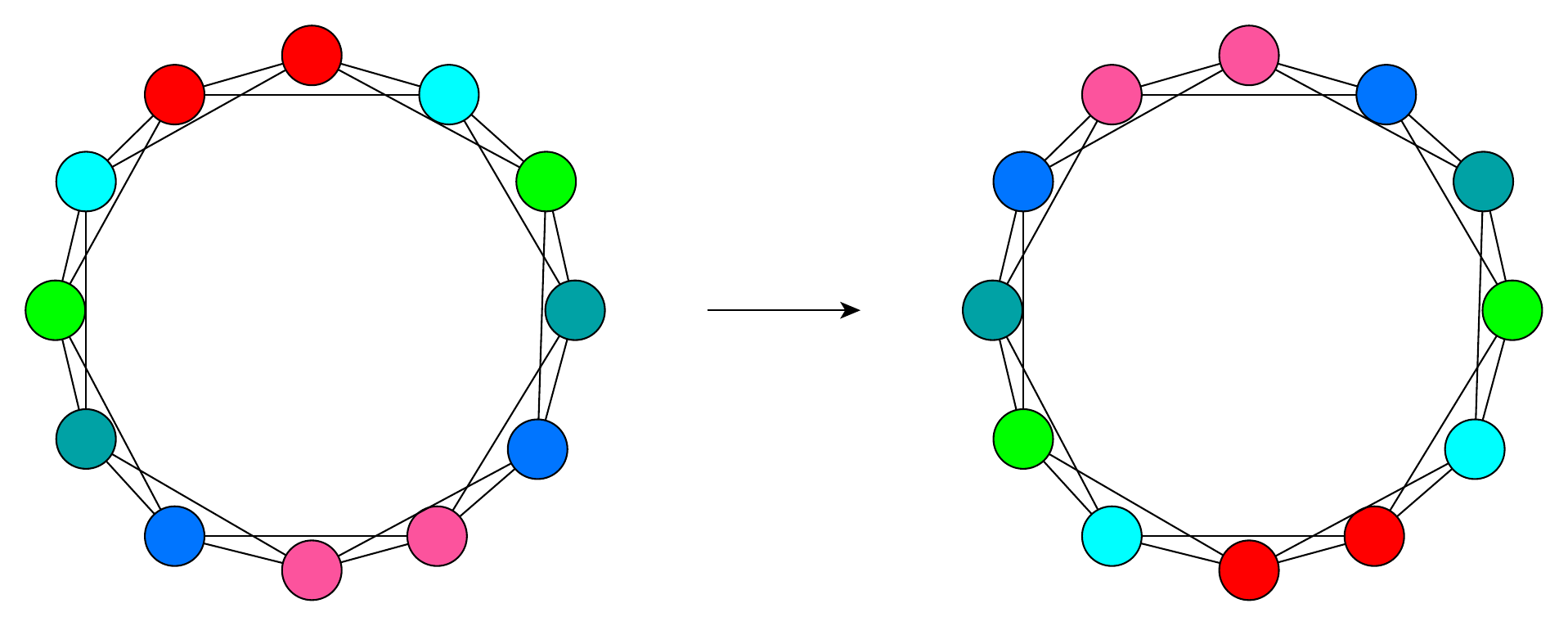}} \\ 
&&&\\
&&\multirow{2}{*}{$x_1(t+\frac T2),x_2(t+\frac T2),x_3(t+\frac T2),x_3(t),x_2(t),x_1(t)) $}&\\
&&&\\
\cline{2-3}
& \multirow{4}{*}{$\eta D_2^d \eta^{-1}$} & \multirow{2}{*}{$(x_1(t),x_1(t),x_2(t),x_3(t),x_3(t+\frac T2),x_2(t+\frac T2),$}& \\
&&&\\
&&\multirow{2}{*}{$x_1(t+\frac T2),x_1(t+\frac T2),x_2(t+\frac T2),x_3(t+\frac T2),x_3(t),x_2(t))$}&\\
&&&\\
\cline{2-3} 
& \multirow{4}{*}{$\eta^2 D_{2}^d \eta^{-2}$} & \multirow{2}{*}{$(x_2(t),x_1(t),x_1(t),x_2(t),x_3(t),x_3(t+\frac T2),x_2(t+\frac T2),$}& \\
&&&\\
&&\multirow{2}{*}{$x_1(t+\frac T2),x_1(t+\frac T2),x_2(t+\frac T2),x_3(t+\frac T2),x_3(t))$}& {\small $+\frac{T}{2}$}\\
&& &\\
\cline{2-3}
& \multirow{4}{*}{$\eta^3 D_{2}^d \eta^{-3}$} & \multirow{2}{*}{$(x_3(t),x_2(t),x_1(t),x_1(t),x_2(t),x_3(t),x_3(t+\frac T2),$}& \\
&&&\\ 
&&\multirow{2}{*}{$x_2(t+\frac T2),x_1(t+\frac T2),x_1(t+\frac T2),x_2(t+\frac T2),x_3(t+\frac T2))$}&\\
&&&\\
\cline{2-3}
& \multirow{4}{*}{$\eta^4 D_{2}^d \eta^{-4}$} & \multirow{2}{*}{$(x_3(t+\frac T2),x_3(t),x_2(t),x_1(t),x_1(t),x_2(t),x_3(t),$}& \\
&&&\\ 
&&\multirow{2}{*}{$x_3(t+\frac T2),x_2(t+\frac T2),x_1(t+\frac T2),x_1(t+\frac T2),x_2(t+\frac T2))$}&\\
&&&\\
\cline{2-3}
& \multirow{4}{*}{$\eta^5 D_{2}^d \eta^{-5}$} & \multirow{2}{*}{$(x_2(t+\frac T2), x_3(t+\frac T2),x_3(t),x_2(t),x_1(t),x_1(t),$}& \\
&&&\\ 
&&\multirow{2}{*}{$x_2(t),x_3(t),x_3(t+\frac T2),x_2(t+\frac T2),x_1(t+\frac T2),x_1(t+\frac T2))$}&\\
&&&\\

\cline{2-4}
& \multirow{4}{*}{$\tilde D_{2}^d$} & \multirow{2}{*}{$(x_1(t),x_2(t),x_3(t),x_2(t+\frac T2),x_1(t+\frac T2),x_4(t),$}&  \multirow{24}{*}{\includegraphics[width=5cm]{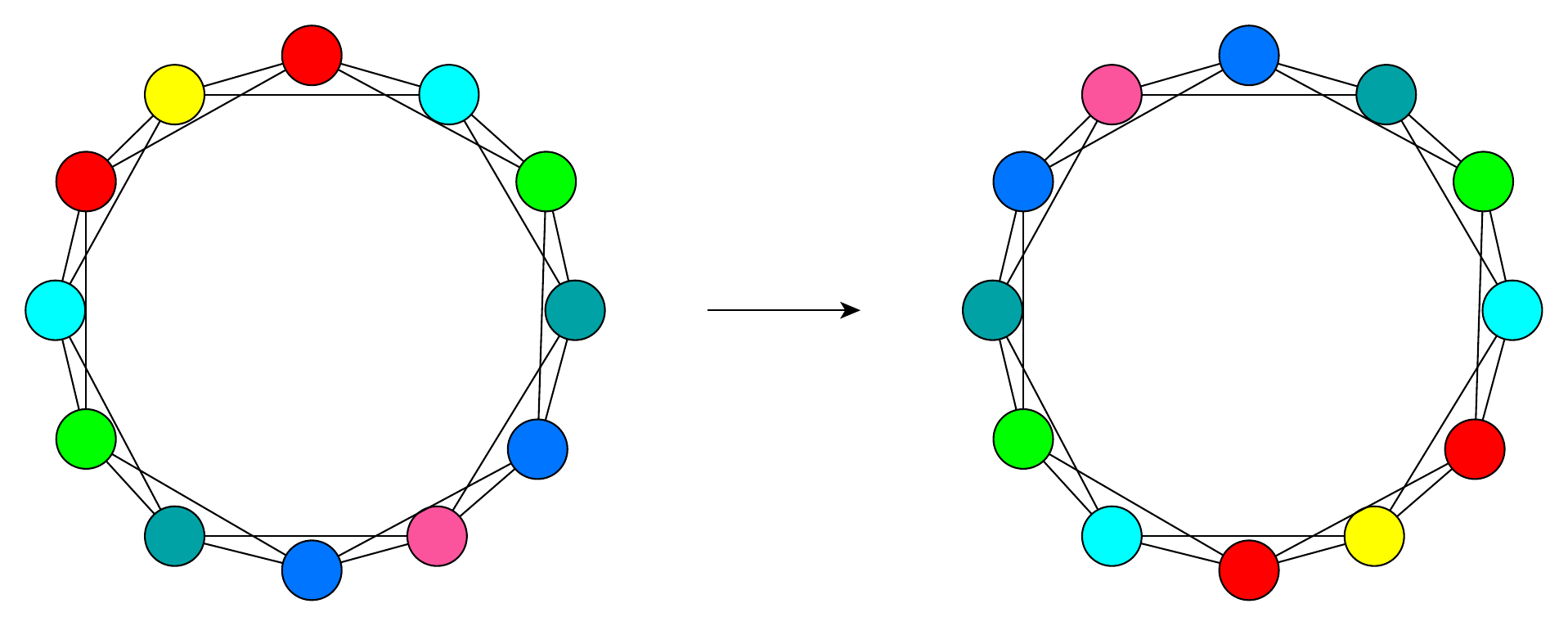}} \\
&&&\\
&&\multirow{2}{*}{$x_1(t+\frac T2),x_2(t+\frac T2),x_3(t),x_2(t),x_1(t),x_4(t+\frac T2)) $}&\\
&&&\\
\cline{2-3}

& \multirow{4}{*}{$\eta\tilde D_2^d \eta^{-1}$} & \multirow{2}{*}{$(x_4(t+\frac T2), x_1(t),x_2(t),x_3(t),x_2(t+\frac T2),x_1(t+\frac T2),$}& \\
&&&\\
&&\multirow{2}{*}{$x_4(t),x_1(t+\frac T2),x_2(t+\frac T2),x_3(t),x_2(t),x_1(t)) $}&\\
&&&\\
\cline{2-3} 
& \multirow{4}{*}{$\eta^2 \tilde D_{2}^d \eta^{-2}$} & \multirow{2}{*}{$(x_1(t),x_4(t+\frac T2), x_1(t),x_2(t),x_3(t),x_2(t+\frac T2),$}& \\
&&&\\
&&\multirow{2}{*}{$x_1(t+\frac T2),x_4(t),x_1(t+\frac T2),x_2(t+\frac T2),x_3(t),x_2(t))$}& {\small $+\frac{T}{2}$}\\
&& &\\
\cline{2-3}
& \multirow{4}{*}{$\eta^3 \tilde D_{2}^d \eta^{-3}$} & \multirow{2}{*}{$(x_2(t),x_1(t),x_4(t+\frac T2), x_1(t),x_2(t),x_3(t),x_2(t+\frac T2),$}& \\
&&&\\ 
&&\multirow{2}{*}{$x_1(t+\frac T2),x_4(t),x_1(t+\frac T2),x_2(t+\frac T2),x_3(t))$}&\\
&&&\\
\cline{2-3}
& \multirow{4}{*}{$\eta^4 \tilde D_{2}^d \eta^{-4}$} & \multirow{2}{*}{$(x_3(t),x_2(t),x_1(t),x_4(t+\frac T2), x_1(t),x_2(t),x_3(t),$}& \\
&&&\\ 
&&\multirow{2}{*}{$x_2(t+\frac T2),x_1(t+\frac T2),x_4(t),x_1(t+\frac T2),x_2(t+\frac T2))$}&\\
&&&\\
\cline{2-3}
& \multirow{4}{*}{$\eta^5 \tilde D_{2}^d \eta^{-5}$} & \multirow{2}{*}{$(x_2(t+\frac T2),x_3(t),x_2(t),x_1(t),x_4(t+\frac T2), x_1(t),$}& \\
&&&\\ 
&&\multirow{2}{*}{$x_2(t),x_3(t),x_2(t+\frac T2),x_1(t+\frac T2),x_4(t),x_1(t+\frac T2))$}&\\
&&&\\

\hline 
\end{tabular}
\caption{Summary of distinct forms of oscillating states bifurcating from the equilibrium $x=0$ of the system (\ref{eq:1}), where cells are coupled to their nearest and next nearest neighbors (Part I).} \label{t:d12_hb_1}
\end{table}

\begin{table}

\hspace*{-1.2cm}
\hskip-1cm
\begin{tabular}{|c|c|l|c|}
\hline 
\multirow{2}{*}{\bf Critical }& \multirow{4}{*}{\bf Symmetry }& \multirow{2}{*}{{\bf Form of Oscillating-States} } & \multirow{4}{*}{{\bf Figure} }  \\
&&&\\
\multirow{2}{*}{\bf Eigenvalue} &&\multirow{2}{*}{(for some period $T$)}&\\
& & & \\
\hline \hline
\multirow{36}{*}{$\xi_2$}
& \multirow{6}{*}{$\bz_{12}^{t_2}$} & \multirow{3}{*}{$(x_1(t),x_1(t+\frac{T}{6}),x_1(t+\frac{2T}{6}),\dots, x_1(t+\frac {5T}{6}),$}&  \multirow{6}{*}{\includegraphics[width=4.7cm]{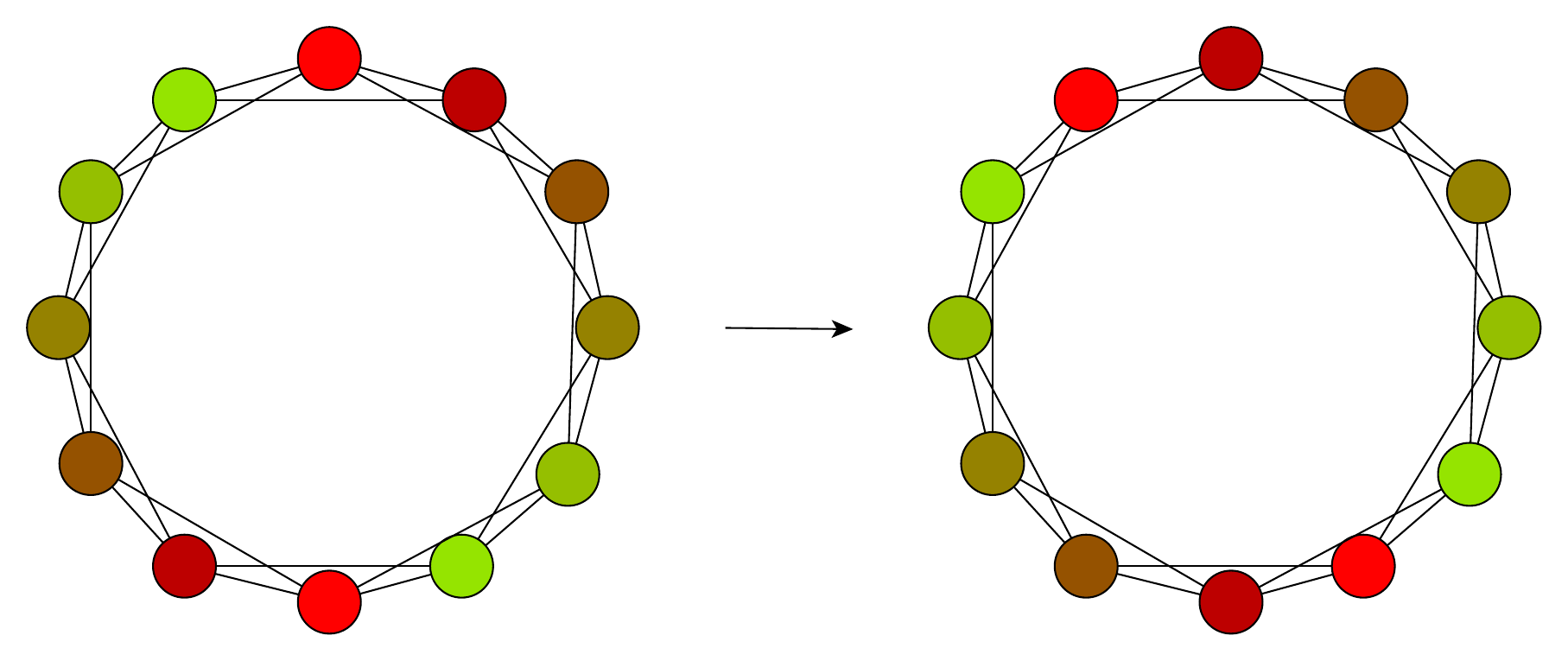}}  \\
&&&{\small $+\frac{T}{6}$}\\
&&&\\
&&\multirow{3}{*}{$x_1(t),x_1(t+\frac{T}{6}),x_1(t+\frac{2T}{6}),\dots, x_1(t+\frac{5T}{6}))$}&\\
&&&\\
&&&\\
\cline{2-4}
& \multirow{6}{*}{$\varsigma \bz_{12}^{t_2} \varsigma^{-1}$} & \multirow{3}{*}{$(x_1(t),x_1(t+\frac{5 T}{6}),x_1(t+\frac{4T}{6})\dots, x_1(t+\frac T6),$}&   \multirow{7}{*}{\includegraphics[width=4.7cm]{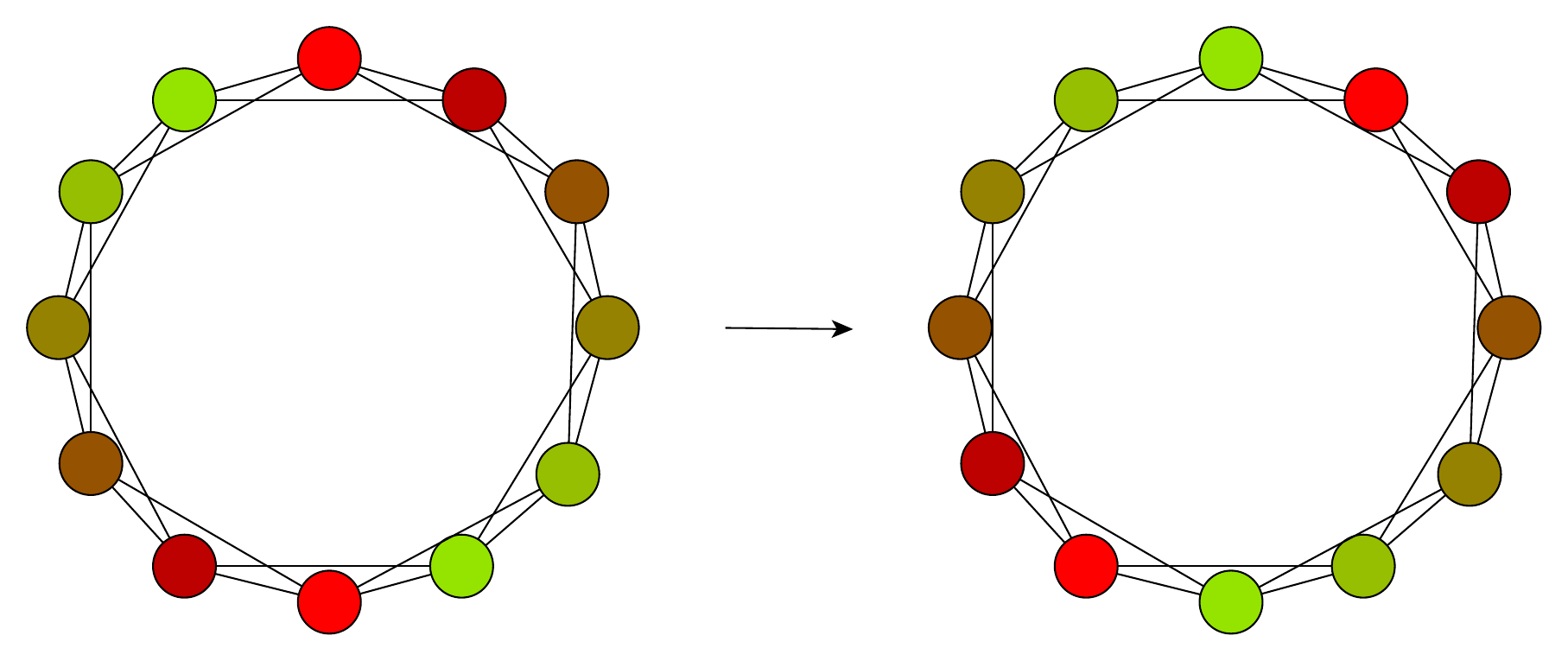}}  \\
&&&{\small $+\frac{T}{6}$}\\
&&&\\
&&\multirow{3}{*}{$x_1(t),x_1(t+\frac{5 T}{6}),x_1(t+\frac{4T}{6})\dots, x_1(t+\frac T6))$}&\\
&&&\\
&&&\\
\cline{2-4}
& \multirow{4}{*}{$D_{4}^d$} & \multirow{2}{*}{$(x_1(t),x_2(t),x_1(t+\frac{T}{2}),x_1(t+\frac{T}{2}), x_2(t),x_1(t), $}&  \multirow{12}{*}{\includegraphics[width=4.8cm]{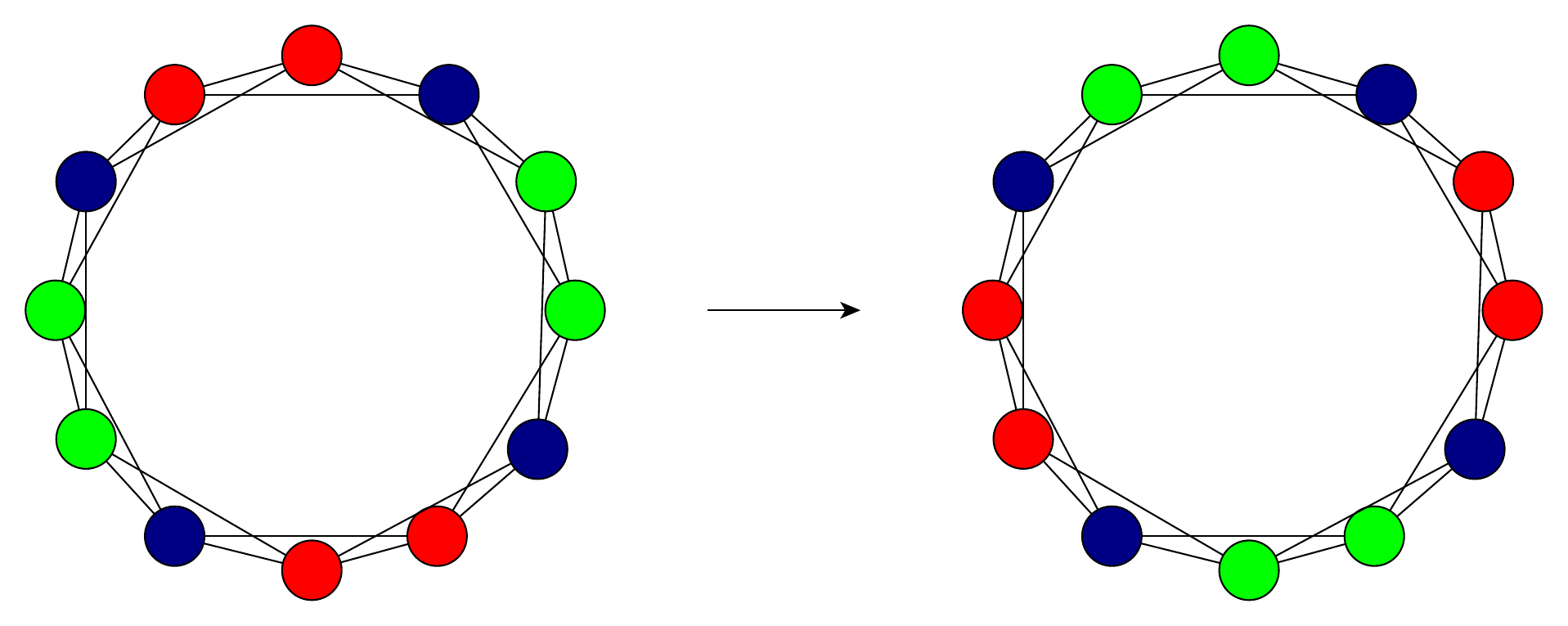}} \\ 
&&&\\
&&\multirow{2}{*}{$x_1(t),x_2(t), x_1(t+\frac{T}{2}), x_1(t+\frac{T}{2}), x_2(t), x_1(t)) $}&\\
&&&\\
\cline{2-3}
& \multirow{4}{*}{$\eta D_4^d \eta^{-1}$} & \multirow{2}{*}{$(x_1(t),x_1(t),x_2(t),x_1(t+\frac{T}{2}),x_1(t+\frac{T}{2}), x_2(t),$}&{\small $+\frac{T}{2}$} \\
&&&\\
&&\multirow{2}{*}{$x_1(t),x_1(t),x_2(t), x_1(t+\frac{T}{2}), x_1(t+\frac{T}{2}), x_2(t))$}&\\
&&&\\
\cline{2-3} 
& \multirow{4}{*}{$\eta^2 D_{4}^d \eta^{-2}$} & \multirow{2}{*}{$(x_2(t),x_1(t),x_1(t),x_2(t),x_1(t+\frac{T}{2}),x_1(t+\frac{T}{2}), $}& \\
&&&\\
&&\multirow{2}{*}{$x_2(t),x_1(t),x_1(t),x_2(t), x_1(t+\frac{T}{2}), x_1(t+\frac{T}{2}))$}& \\
&& &\\
\cline{2-4}
& \multirow{4}{*}{$D_{4}^{\hat d}$} & \multirow{2}{*}{$(x_1(t),x_2(t),x_1(t),x_1(t+\frac T2),x_2(t+\frac T2),x_1(t+\frac T2), $}&  \multirow{12}{*}{\includegraphics[width=4.8cm]{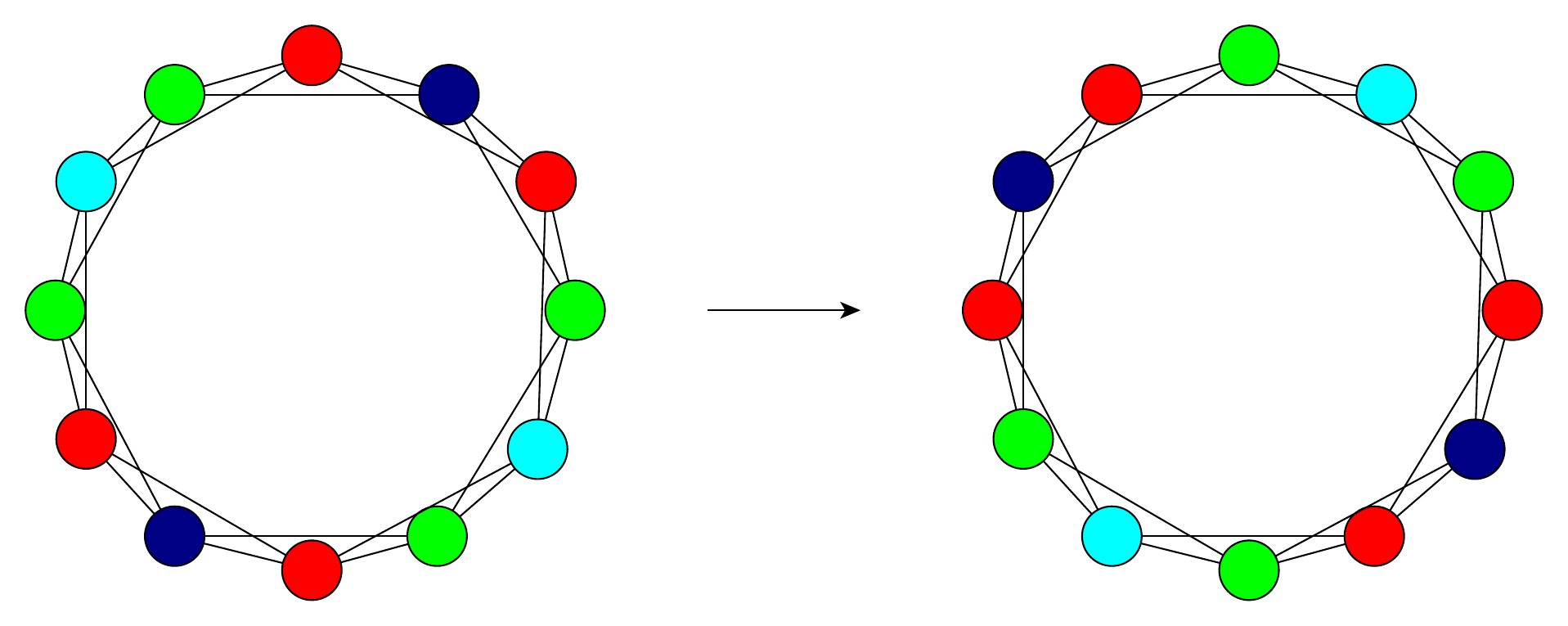}} \\
&&&\\
&&\multirow{2}{*}{$x_1(t),x_2(t),x_1(t),x_1(t+\frac T2),x_2(t+\frac T2),x_1(t+\frac T2)) $}&\\
&&&\\
\cline{2-3}

& \multirow{4}{*}{$\eta D_{4}^{\hat d}\eta^{-1}$} & \multirow{2}{*}{$(x_1(t+\frac T2), x_1(t),x_2(t),x_1(t),x_1(t+\frac T2),x_2(t+\frac T2), $}& {\small $+\frac{T}{2}$}\\
&&&\\
&&\multirow{2}{*}{$x_1(t+\frac T2), x_1(t),x_2(t),x_1(t),x_1(t+\frac T2),x_2(t+\frac T2)) $}&\\
&&&\\
\cline{2-3}

& \multirow{4}{*}{$\eta^2 D_{4}^{\hat d}\eta^{-2}$} & \multirow{2}{*}{$(x_2(t+\frac T2),x_1(t+\frac T2), x_1(t),x_2(t),x_1(t),x_1(t+\frac T2), $}& \\
&&&\\
&&\multirow{2}{*}{$x_2(t+\frac T2),x_1(t+\frac T2), x_1(t),x_2(t),x_1(t),x_1(t+\frac T2)) $}&\\
&&&\\
\hline

\multirow{24}{*}{$\xi_3$}
& \multirow{6}{*}{$\bz_{12}^{t_3}$} & \multirow{3}{*}{$(x_1(t),x_1(t+\frac T4),x_1(t+\frac T2),x_1(t+\frac {3T}{4}), $}&  \multirow{6}{*}{\includegraphics[width=4.6cm]{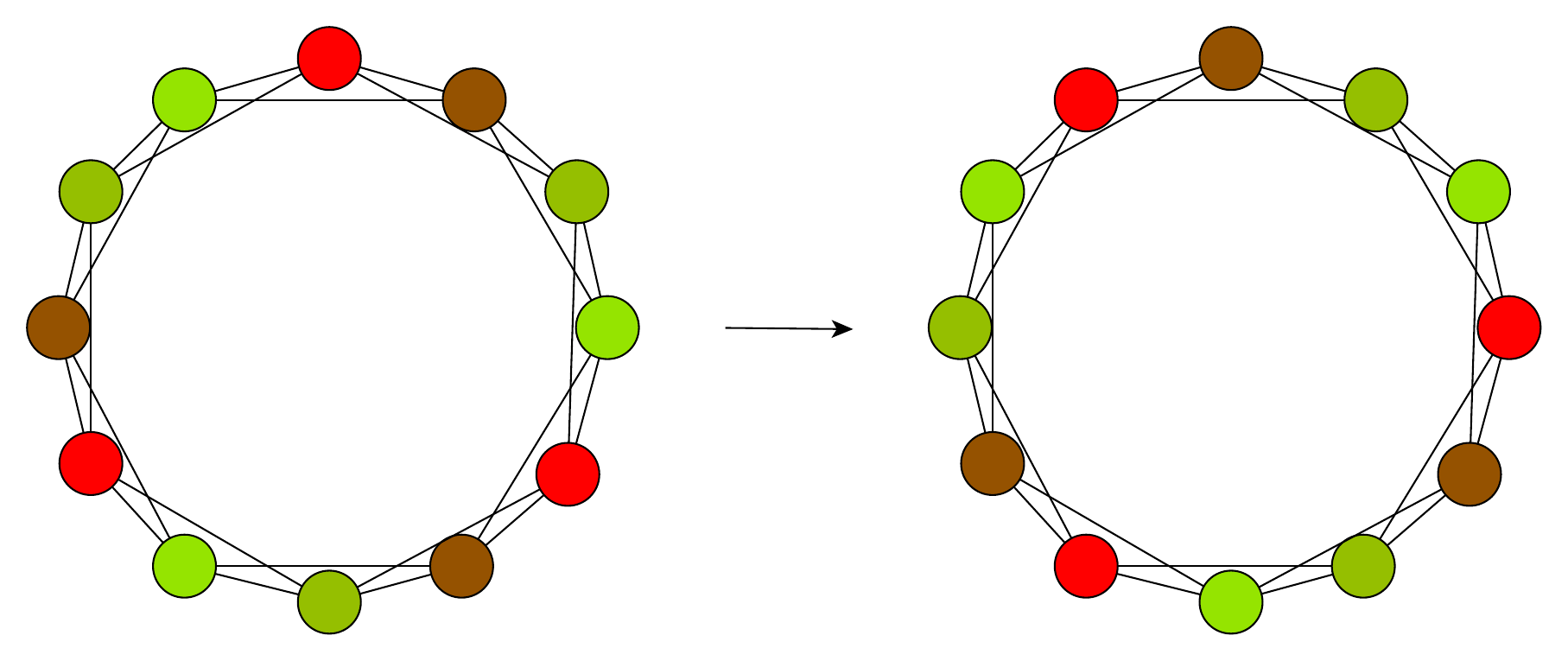}}  \\
&&&{\small $+\frac{T}{4}$}\\
&&&\\
&&\multirow{3}{*}{$x_1(t),x_1(t+\frac T4), \dots, x_1(t+\frac {3T}{4}))$}&\\
&&&\\
&&&\\
\cline{2-4}
& \multirow{6}{*}{$\varsigma \bz_{12}^{t_3} \varsigma^{-1}$} & \multirow{3}{*}{$(x_1(t),x_1(t+\frac {3T}{4}),x_1(t+\frac T2),x_1(t+\frac {T}{4}),$}&   \multirow{6}{*}{\includegraphics[width=4.6cm]{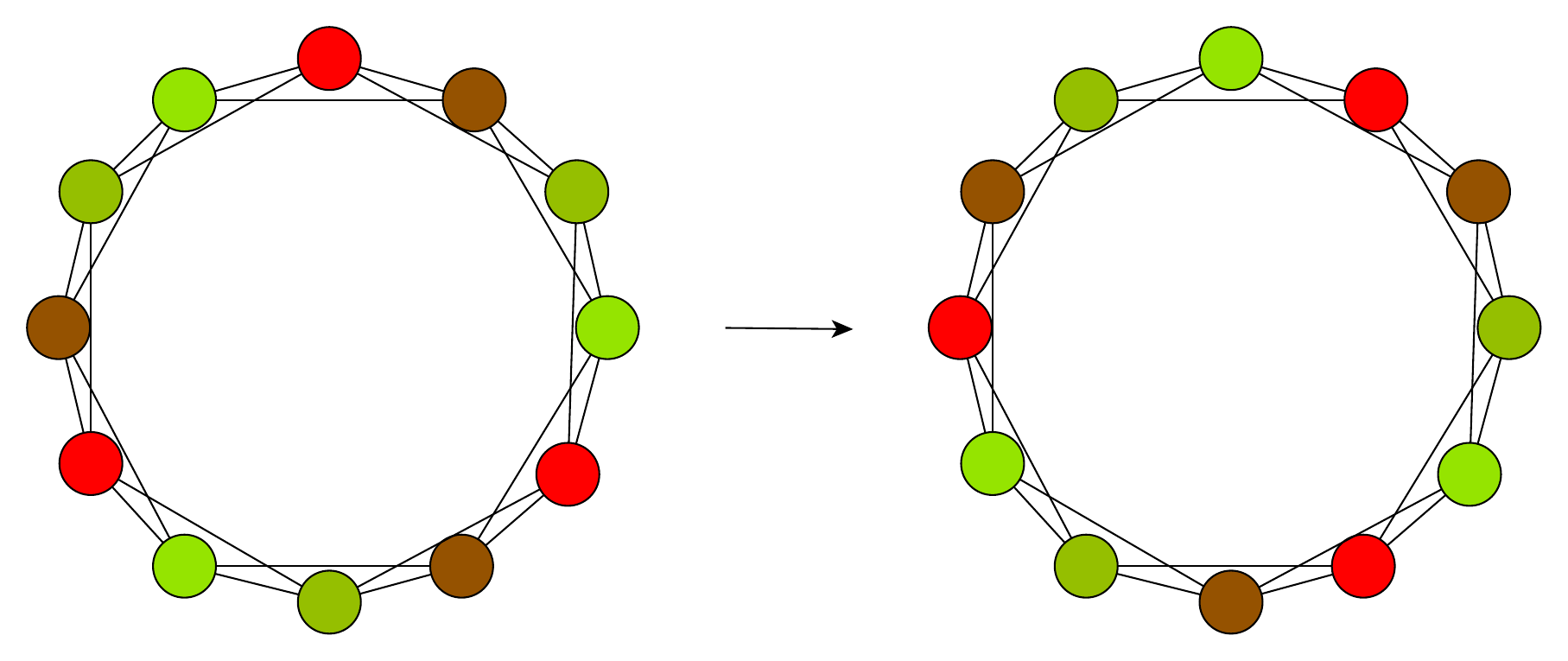}}  \\
&&&{\small $+\frac{T}{4}$}\\
&&&\\
&&\multirow{3}{*}{$x_1(t),x_1(t+\frac {3T}{4}), \dots, x_1(t+\frac {T}{4}))$}&\\
&&&\\
&&&\\
\cline{2-4}
& \multirow{4}{*}{$D_6^d$} & \multirow{2}{*}{$(x_1(t),x_1(t+\frac T2),x_1(t+\frac T2),x_1(t), x_1(t),x_1(t+\frac T2),$}& \multirow{8}{*}{\includegraphics[width=4.6cm]{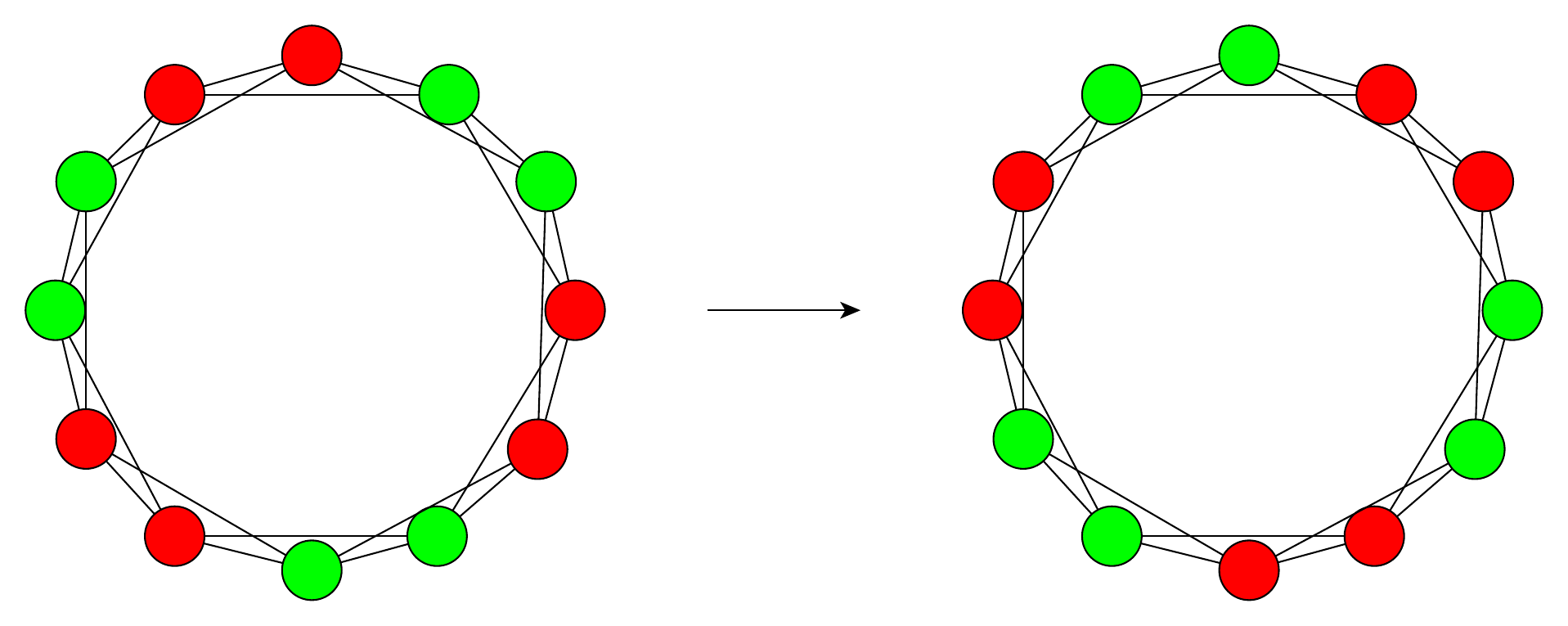}}\\
&&&\\
&&\multirow{2}{*}{$x_1(t+\frac T2),x_1(t),x_1(t),x_1(t+\frac T2),x_1(t+\frac T2),x_1(t)) $}&{\small $+\frac{T}{2}$}\\
&&&\\
\cline{2-3}
& \multirow{4}{*}{$\eta D_6^d \eta^{-1}$} & \multirow{2}{*}{$(x_1(t),x_1(t),x_1(t+\frac T2),x_1(t+\frac T2),x_1(t), x_1(t),$}& \\
&&&\\
&&\multirow{2}{*}{$x_1(t+\frac T2),x_1(t+\frac T2),x_1(t),x_1(t),x_1(t+\frac T2),x_1(t+\frac T2)) $}&\\
&&&\\
\cline{2-4}

& \multirow{4}{*}{$\tilde D_6^d$} & \multirow{2}{*}{$(x_1(t),x_2(t),x_1(t),x_2(t+\frac T2), x_1(t),x_2(t),$}& \multirow{8}{*}{\includegraphics[width=4.6cm]{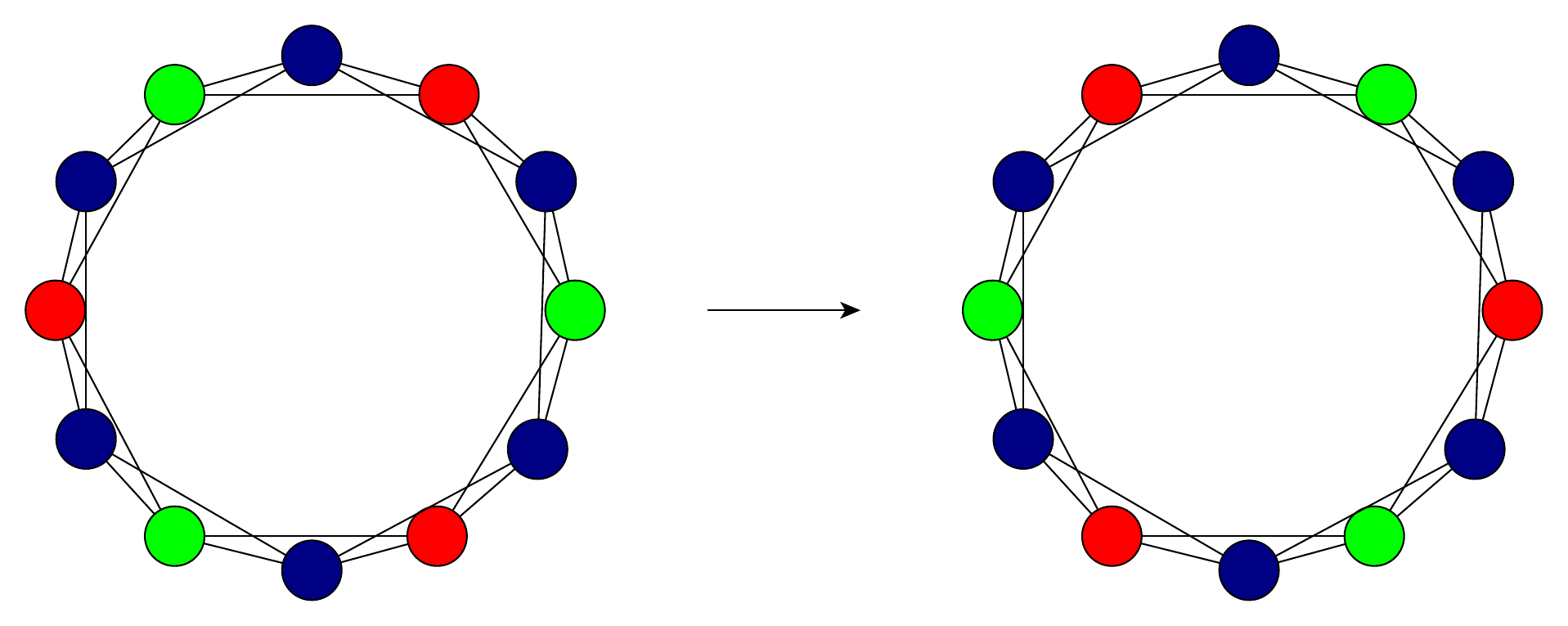}}\\
&&&\\
&&\multirow{2}{*}{$x_1(t),x_2(t+\frac T2),x_1(t),x_2(t),x_1(t),x_2(t+\frac T2)) $}&{\small $+\frac{T}{2}$}\\
&&&\\
\cline{2-3}
& \multirow{4}{*}{$\eta \tilde D_6^d \eta^{-1}$} & \multirow{2}{*}{$(x_2(t+\frac T2),x_1(t),x_2(t),x_1(t),x_2(t+\frac T2), $}& \\
&&&\\

&&\multirow{2}{*}{$x_1(t),x_2(t),x_1(t),x_2(t+\frac T2),x_1(t),x_2(t),x_1(t))  $}&\\
&&&\\
\hline 
\end{tabular}
\caption{Summary of distinct forms of oscillating states bifurcating from the equilibrium $x=0$ of the system (\ref{eq:1}), where cells are coupled to their nearest and next nearest neighbors (Part II).} \label{t:d12_hb_2}
\end{table}

\begin{table}
\hspace*{-1.2cm}
\hskip-.5cm
\begin{tabular}{|c|c|l|c|}
\hline 
\multirow{2}{*}{\bf Critical }& \multirow{4}{*}{\bf Symmetry }& \multirow{2}{*}{{\bf Form of Oscillating-States} } & \multirow{4}{*}{{\bf Figure} }  \\
&&&\\
\multirow{2}{*}{\bf Eigenvalue} &&\multirow{2}{*}{(for some period $T$)}&\\
& & & \\
\hline \hline
\multirow{24}{*}{$\xi_4$}
& \multirow{6}{*}{$\bz_{12}^{t_4}$} & \multirow{3}{*}{$(x_1(t),x_1(t+\frac T3), x_1(t+\frac {2T}{3}),$}&  \multirow{6}{*}{\includegraphics[width=4.7cm]{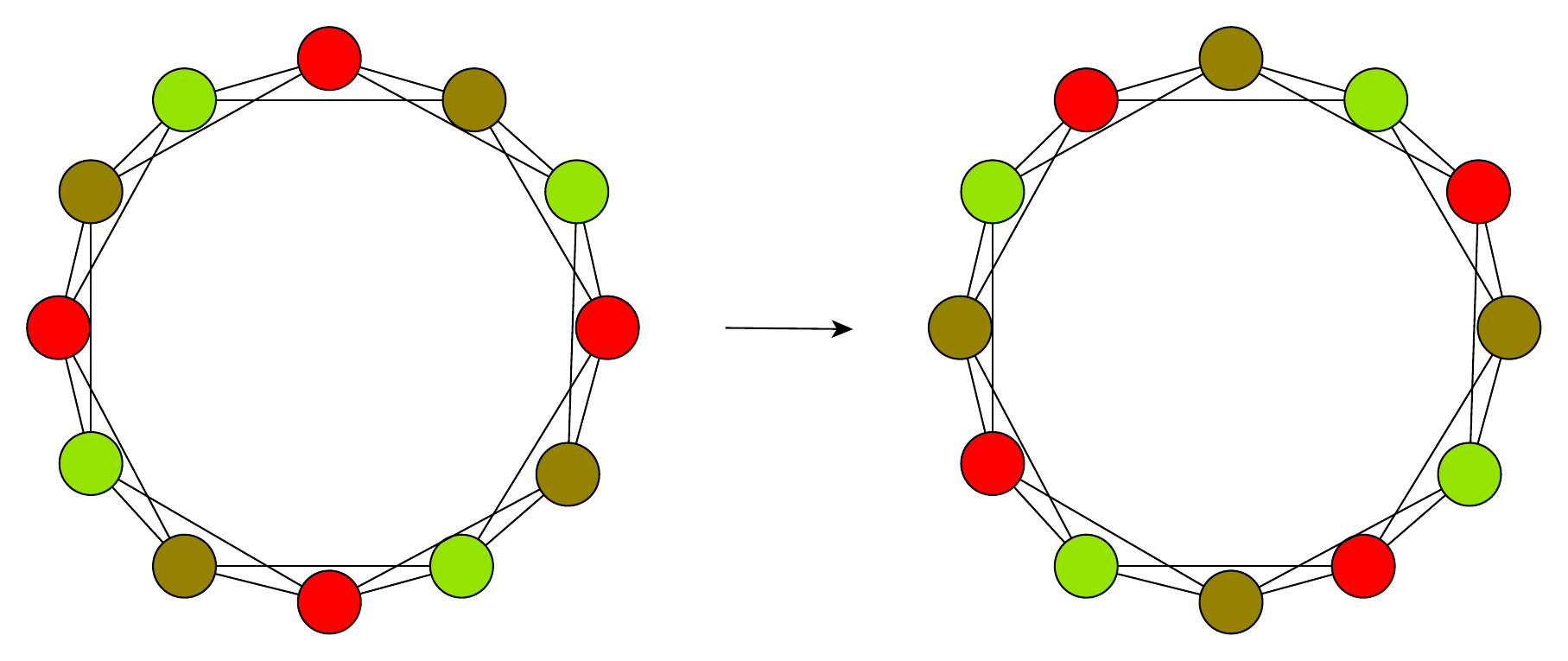}}  \\
&&&{\small $+\frac{T}{3}$}\\ 
&&&\\
&&\multirow{3}{*}{$ x_1(t), x_1(t+\frac T3),\dots, x_1(t+\frac {2T}{3}))$}&\\
&&&\\
&&&\\
\cline{2-4}
& \multirow{6}{*}{$\varsigma \bz_{12}^{t_4} \varsigma^{-1}$} & \multirow{3}{*}{$(x_1(t),x_1(t+\frac {2T}{3}), x_1(t+\frac {T}{3}),$}&   \multirow{7}{*}{\includegraphics[width=4.7cm]{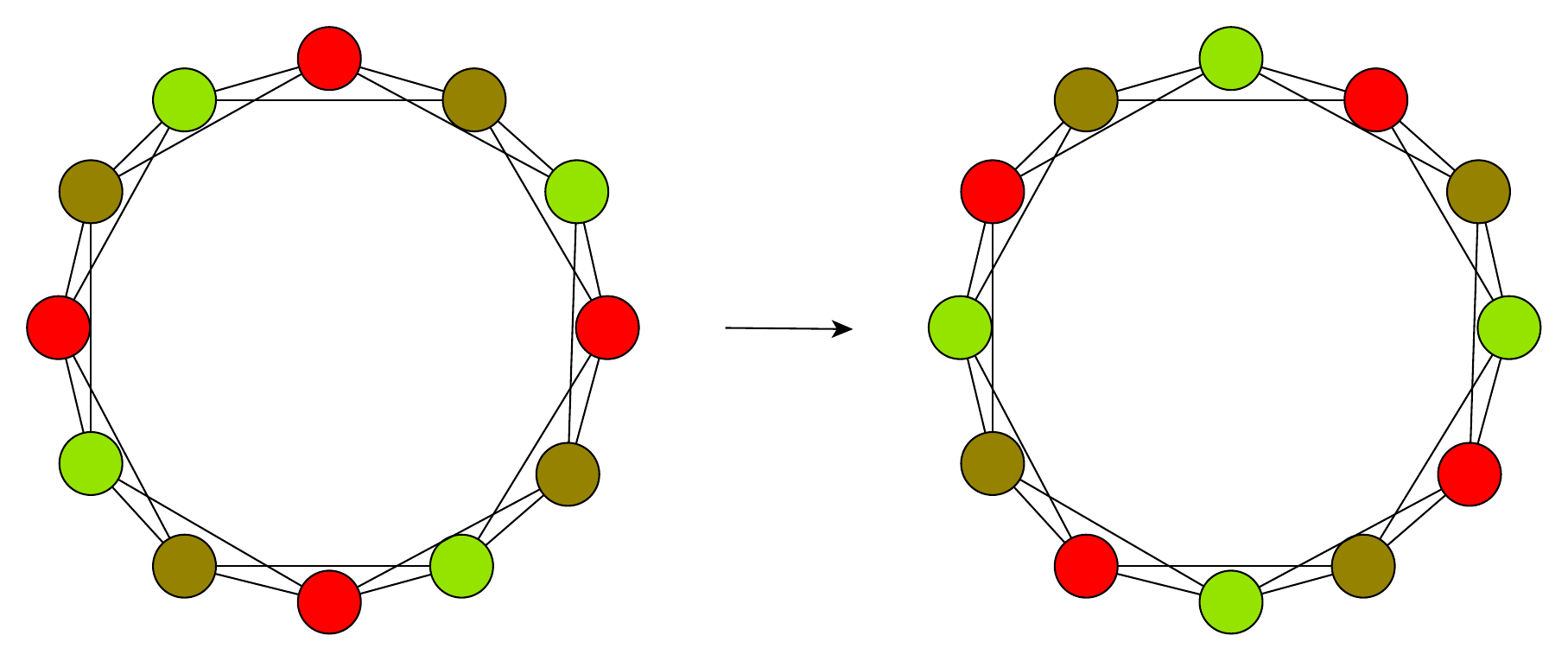}}  \\
&&&{\small $+\frac{T}{3}$}\\
&&&\\
&&\multirow{3}{*}{$x_1(t), x_1(t+\frac {2T}{3}),\dots, x_1(t+\frac {T}{3}))$}&\\
&&&\\
&&&\\
\cline{2-4}
& \multirow{4}{*}{$D_{4}^z$} & \multirow{2}{*}{$(x_1(t),x_2(t),x_1(t+\frac T2),x_1(t),x_2(t),x_1(t+\frac T2),$}&  \multirow{12}{*}{\includegraphics[width=4.8cm]{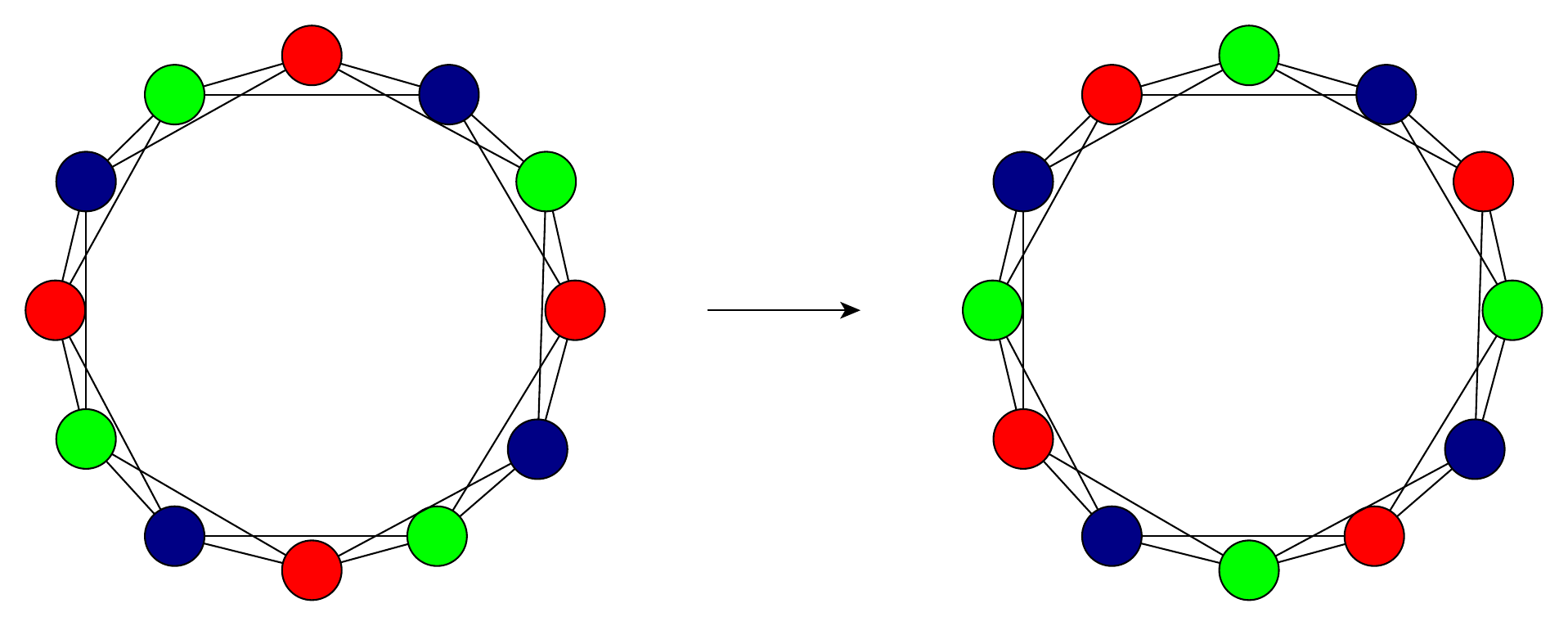}} \\
&&&\\
&&\multirow{2}{*}{$x_1(t),x_2(t),x_1(t+\frac T2),x_1(t),x_2(t),x_1(t+\frac T2)) $}&\\
&&&\\
\cline{2-3}
& \multirow{4}{*}{$\eta D_4^z \eta^{-1}$} & \multirow{2}{*}{$(x_1(t+\frac T2), x_1(t),x_2(t),x_1(t+\frac T2), x_1(t),x_2(t),$}&{\small $+\frac{T}{2}$} \\
&&&\\
&&\multirow{2}{*}{$x_1(t+\frac T2), x_1(t),x_2(t),x_1(t+\frac T2), x_1(t),x_2(t))$}&\\
&&&\\
\cline{2-3} 
& \multirow{4}{*}{$\eta^2 D_{4}^z \eta^{-2}$} & \multirow{2}{*}{$(x_2(t),x_1(t+\frac T2), x_1(t),x_2(t),x_1(t+\frac T2), x_1(t),$}& \\
&&&\\
&&\multirow{2}{*}{$x_2(t),x_1(t+\frac T2), x_1(t),x_2(t),x_1(t+\frac T2), x_1(t)) $}& \\
&& &\\
\cline{2-4}
& \multirow{4}{*}{$D_{4}$} & \multirow{2}{*}{$(x_1(t),x_2(t),x_1(t),x_1(t),x_2(t),x_1(t),$}&  \multirow{12}{*}{\includegraphics[width=2.3cm]{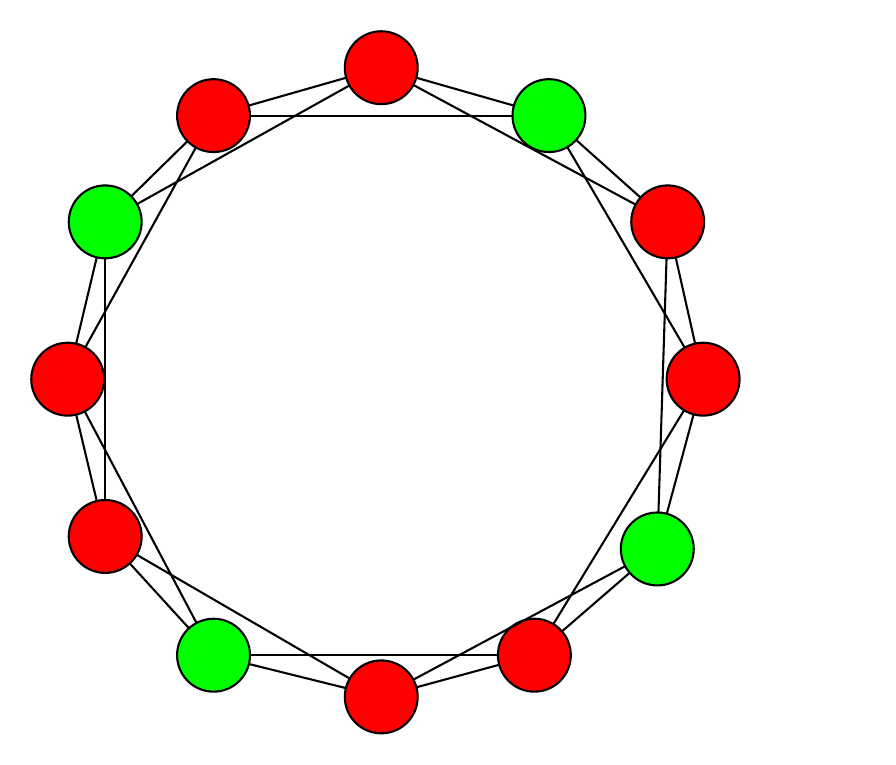}} \\
&&&\\
&&\multirow{2}{*}{$x_1(t),x_2(t),x_1(t),x_1(t),x_2(t),x_1(t)) $}&\\
&&&\\
\cline{2-3}

& \multirow{4}{*}{$\eta D_{4}\eta^{-1}$} & \multirow{2}{*}{$(x_1(t),x_1(t),x_2(t),x_1(t),x_1(t),x_2(t),$}& \\
&&&\\
&&\multirow{2}{*}{$x_1(t),x_1(t),x_2(t),x_1(t),x_1(t),x_2(t)) $}&\\
&&&\\
\cline{2-3}

& \multirow{4}{*}{$\eta^2 D_{4}\eta^{-2}$} & \multirow{2}{*}{$(x_2(t),x_1(t),x_1(t),x_2(t),x_1(t),x_1(t),$}& \\
&&&\\
&&\multirow{2}{*}{$x_2(t),x_1(t),x_1(t),x_2(t),x_1(t),x_1(t))$}&\\
&&&\\
\hline
\multirow{12}{*}{$\xi_5$}
& \multirow{6}{*}{$\bz_{12}^{t_5}$} & \multirow{2}{*}{$(x_1(t),x_1(t+\frac {5T}{12}),x_1(t+\frac {10T}{12}), x_1(t+\frac {3T}{12}),$}&  \multirow{7}{*}{\includegraphics[width=4.8cm]{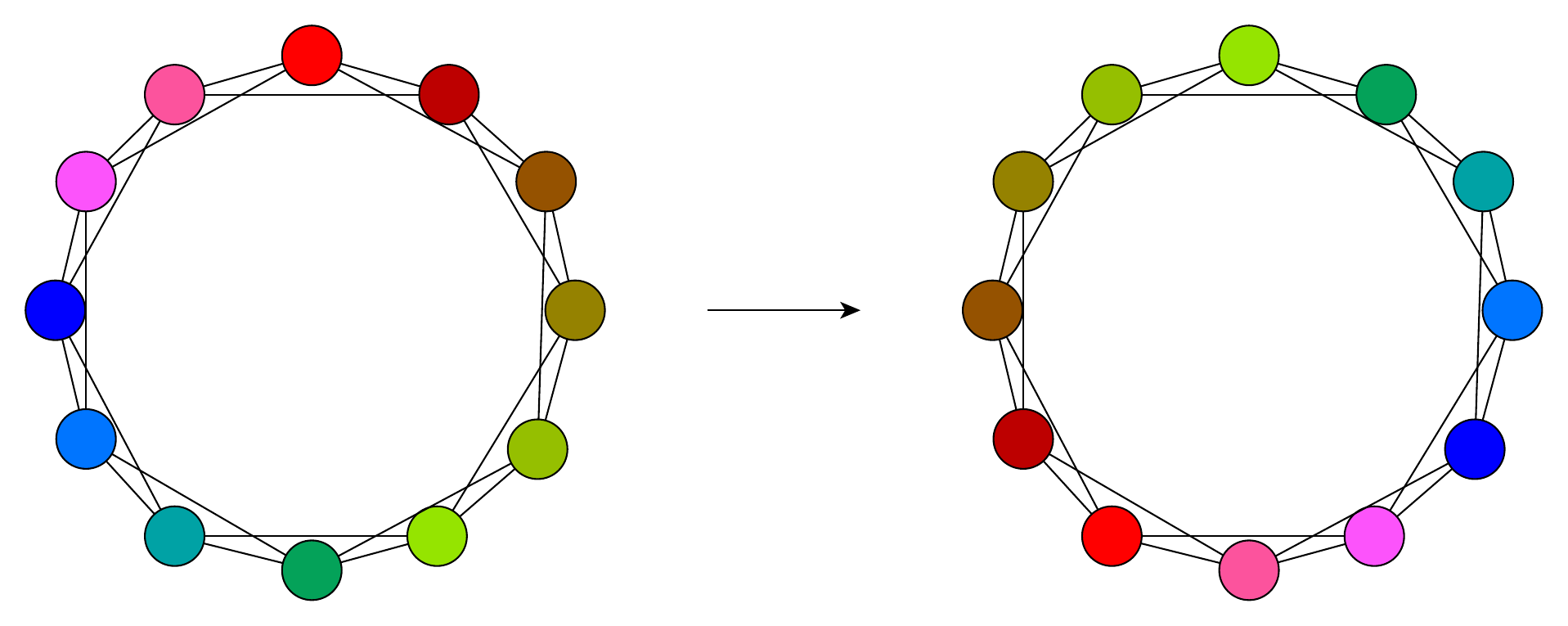}}  \\
&&&{\small $+\frac{T}{12}$}\\
&&\multirow{2}{*}{$x_1(t+\frac {8T}{12}), x_1(t+\frac {T}{12}), x_1(t+\frac {6T}{12}), x_1(t+\frac {11T}{12}),$}&\\
&&&\\
&&\multirow{2}{*}{$ x_1(t+\frac {4T}{12}), x_1(t+\frac {9T}{12}), x_1(t+\frac {2T}{12}), x_1(t+\frac {7T}{12})$}&\\
&&&\\
\cline{2-4}
& \multirow{6}{*}{$\varsigma \bz_{12}^{t_5} \varsigma^{-1}$} & \multirow{2}{*}{$(x_1(t),x_1(t+\frac {7T}{12}),x_1(t+\frac {2T}{12}), x_1(t+\frac {9T}{12}),$}&   \multirow{7}{*}{\includegraphics[width=4.8cm]{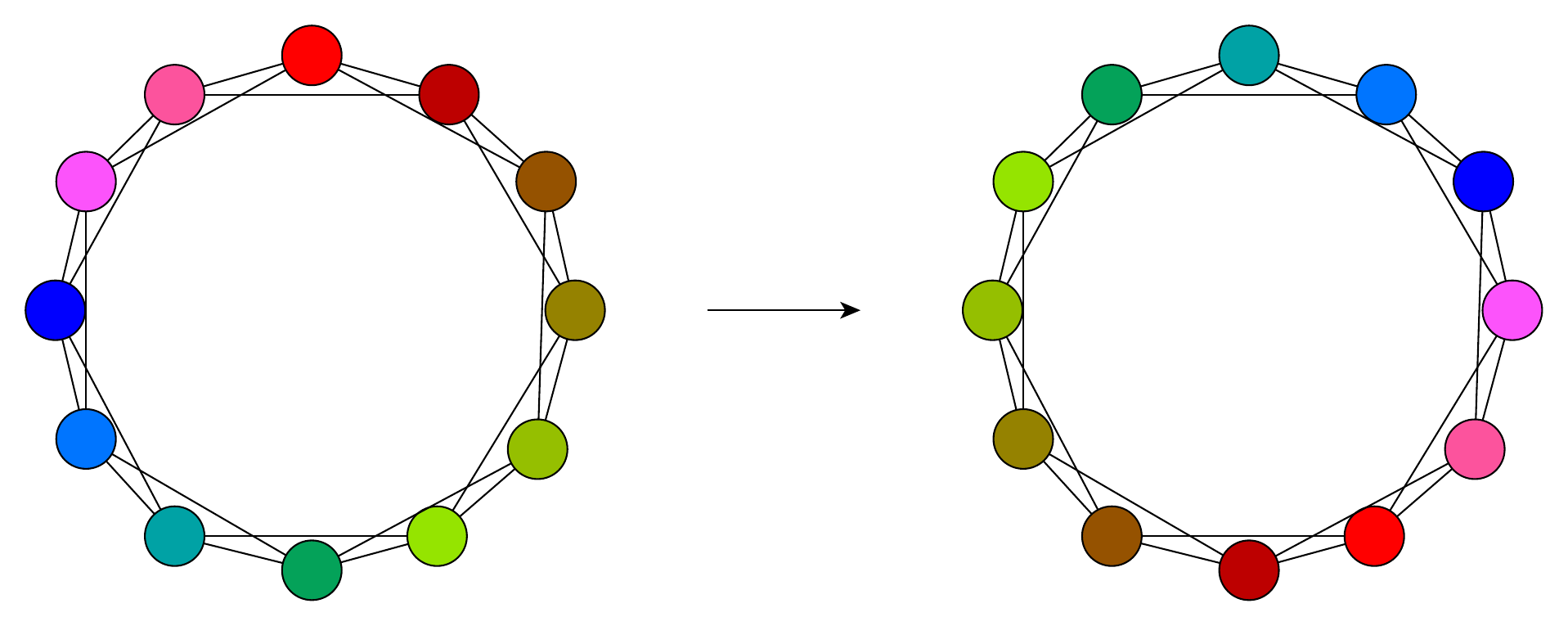}}  \\
&&&{\small $+\frac{T}{12}$}\\
&&\multirow{2}{*}{$x_1(t+\frac {4T}{12}), x_1(t+\frac {11T}{12}), x_1(t+\frac {6T}{12}), x_1(t+\frac {T}{12}),$}&\\
&&&\\
&&\multirow{2}{*}{$ x_1(t+\frac {8T}{12}), x_1(t+\frac {3T}{12}), x_1(t+\frac {10T}{12}), x_1(t+\frac {5T}{12})$}&\\
&&&\\
\hline

\multirow{6}{*}{$\xi_6$}
& \multirow{6}{*}{$D_{12}^{\hat d}$} & \multirow{6}{*}{$(x_1(t),x_1(t+\frac T2),x_1(t),x_1(t+\frac T2),\dots,x_1(t+\frac T2))$}&  \multirow{6}{*}{ \includegraphics[width=4.8cm]{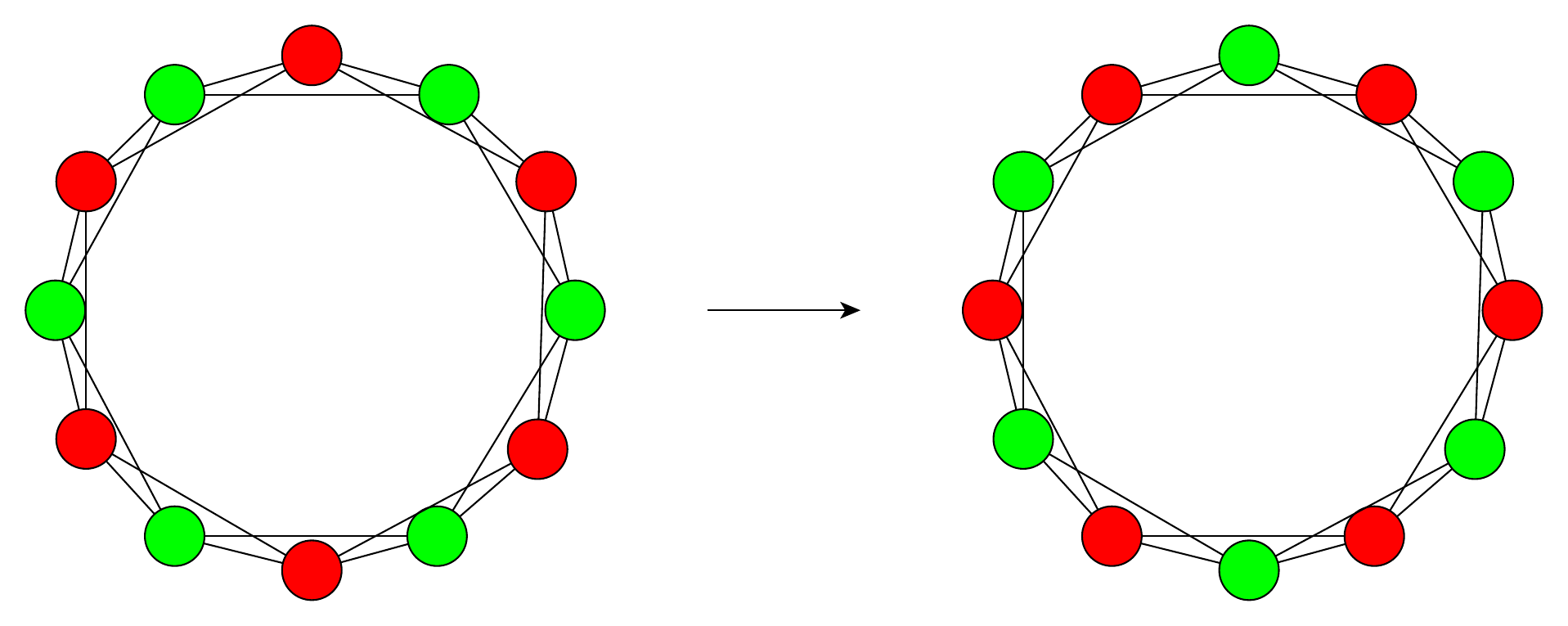}}\\
&&&{\small $+\frac{T}{2}$}\\
&&&\\
&&&\\
&&&\\
&&&\\

\hline 
\end{tabular}
\caption{Summary of distinct forms of oscillating states bifurcating from the equilibrium $x=0$ of the system (\ref{eq:1}), where cells are coupled to their nearest and next nearest neighbors (Part III).} \label{t:d12_hb_3}
\end{table}

\END
\end{ex}
\vs
\begin{rmk}\rm
Depending on the isotropy, different cells may possess different period. As an example, for $\tilde D_2^d$ (and its conjugacy copies) in Table~\ref{t:d12_hb_1}, consider the two diagrams in the bottom row. The cell at $2$ o'clock and the cell at $8$ o'clock do not change color (black) as the time $\frac T2$ elapses. This means they have half the period as other cells. More precisely, for isotropy $\eta^k \tilde D_2^d \eta^{-k}$ with $k=0,1,2,3,4,5$, the cells $3+k$ and $9+k$ have half the period as other cells. 
Similarly, in Table \ref{t:d12_hb_2}, for isotropy $\eta^k D_4^d \eta^{-k}$ with $k=0,1,2$, the cells $2+k$, $5+k$, $8+k$, $11+k$ (black in diagram) have  period $\frac T2$; for isotropy $\eta^k\tilde D_6^d\eta^{-k}$ with $k=0,1$, the  cells $1+k,3+k,5+k,7+k,9+k,11+k$ (black in diagram) have period $\frac T2$. In Table~\ref{t:d12_hb_3}, for isotropy $\eta^k D_4^z \eta^{-k}$ with $k=0,1,2$, the cells $2+k,5+k,8+k,11+k$ (black in diagram) have period $\frac{T}{2}$.
\END 
\end{rmk} 

\section{Near-Neighbor Coupling and Simulation Examples}
\label{sec:simul} 

In this section we consider the 12-cell ring with all possibilities of first- and second-closest-neighbor couplings. More precisely, each cell $i$ coupled to its two nearest neighbors on both sides with coupling strength $c_{i,i\pm1}=d_1$ and to its second-nearest neighbors on both sides with coupling strength $c_{i,i\pm2}=d_2$, as well as possibly having a self-feedback loop with strength $c_{ii}=c_0$. Each coupling strength is allowed to be positive, negative, or zero.
We illustrate some of the ensuing dynamics in the context of a concrete model mentioned in the Introduction, namely the nonlinear neural network model \eqref{neural}. We will also make reference to the diffusively-coupled system \eqref{eq-b} by choosing $c_0$ appropriately.

Since the coupling matrix $C$ is circulant,  we calculate its eigenvalues using \eqref{circulant-eigenvalues} as 
\begin{equation} \label{coupling-eigenvalues}
	\xi_j = c_0 + 2 d_1 \cos(\tfrac{\pi}{6} j) + 2d_2 \cos(\tfrac{\pi}{3} j), 
	\q j=0,1,2,\dots, n-1.
\end{equation}
We then determine the largest and smallest eigenvalues of $C$ in terms of the coupling strengths $d_1$ and $d_2$. Figures~\ref{F:max_eig_G12}
and \ref{F:min_eig_G12} graphically show which of the $\xi_i$ are the largest and smallest eigenvalues of $C$ for the complete range of coupling strengths $d_1$ and $d_2$. 
The figures are symmetric images of each other with respect to the origin since replacing $(d_1, d_2)$ by $(-d_1, -d_2)$ is equivalent to  multiplying $C$ by $-1$, which reverses the roles of smallest and largest eigenvalues. Note from \eqref{coupling-eigenvalues} that the self-coupling coefficient $c_0$ simply shifts the eigenvalues without altering their magnitude order. Hence, by changing $c_0$ one can make either the smallest or the largest eigenvalue the dominant one that determines the first bifurcation, in the context of the stability diagram of Figure~\ref{fig:bif}.
As depicted in Figures \ref{F:max_eig_G12}
and \ref{F:min_eig_G12}, every eigenvalue $\xi_i$ of $C$ can arise as the dominant one by appropriate choices of the coupling strengths $d_1$ and $d_2$. (Note that for $7\le j\le11$, $\xi_j$ is identical to $\xi_{12-j}$, by \eqref{coupling-eigenvalues}.)
Therefore, the present setting of coupling with two closest neighbor pairs permits a systematic investigation of the whole range of dynamics listed in Tables~\ref{t:d12_sb}--\ref{t:d12_hb_3}.

\begin{figure}[tbh] 
 \centerline{ 
 \includegraphics[width=.85\textwidth]{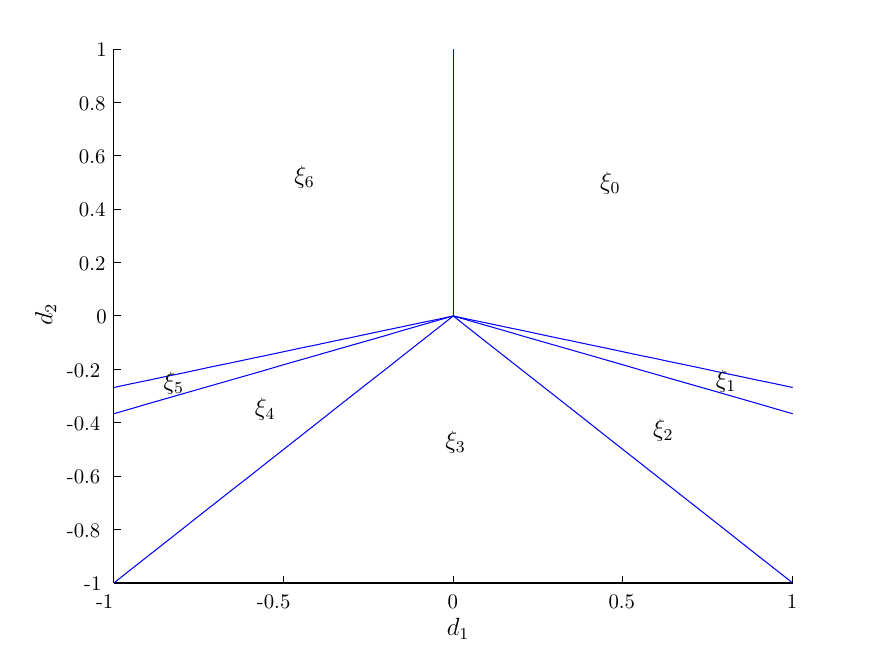}
}    
\caption{The largest eigenvalue of the coupling matrix $C$ for a circular arrangement of 12 cells where each cell is connected to four others, with coupling strength $d_1$ to its two immediate neighbors on each side and with coupling strength $d_2$ to the second-nearest neighbors.}
 \label{F:max_eig_G12}
 \end{figure}
 
\begin{figure}[tbh]  
\hskip1cm
 \includegraphics[width=0.9\textwidth]{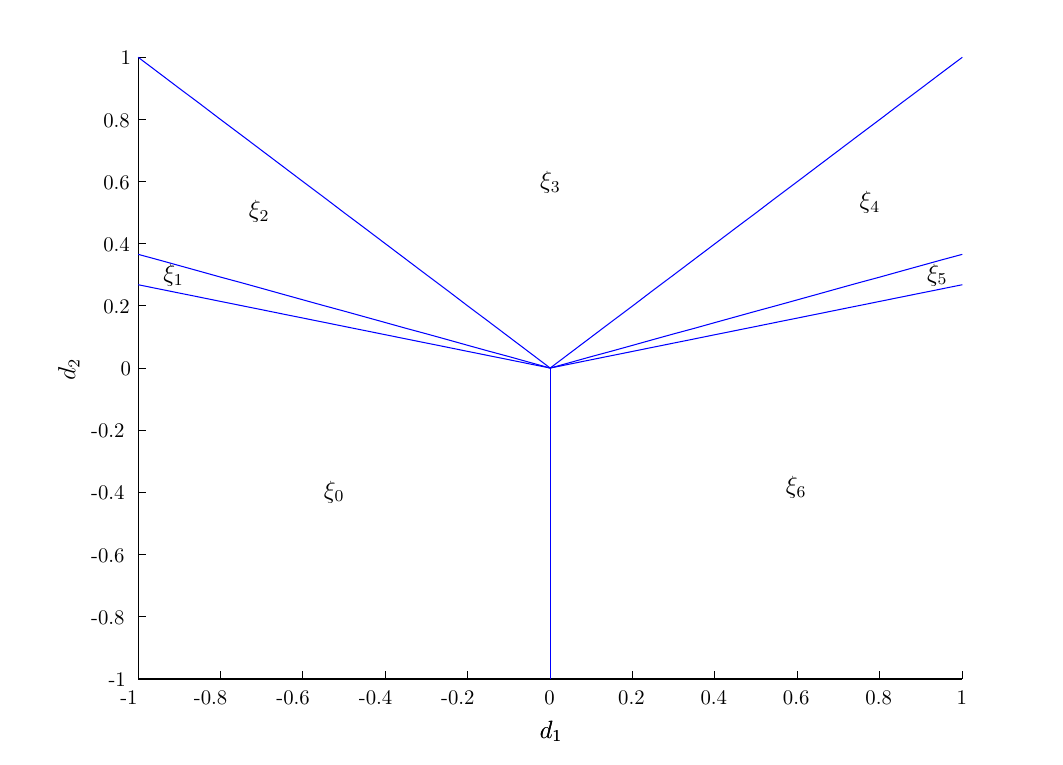}	
 \caption{The smallest eigenvalue of the coupling matrix $C$ for a circular ring of size 12, in terms of the coupling strengths. (See the caption of Figure~\ref{F:max_eig_G12} for explanation.)}
 \label{F:min_eig_G12}
\end{figure}
  
We consider the model \eqref{neural} with  $g(x)=\arctan x$ and $a_{ii}=0$ $\forall i$. Thus $f(x)=-x$, so that $f'(0)=-1$ and $g'(0)=1$.
In the coupling matrix $C$ we have $c_0=0$, and we fix the remaining coupling strengths as $d_1=0.25$, $d_2=0.5$. The eigenvalues $\xi_0$ to $\xi_6$ are
$\{1.5, 0.933013, -0.25, -1, -0.75, 0.066987, 0.5\}$. We take $\tau=1$ initially.

We first consider excitatory coupling by setting $\kappa=1$. 
The dominant eigenvalue is $\xi_0 = 1.5$; so the system settles to a nonzero steady state solution starting from random initial conditions, as shown in Figure~\ref{F:excitatory}. This is also the behavior of the undelayed system, which persists under the presence of delays. 
If we include a self-coupling term $c_0 = -1.5$, as in the diffusively coupled system \eqref{eq-b}, all eigenvalues of $C$ are shifted by $-1.5$. Now a negative eigenvalue, namely $\xi_3 = -2.5$, becomes the dominant one responsible for bifurcation. Consequently, the network splits into an oscillatory pattern (Figure~\ref{F:excitatory}). It is worth noting that, although diffusive coupling is expected to drive the system to a spatially uniform solution, in this example it breaks a uniform equilibrium and replaces it with a nontrivial spatial pattern of two clusters of synchronized oscillators.

\begin{figure}[tb]  
\centering	
 \begin{tabular}{cc}
  \includegraphics[scale=0.35]{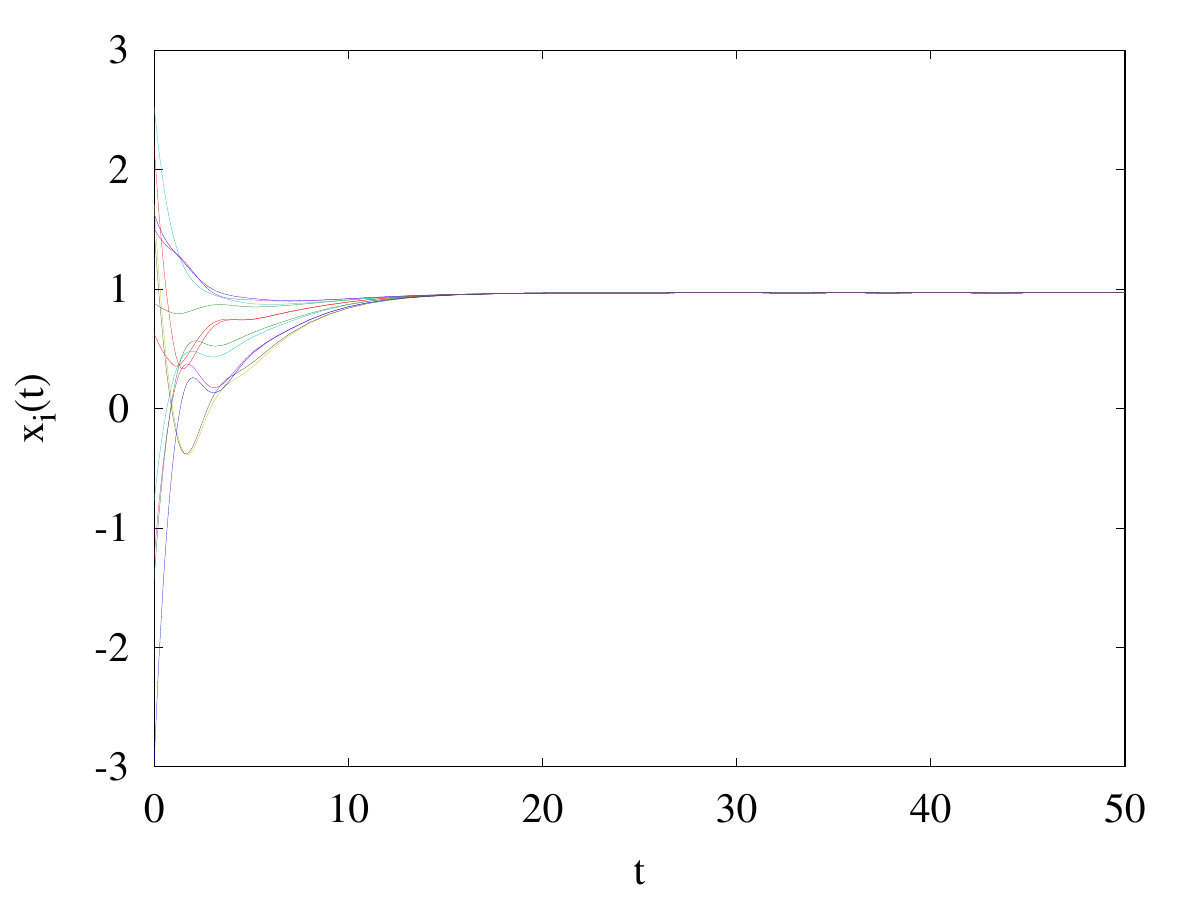} 
 & 
  \includegraphics[scale=0.35]{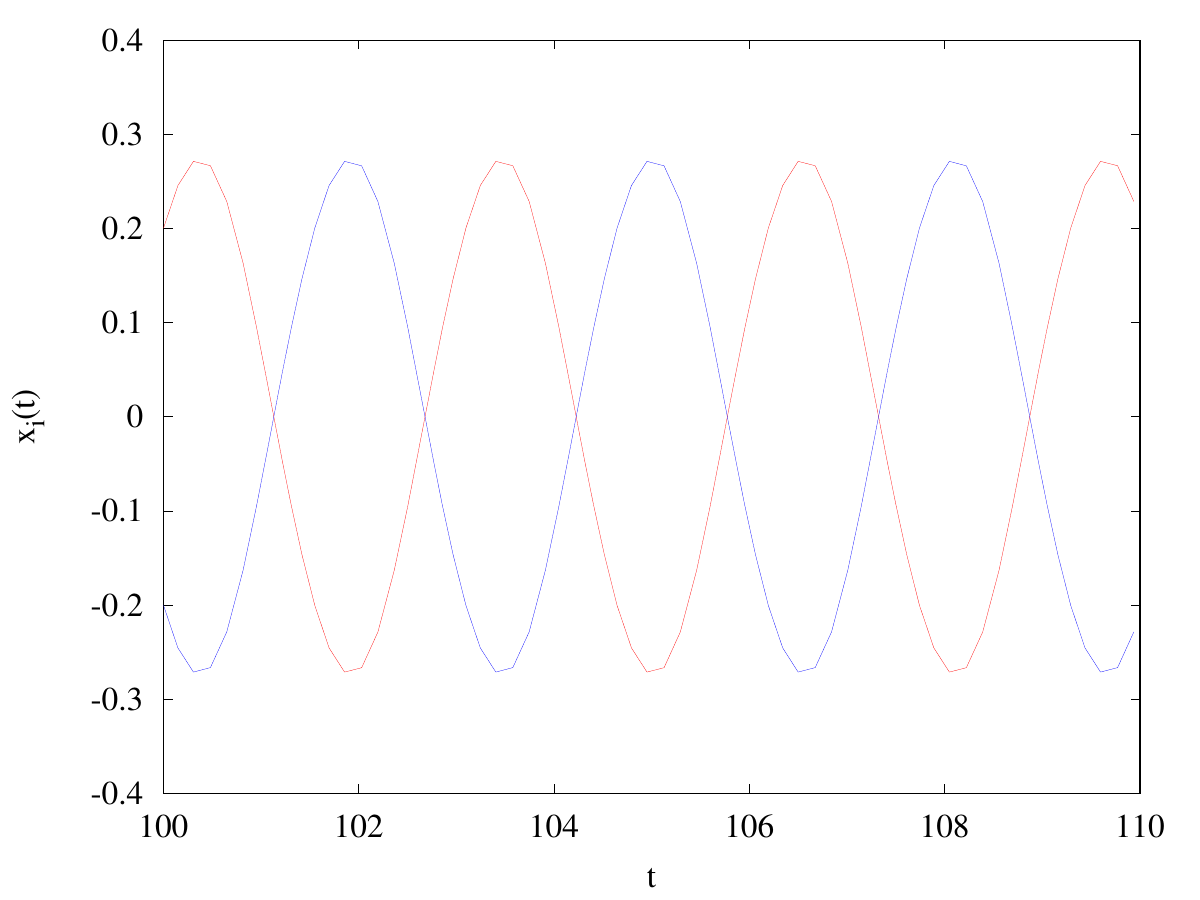}	\\
\end{tabular}
 \caption{System \eqref{neural} with excitatory coupling ($\kappa=1$) approaching a uniform steady-state solution from random initial conditions (left). The negative of the final state is also a stable equilibrium which can be observed for a different choice of initial conditions. Adding self-coupling yields the diffusively coupled system \eqref{eq-b}, which exhibits a stable oscillatory pattern of two clusters (right): Cells $\{2, 3, 6, 7, 10, 11\}$ form a synchronized cluster (blue curve) that oscillate in anti-phase with cells $\{1, 4, 5, 8, 9, 12 \}$ (red curve).}
 \label{F:excitatory}
\end{figure}

We now change the coupling from excitatory to inhibitory by setting $\kappa=-1.2$. We keep $d_1=0.25$, $d_2=0.5$, and $c_0 =0$ as before. The extreme eigenvalues of $\kappa C$ are $-1.2\xi_0 = -1.8$ and $-1.2\xi_3 = 1.2$. For the present value of $\tau=1$, the positive eigenvalue leaves the stability region first, so the systems settles into a two-cluster steady-state solution in accordance with $\xi_3$. When we take $\tau=3$, however, the negative eigenvalue becomes responsible for the bifurcation and the system exhibits spatially uniform periodic oscillations, in accordance with $\xi_0$. Here, in the absence of diffusive coupling, the delay apparently plays an important role in driving the system to a spatially uniform state, albeit an oscillatory one.

Although a rigorous stability analysis of the emerging spatio-temporal patterns is beyond the scope of the present study, repeated numerical simulations starting from random initial conditions manifest their stability. However, in many cases several stable patterns co-exist, so stability should only be inferred in a local sense.

\begin{figure}[tb]
\centering	
 \begin{tabular}{cc}
  \includegraphics[scale=0.35]{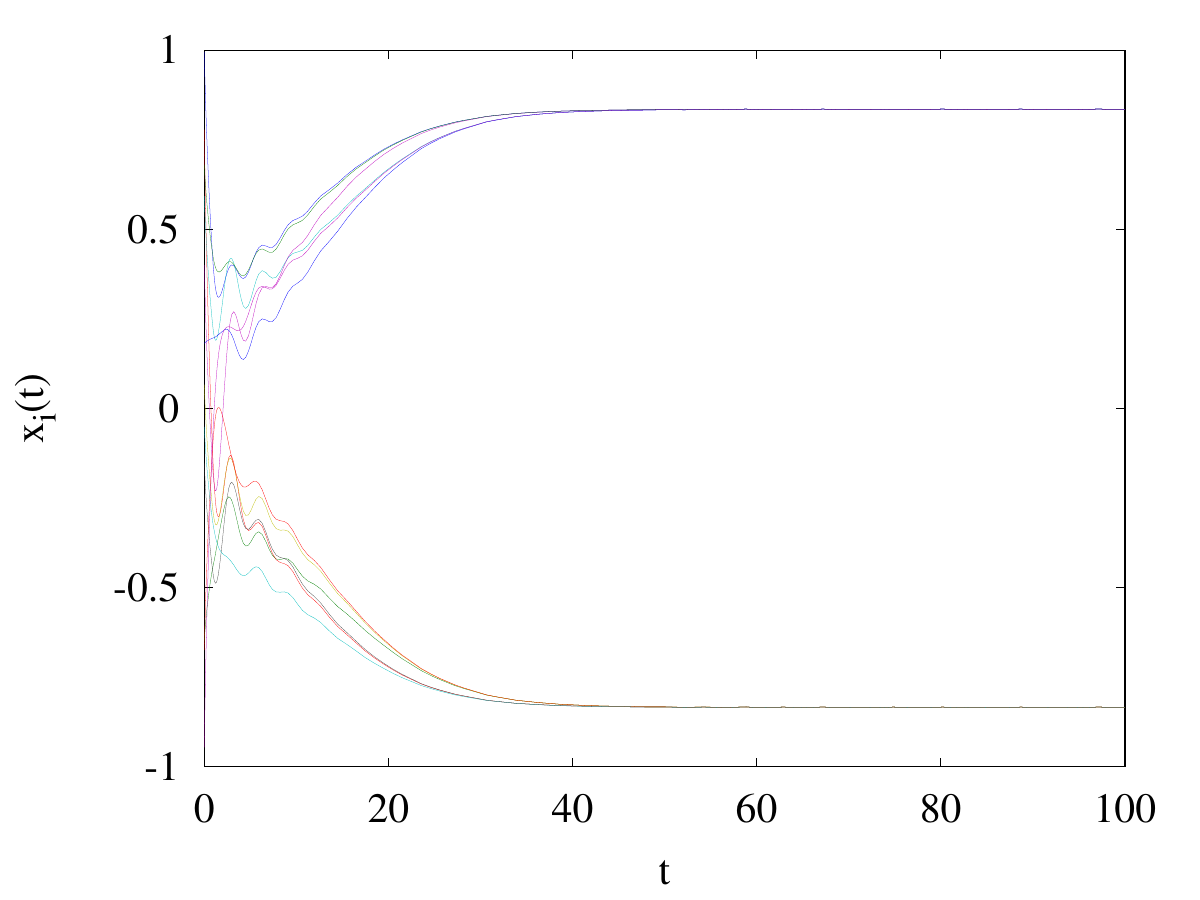} 
 &
  \includegraphics[scale=0.35]{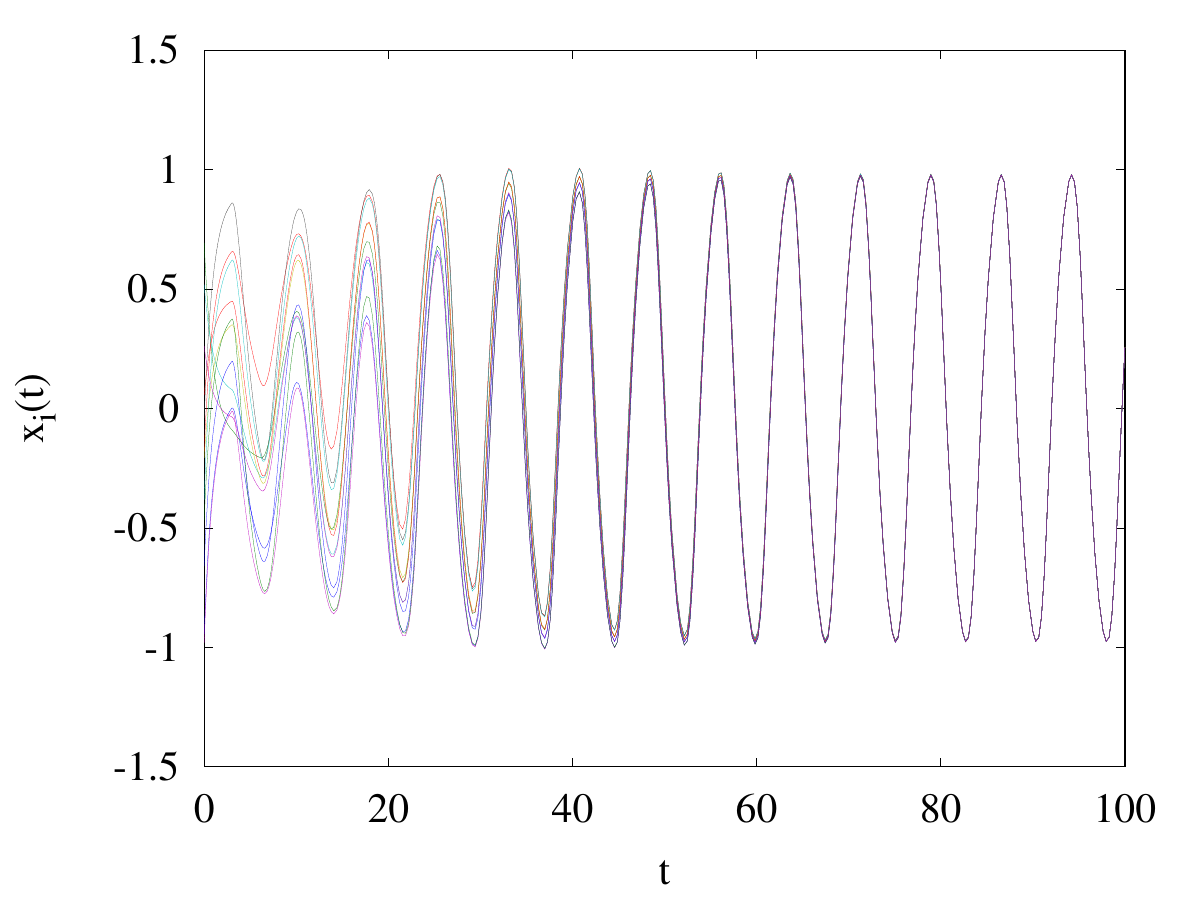}	\\
\end{tabular}
 \caption{System \eqref{neural} with inhibitory coupling ($\kappa=-1.2$). When $\tau=1$, the network approaches a two-cluster steady-state solution from random initial conditions (left). The clusters are the same as in the oscillatory pattern of Figure~\ref{F:excitatory}. Increasing the delay to $\tau=3$ yields spatially uniform synchronized oscillations shown on the right.}
 \label{F:inhibitory}
\end{figure}

\appendix
\section{Proof of Theorem  \ref{thm:steady}} \label{sec:proof}

\noi{\bf Theorem \ref{thm:steady}}
{\it  
Let $(\a_o,\beta_o)$ be such that $\a_o=-\beta_o$, and let $\xi_o\in \sig(C)$ be  given by (\ref{eq:xio}). If $\ome_0$ given by (\ref{eq:ome0}) is of form 
\[\ome_0=c_1(K_1)+c_2(K_2)+\cdots +c_p(K_p)
\]
for some $c_i\ne 0$, then there exists a bifurcating branch of steady states of symmetry at least $(K_i)$.
}

\begin{proof} The parameter pair $(\a, \beta)$ escapes the shaded region in Figure \ref{fig:bif} by crossing over $\LL$ through $(\a_o,\beta_o)$ (cf. Figure \ref{F:crossing}). 
\begin{figure}[!htb]
\centering
\includegraphics[width=.4\textwidth]{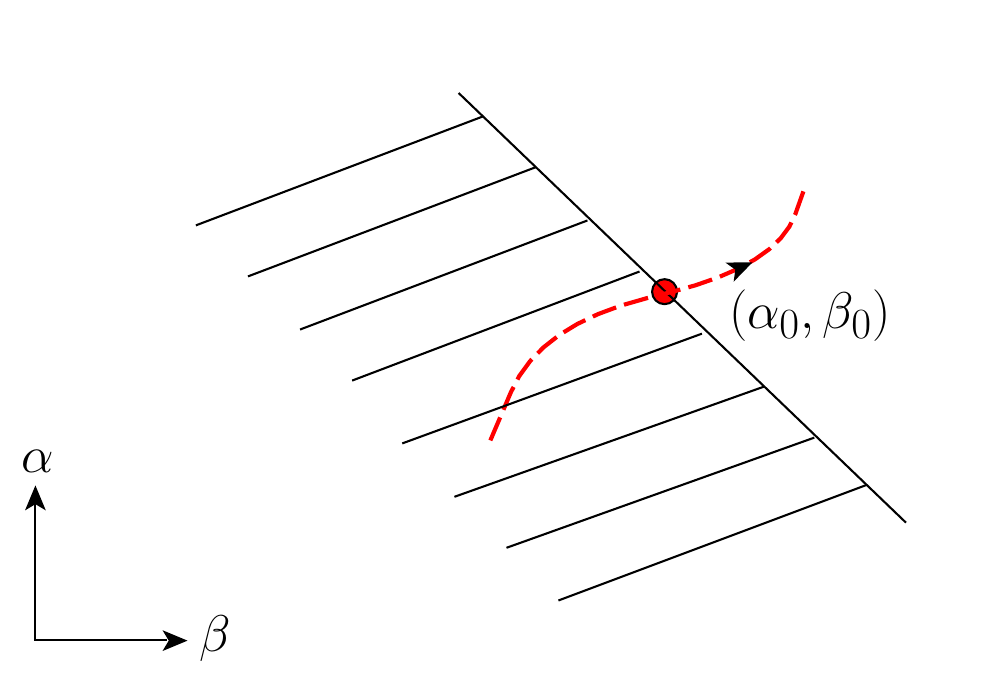}
\caption{The crossing of $(\a,\beta)$ through $(\a_o,\beta_o)\in  \LL$.}
 \label{F:crossing}  
 \end{figure}

\noindent Let $c:[\lam_-,\lam_+]\subset \br\to \br^2$ be a parametrization of the crossing curve such that $c(\lam_-)=(\a_-,\beta_-)$, $c(\lam_o)=(\a_o,\beta_o)$ and $c(\lam_+)=(\a_+,\beta_+)$. 
Then the initial bifurcation problem becomes a bifurcation problem around $ \lam_o$. More precisely, we have a $\Gamma$-equivariant map  $F:\br\times \br^n\to \br^n$ such that $F(\lam,0)=0$ for all $\lam\in [\lam_-,\lam_+]$. The spectrum of $D_xF(\lam,0)$ belongs to $\bc_-$ (the left half of the complex plane) for all $\lam\in [\lam_-,\lam_o]$, and as $\lam$ crosses $\lam_o$, the spectrum of $D_xF(\lam_o,0)$ intersects with $i \br$ at $0$.

Without loss of generality, let $\lam_\pm=\pm 4$ and $\lam_o=0$. Define a box around the bifurcation point $(0,0)\in \br\times \br^n$  (cf. Figure \ref{F:box}):
\[\Omega_1:=\{(\lam,x)\,:\, |\lam|<4,\q\|x\|<\rho\},\]
where $\rho>0$ is such that $F(\pm 4,\cdot)$ is a homeomorphism on $\{x\in \br^n\,:\, \|x\|<\rho\}$. 
\begin{figure}
 \centerline{ 
 \includegraphics[width=.32\textwidth]{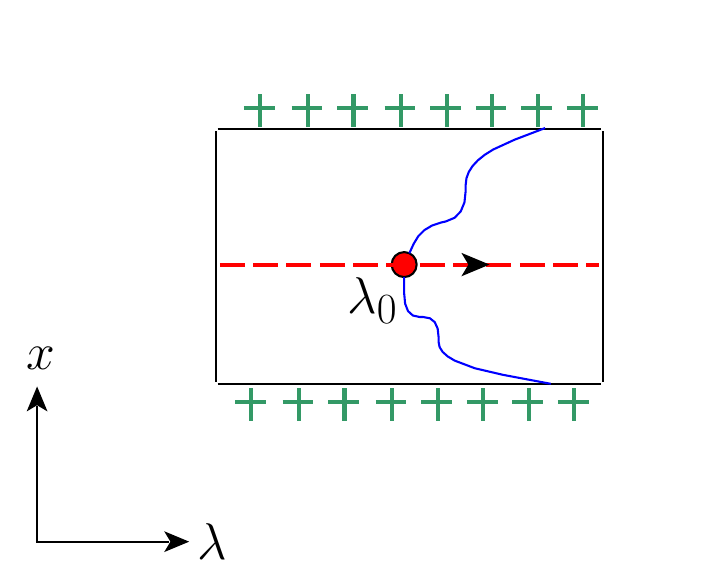}
}     
 \caption{An isolating box $\Omega_1$ around the bifurcating point $\lam=\lam_o$, where the red line is the equilibrium, the blue curves are potential bifurcating solutions and the plus signs ``$+$'' are the signs of auxiliary function $\zeta_1$.}
 \label{F:box}  
 \end{figure}
Without loss of generality, let $\rho=2$. Define $\FF_1:\overline\Omega_1\to \br \times \br^n$ by
\[\FF_1(\lam,x):=\big(|\lam|(\|x\|-2)+\|x\|-1, F(\lam,x)\big):=(\zeta_1(\lam,x),F(\lam,x)).\]
Note that $\zeta_1>0$ for $\|x\|=2$ and $\zeta_1<0$ for $\|x\|=0$. Functions with this property are called {\it auxiliary functions} on $\Omega_1$. Thus, by construction, zeros of $\FF_1$ in  $\Omega_1$ are contained properly in $\Omega_1$, i.e. $\FF_1^{-1}(0)\cap \overline\Omega_1 \subset\Omega_1$,  and if $\FF_1(\lam,x)=0$, then $x\ne 0$. In other words, zeros of $\FF_1$ correspond precisely to non-trivial zeros of $F$ in $\Omega_1$. The bifurcation invariant $\omega_0$ is defined by
\[\omega_0=\gmdeg(\FF_1,\Omega_1).\]
To compute $\omega_0$, we perform several homotopies on $\FF_1$. Define  $\FF_2:\overline\Omega_1\to \br\times \br^n$ by
$$\FF_2(\lam,x):=\big(|\lam|(\|x\|-1)+\|x\|+1, F(\lam,x)\big):=(\zeta_2(\lam,x), F(\lam,x)).$$
Since $\zeta_2>0$ for $\|x\|=2$, we have $\FF_1$ and $\FF_2$ are homotopic on $\Omega_1$ by a linear homotopy. {Thus, by homotopy invariance, we have $\gmdeg(\FF_1,\Omega_1)=\gmdeg(\FF_2,\Omega_1)$.} Also, $\zeta_2>0$ for $|\lam|\le\frac 12$. Thus, {all} zeros of $\FF_2$ in $\Omega_1$ are contained in the following subset of $\Omega_1$:
\[\Omega_2:=\{(\lam,x)\,:\, \frac 12<\lam <4,\, \|x\|<2\}{\subset \Omega_1}. \]
{In other words, $\FF_2$ does not have zeros in $\Omega_1\setminus \Omega_2$. By (the double negation of) the existence property, we have 
$$\gmdeg(\FF_2,\Omega_1\setminus \Omega_2)=0.$$
It then follows that $$\gmdeg(\FF_2,\Omega_1)=\gmdeg(\FF_2,\Omega_2)+\gmdeg(\FF_2,\Omega_1\setminus \Omega_2)=\gmdeg(\FF_2,\Omega_2).$$} 

Moreover, $\FF_2$ is homotopic to $\FF_3:\overline\Omega_2\to \br\times \br^n$ defined by
$$\FF_3(\lam,x):=(\zeta_2(\lam,x), D_xF(\lam,0)).$$
Decompose $\br^n$ into the sum of the critical eigenspace and the eigenspaces of the rest (all negative) eigenvalues of $D_xF(\lam_o,0)$, say $\br^n=R_0\times R_1$. Then, for $x=(x_1,x_2)\in R_0\times R_1$, the linear map $D_xF(\lam,0)(x_1,x_2)$ is homotopic to $(\lam x_1, -x_2)$. Thus, $\FF_3$ is homotopic to $\FF_4:\overline\Omega_2\to \br\times \br^n$ defined by
$$\FF_4(\lam,x):=(\zeta_2(\lam,x), (\lam x_1, -x_2)),\q\text{for } x=(x_1,x_2)\in R_0\times R_1 .$$
Note that $\FF_4(\lam,x)=0$ only if $x=0$. Substituting $x=0$ into $\zeta_2(\lam,x)$, we have
 $\zeta_2(\lam,0)=0$ if and only if $\lam=\pm 1$. That is,
 \[\FF_4^{-1}(0)\cap \Omega_2=\{(-1,0),(1,0)\}.\]
 It follows that 
$$\gmdeg(\FF_2,\Omega_2)=\gmdeg(\FF_4,\Omega_2)=\gmdeg(\FF_4,\NN_{-1})+\gmdeg(\FF_4,\NN_1),$$
where $\NN_{-1}$ (resp. $\NN_1$) is a small neighborhood of $(-1,0)$ (resp. $(1,0)$). 
On $\NN_{-1}$, we have that $\FF_4$ is homotopic to $(1+\lam, -x_1,-x_2)$. By suspension, we obtain
\[\gmdeg(\FF_4,\NN_{-1})=\gmdeg(-\id,B_1( \br^n)),\]
where $B_1(\cdot)$ denotes the unit ball. On the other hand, $\FF_4$ is homotopic to $(1-\lam, x_1,-x_2)$ on $\NN_1$, so by multiplication, we have
 \[\gmdeg(\FF_4,\NN_{1})=-\gmdeg(-\id,B_1( R_1)).\]
 Therefore,
 \[\omega_0=\gmdeg(-\id,B_1( \br^n))-\gmdeg(-\id,B_1( R_1)).\]
 Using $\text{\tt showdegree}$, it is expressed as
 \[\omega_0=\text{\tt showdegree}[\Gamma](n_0,n_1, \dots,n_r,1,0,\dots,0)-\text{\tt showdegree}[\Gamma](u_0,u_1, \dots,u_r,1,0,\dots,0),\]
 where $n_i$'s and $u_i$'s are defined by (\ref{eq:Vi_ni})--(\ref{eq:ui}).

The statement then follows from the existence property of degree.

\end{proof}
\vs
\ack{FMA acknowledges the support of the European Union within its 7th Framework Programme (FP7/2007-2013) under grant agreement no.~318723 (MatheMACS). The work of HR was supported by the Deutsche Forschungsgemeinschaft under grant DFG RU 1748/2-1.}
\vs  


\end{document}